\numberwithin{equation}{section}
\theoremstyle{plain}
\newtheorem*{thm*}{Theorem}
\theoremstyle{plain}
\newtheorem{thm}{Theorem}[section]
\newtheorem{lem}[thm]{Lemma}
\theoremstyle{definition}
\newtheorem{rem}[thm]{Remark}
\newcommand{\susu}{$\text{SU}(2)\!\times\!\text{SU}(3)$}
\newcommand{\sug}{$\text{SU}(2)\!\times\!\text{G}_2$}
\newcommand{\tr}{{\bf tr} }
\numberwithin{equation}{section}
\tikzset{
  big arrow/.style={
    decoration={markings,mark=at position 1 with {\arrow[scale=1.5,#1]{>}}},
    postaction={decorate},
    shorten >=0.4pt},
  big arrow/.default=black}
\begin{document}
\begin{titlepage}
\begin{center}
\vspace{2cm}
{\Huge\bfseries  48 Crepant Paths to \susu \\  }
\vspace{2cm}
{\Large
Mboyo Esole$^{\heartsuit}$, Ravi Jagadeesan$^{\spadesuit, \diamondsuit}$, and Monica Jinwoo Kang$^\clubsuit$\\}
\vspace{.8cm}
{\large $^{\heartsuit}$ Department of Mathematics, Northeastern University}\par
{\large  360 Huttington Avenue, Boston, MA 02115, USA}\par
\vspace{.4cm}
{\large $^\spadesuit$ 
Harvard Business School}\par
{Wyss Hall, Soldiers Field,  Boston, MA 02163, USA}\par
\vspace{.4cm}
{\large $^\diamondsuit$ 
Department of Economics, Harvard University}\par
{1805 Cambridge St, Cambridge, MA 02138, USA}\par
\vspace{.4cm}

{\large $^\clubsuit$ Department of Physics,  Harvard University}\par
{ 17 Oxford Street, Cambridge, MA 02138, U.S.A}\par
\vspace{.4cm}
 \scalebox{.95}{\tt  j.esole@northeastern.edu,  ravi.jagadeesan@gmail.com, jkang@fas.harvard.edu }\par
\vspace{2cm}
{ \bf{Abstract}}\\
\end{center}
We study crepant resolutions of Weierstrass models of \susu-models, whose gauge group describes the non-abelian sector of the Standard Model.  The \susu-models are elliptic fibrations characterized by the collision of two Kodaira fibers with dual graphs that are  affine Dynkin diagrams of type  $\widetilde{\text{A}}_1$ and $\widetilde{\text{A}}_2$. 
Once we eliminate those collisions that do not  have crepant resolutions, we are left with six distinct collisions that are related to each other by deformations. 
Each of these six collisions  has eight distinct crepant resolutions whose flop diagram is a hexagon with two legs attached to two adjacent nodes. 
Hence, we consider 48 distinct resolutions that are connected to each other by deformations and flops. 
We determine topological invariants---such as Euler characteristics, Hodge numbers, and triple intersections of fibral divisors---for each of the crepant resolutions.
We analyze  the physics of these fibrations when used as compactifications of M-theory and F-theory on Calabi--Yau threefolds yielding 5d  ${\mathcal N}=1$ and 6d ${\mathcal N}=(1,0)$ supergravity theories respectively. We study the 5d prepotential in the Coulomb branch of the theory and check that the six-dimensional theory is anomaly-free and compatible with a 6d uplift from a 5d theory.

\vfill 

\end{titlepage}

\begin{spacing}{0.6}
\tableofcontents
\end{spacing}
\newpage
\section{Introduction}\label{Sec:Intro}

One of the most powerful consequences of string theory dynamics is that gauge theories can be geometrically engineered by singularities \cite{Witten:1995ex,Gimon:1996rq,Anderson:2017zfm}. In F-theory, the singularities are given by the degenerations of the fiber of an elliptic  fibration \cite{Morrison:1996pp,Bershadsky:1996nh}. The F-theory picture naturally associates to an elliptic fibration  a reductive Lie group $G$, a Lie algebra $\mathfrak{g}=\text{Lie}(G)$, and a representation $\mathbf{R}$ of $G$. Such an elliptic fibration is called a {\em $G$-model}. The dual graphs  of the singular fibers over the generic points of the discriminant locus of the elliptic fibration determine the gauge algebra $\mathfrak{g}$. 

The Lie group $G$ is semi-simple when the discriminant of the elliptic fibration contains at least two irreducible components $\Delta_1$ and $\Delta_2$ such that the dual graph of the singular fiber over the generic point of $\Delta_i$  ($i$ = 1, 2) is reducible. These are called {\em collisions of singularities} and were first studied by Bershadsky and Johanson \cite{Bershadsky:1996nu}. The gauge group $G$ depends on the gauge algebra and  the Mordell--Weil group of the elliptic fibration \cite{Mayrhofer:2014opa}.   When the Mordell--Weil group is trivial, the gauge group $G$ is the unique compact, connected,  and  simply-connected Lie group with Lie algebra $\mathfrak{g}$.
  The compact simply connected semi-simple Lie groups with rank $2$ or $3$ are 
$$
\text{SU}(2)\times\text{SU}(2), \quad\text{SU}(2)\times \text{G}_2, \quad \text{SU}(2)\times \text{Sp}(4), \quad \text{and}\quad \text{SU}(2)\times\text{SU}(3).
$$
The $\text{SU}(2)\times\text{SU}(2)$-, $\text{SU}(2)\times \text{G}_2$-, and \susu-models  could be realized by non-Higgsable clusters \cite{Morrison:2012np} as in \cite{SO4,SU2G2,Grassi:2014zxa}. There are subtleties in realizing an SU($3$) as a non-Higgsable group as discussed in Remark \ref{Rem:NHC}.
The $\text{SU}(2)\times\text{SU}(2)$-model, the $\text{SU}(2)\times \text{G}_2$-model, and the  $\text{Sp}(4)$-model are studied respectively in \cite{SO4}, \cite{SU2G2}, and \cite{EKY2}.
The individual SU($2$) and SU($3$)-models are studied in \cite{ESY1,ES} and  the \susu-model  (realized by the collision III+IV$^s$) has been studied 
from  the point of view of string junctions in \cite{Grassi:2014zxa}. 
While the  group \susu\    is famously the non-Abelian gauge sector of the Standard Model of particle physics \cite[\S 2.4]{Baez:2009dj}, 
 the \susu-model has never been constructed explicitly as a nonsingular variety.

The purpose of this paper is to study the \susu-model  with  associated Lie algebra 
$$\text{E}_3=\text{A}_1\oplus\text{A}_2.$$
  We define elliptic fibrations with collisions of singularities corresponding to a Lie algebra of type E$_3$, we study their geometry and topology, and explore the physics of  compactifications of M-theory and F-theory on such varieties when the elliptic fibration is a Calabi--Yau threefold. By definition, an \susu-model is an elliptic fibration with a  trivial Mordell--Weil group and a discriminant locus containing two irreducible smooth components $S$ and $T$ such that the generic fiber over $S$ and $T$ have respectively dual graphs of affine Dynkin type  $\widetilde{\text{A}}_1$ and $\widetilde{\text{A}}_2$, while the Kodaira type of the fiber over the generic point of any  other irreducible component of the discriminant locus is of type I$_1$ or II. 
The \susu-models examined in this paper are defined by singular Weierstrass models, given  in Section \ref{sec:ManyFaces},  for which we construct explicit crepant resolutions in Section \ref{sec:crepres}.
       Weierstrass models provide convenient canonical birational models for elliptic fibrations since  any elliptic fibration over a smooth base  is birational to a possibly singular Weierstrass model \cite{Deligne.Formulaire,MumfordSuominen,Esole:2017csj}.

We show that there are six distinct types of {\em collisions of singularities} that define Weierstrass models for \susu-models with crepant resolutions, namely:
$$ 
\text{I}_2^{\text{s}}+\text{I}_3^{\text{s}}, \quad \text{I}_2^{\text{ns}}+\text{I}_3^{\text{s}}, \quad \text{III}+\text{I}_3^{\text{s}}, 
\quad 
\text{I}_2^{\text{s}}+\text{IV}^{\text{s}}, \quad \text{I}_2^{\text{ns}}+\text{IV}^{\text{s}}, \quad \text{III}+\text{IV}^{\text{s}}.
$$
 We show that each of the corresponding Weierstrass models has eight distinct crepant resolutions.
 
 These six \susu-models are deformation of each other, where the deformations commute with the resolutions in a way such that the same eight sequences of three blowups are used for each of the six realizations of the \susu-model.  In total, we have a network of 48 distinct elliptic fibrations connected by deformations and flops. All the crepant resolutions of the  \susu-models are listed in Section \eqref{sec:crepres}. 

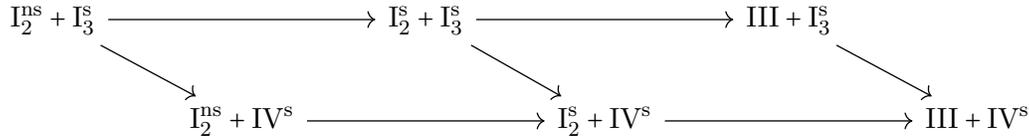
\begin{figure}[H]
\begin{center}
 \begin{tikzcd}[column sep=normal,scale=1.1]
 \text{I}_2^{\text{ns}}+\text{I}_3^{\text{s}}  \arrow[rightarrow]{rr}   \arrow[rightarrow]{dr}  &&  \text{I}_2^{\text{s}}+\text{I}_3^{\text{s}} \arrow[rightarrow]{rd}  \arrow[rightarrow]{rr}  &&  \text{III}+\text{I}_3^{\text{s}}  \arrow[rightarrow]{rd}\\
 & \text{I}_2^{\text{ns}}+\text{IV}^{\text{s}} \arrow[rightarrow]{rr}  && \text{I}_2^{\text{s}}+\text{IV}^{\text{s}} \arrow[rightarrow]{rr}  && \text{III}+\text{IV}^{\text{s}} 
 \end{tikzcd}
\end{center}
\caption{The six collisions of singularities that  give Weierstrass models whose crepant resolutions are smooth \susu-models. In this graph, each arrow indicates a specialization of a Weierstrass model to another. 
The only collision that correspond to a non-Higgsable model is $ \text{III}+\text{IV}^{\text{s}}$. 
\label{Pic:AllModels} }
\end{figure}

\subsection{Matter representation and connection to the Standard Model}\label{sec:sm}

The gauge group of the \susu-model is the non-Abelian sector of the Standard Model of particle physics  \cite[\S 2.4]{Baez:2009dj}.  In this paper, we consider this gauge theory in the context of five and six-dimensional supergravity theories with minimal supersymmetry resulting from a compactification of M-theory or F-theory on a Calabi--Yau threefold $Y$ that corresponds to an elliptic fibration giving an \susu-model. The representation $\mathbf{R}$ of the resulting 5d and 6d theory is reminiscent of the representations of fermions of the Standard Model once we ignore the Abelian sector. 
 
The weights of vertical curves over codimension-one loci of the discriminant locus determine a representation $\mathbf{R}$ of $\mathfrak{g}$ called the {\em matter representation}. See \cite{G2} and reference therein for details on Weierstrass models, weights of curves, dual graphs, and $G$-models.
We show that \susu-models, defined over a base of dimension-two or higher, have vertical rational curves carrying the weights of  the following quaternionic representation $\mathbf{R}$ reminiscent of the representation  $\mathbf{F}$ of fermions of the Standard Model transforming non-trivially under \susu:
$$
\begin{aligned}
\bold{R}  &=(\bold{2},\bold{1})\oplus(\bold{1},\bold{3})\oplus(\bold{1},\bold{\bar{3}})\oplus(\bold{2},\bold{3})\oplus(\bold{2},\bold{\bar{3}})\oplus (\bold{3},\bold{1})\oplus (\bold{1},\bold{8}),\\
\bold{F}& =(\bold{2},\bold{1})\oplus(\bold{1},\bold{3})\oplus(\bold{1},\bold{\bar{3}})\oplus(\bold{2},\bold{3})\oplus(\bold{2},\bold{\bar{3}}),
\end{aligned}
$$
where $(\bold{r}_1,\bold{r_2})$ is the product of the representation $\bf{r}_1$  of the Lie algebra of type $\text{A}_1$ and the representation $\bf{r}_2$  of $\text{A}_2$, and $\overline{\bf{r}}$ is the complex conjugate representation of $\bf{r}$.\footnote{In particular, $(\bold{3},\bold{1})$ is the adjoint representation of $\text{A}_1$, $(\bold{1},\bold{8})$ is the adjoint representation of $\text{A}_2$,  $(\bold{2},\bold{1})$ is the fundamental representation of $\text{A}_1$,  $(\bold{1},\bold{3})$ is the fundamental representation of $\text{A}_2$,  and $(\bold{2},\bold{3})$ is the bifundamental representation of $\text{A}_1\oplus\text{A}_2$.}

In the Standard Model, left-handed leptons transform in the representation $(\bold{2},\bold{1})$ of \susu, left-handed quarks transform in the representation $(\bold{2},\bold{3})$, and right-handed up and down quarks transform in the representation $(\bold{1},\bold{3})$. 
Right-handed leptons are neutral under \susu\ in the Standard Model, and we also have $n_H^0=h^{2,1}(Y)+1$ neutral hypermultiplets in the \susu-model (see equation \eqref{eq:NH0}).
 	 
There are also differences: 
the \susu-model has hypermultiplets transforming under the adjoints representations $(\bold{3},\bold{1})$ and $(\bold{1},\bold{8})$ while the Standard Model does not have fermions transforming in  adjoint representations.

\subsection{Geography of crepant resolutions of $G$-models and hyperplane arrangements}

A crepant resolution is a resolution of singularities that preserves the canonical class. In the case of surfaces, a crepant resolution is necessarily unique and always exists for du Val singularities. Starting from dimension-three, crepant resolutions (when they exist) are not necessarily  unique. For normal threefolds with canonical singularities, the number of crepant resolutions is finite \cite{KMFinite} and two crepant resolutions are connected by flops \cite{Kawamata.flops}. A substitute for crepant resolutions are terminal varieties and they always exist for varieties with canonical singularities \cite{BCHM}.  In light of the result of Kawamata \cite{Kawamata.flops}, it follows from the celebrated results of \cite{BCHM} that minimal models are connected by flops. 
$\mathbb{Q}$-factorial terminal singularities are obstructions to the existence of crepant resolutions as $\mathbb{Q}$-factoriality implies that the exceptional set of any birational map is a divisor, and being terminal implies that all the discrepancies are positive \cite{Kovacs}. 
Canonical singularities,  $\mathbb{Q}$-factorial singularities, terminal singularities, crepant resolutions, and flops  are defined for example in \cite{KM,Matsuki.book}.

 For $G$-models, the geography of flops can be described by the hyperplane arrangement I$(\mathfrak{g},\mathbf{R})$  defined inside the dual fundamental Weyl chamber of $\mathfrak{g}$ and whose hyperplanes are the kernel of the weights of the representation $\mathbf{R}$ \cite{Hayashi:2014kca,EJJN1,EJJN2}.  In stringy geometry, this is a conjecture motivated by the structure of the Coulomb branches of a five-dimensional ${\mathcal N}=1$ supergravity theory with a gauge group $G$ and matter transforming in the representation $\mathbf{R}$ \cite{IMS}. As each Coulomb phase corresponds to a unique crepant resolution of the Weierstrass model, flops corresponds to phase transitions \cite{Witten}, and the different Coulomb phases correspond to the different connected domains in which the prepotential is differentiable; the study of the structure of the 5d Coulomb branches boils down to identifying the chambers of a hyperplane arrangement (see Section \ref{sec:5d}). Mathematically, this fact can be understood by subdivision of a relative movable cone of any of the crepant resolution over the Weierstrass model of a $G$-model into nef cones of each particular crepant resolution in the spirit of \cite{Matsuki.Weyl,Kawamata,Kawamata.CY} and \cite[Theorem 12-2-7]{Matsuki.book}. 

\begin{figure}[H]
\begin{center}
{\begin{tikzpicture}[scale=0.35]
\coordinate (A1) at (90:10cm) {};
\coordinate (A2) at (210:10cm) {};
\coordinate (A3) at (330:10cm) {};
\coordinate (A4) at (0,-5cm) {};
\coordinate (A5) at (-5.768cm,0) {};
\coordinate (A6) at (5.772cm,0) {};
\coordinate (A7) at (-2.89cm, 5cm) {};
\coordinate (A8) at (2.89cm, 5cm) {};
\draw[thick] (A1)--(A2)--(A3)--(A1);
\draw[thick] (A1)--(A4);
\draw[thick] (A5)--(A4);
\draw[thick] (A6)--(A4);
\draw[thick] (A5)--(A8);
\draw[thick] (A6)--(A7);
\node at (-1.2cm, 6cm) {$1'$};
\node at (1.4cm, 6cm) {$1$};
\node at (-3.1cm, 2.6cm) {$2'$};
\node at (3.3cm, 2.6cm) {$2$};
\node at (-1.9cm, -0.5cm) {$3'$};
\node at (2.1cm, -0.5cm) {$3$};
\node at (-5cm, -2.8cm) {$4'$};
\node at (5.2cm, -2.8cm) {$4$};
\end{tikzpicture}}
 \quad\quad
\scalebox{.65}{
\begin{tikzpicture}[every node/.style={circle,draw,thick, minimum size= 10 mm},  scale=.5]
\node (A1) at (60*2-60:6cm) {$1$};
 \node (A2)at (60-60:6cm) {$2$};
 \node (A3) at ($(0-60:6cm)$) {$3$};
 \node (A4) at (-60-60:6cm) {$3'$};
 \node (A5) at (-120-60:6cm) {$2'$};
 \node (A6) at (180-60:6cm) {$1'$};
 \draw[thick] (A1)--(A2)--(A3)--(A4)--(A5)--(A6)--(A1);
 \node (A7) at (-60:12 cm)  {$4$};
 \node (A8) at (-60-60:12cm) {$4'$};
 \draw[thick]  (A3)--(A7);
 \draw[thick] (A4)--(A8);
\end{tikzpicture}}

\caption{The chamber structure of I$(\text{E}_3=\text{A}_1\oplus\text{A}_2, (\mathbf{1},\mathbf{3})\oplus (\mathbf{2},\mathbf{3}))$ and its  adjacency graph. This also represents the structure of the extended K\"ahler cone of an \susu-model.  See Figure \ref{2DChambersMatch} for more details on the walls separating the chambers. }
\label{fig:chambers}
\end{center}
\end{figure}
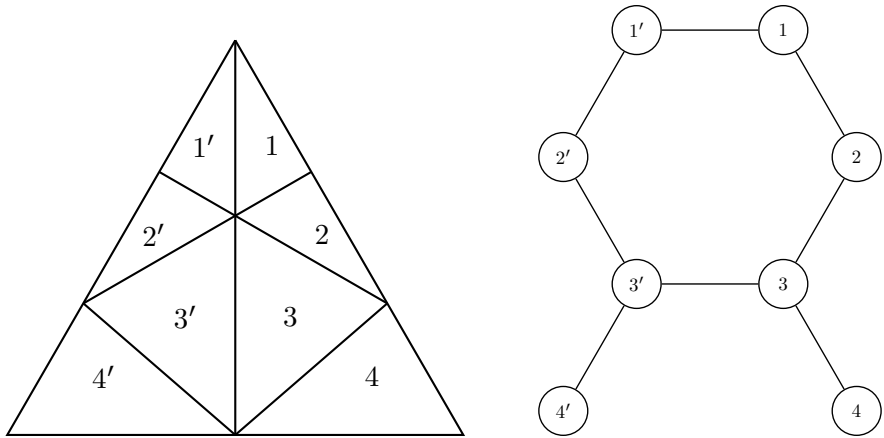

The network of flops  of the \susu-model matches the  hyperplane arrangement I($\text{E}_3=\text{A}_1\oplus\text{A}_2,(\mathbf{1},\mathbf{3})\oplus (\mathbf{2},\mathbf{3})$), where $(\mathbf{1},\mathbf{3})\oplus (\mathbf{2},\mathbf{3})$ is the direct sum of the bifundamental representation   of \susu\  and the fundamental representation of $\text{SU}(3)$. The chamber structure of this hyperplane arrangement is  represented in Figure \ref{fig:chambers}.

\subsection{Topological invariants} 

The Euler characteristic of the elliptically-fibered Calabi--Yau threefold $Y$ is instrumental in the discussion of the gravitational anomalies of the six-dimensional supergravity theory when F-theory is compactified on such $Y$ \cite{GM1}. 
The Hodge numbers of $Y$ determines the number of vector multiplets and neutral hypermultiplets in an M-theory compactification on $Y$ \cite{Cadavid:1995bk}.
In the case of Calabi--Yau fourfolds, the Euler characteristic is relevant to the cancellation of the D3-brane tadpole \cite{Sethi:1996es,CDE,AE1,AE2}. 
The triple intersection numbers of divisors of a Calabi--Yau threefold  $Y$ give the Chern--Simons terms of an M-theory compactification on $Y$.

While  the Euler characteristic and the Betti numbers, are independent of the choice of a crepant resolution as shown by Batyrev \cite{Batyrev.Betti}, the triple intersection numbers are not preserved by flops and depend on a  choice of a crepant resolution \cite{IMS}. 
 We compute  the Euler characteristic of an \susu-model over a base of arbitrary dimension in Theorem \ref{Thm:Eulerchar} in the spirit of \cite{Euler,CharMW,Char}. 
 Denoting by  $B$ the base of the fibration, by $S$ and $T$ the divisors supporting respectively the fiber with dual graphs  $\widetilde{\text{A}}_1$ and $\widetilde{\text{A}}_2$, and by $L$ the first Chern class of the line bundle $\mathscr{L}$ of the Weierstrass model (see Theorem \ref{Thm:PushH}), we have
\begin{align}\nonumber
\chi (Y)=6 \ \frac{S^2-2 L-3 S L+2 (S^2-3 S L+S-2 L) T+(3 S+2) T^2}{(1+S) (1+T) (-1-6 L+2 S+3 T)} \ c(TB) .
\end{align}

We also determine the Hodge numbers of an \susu-model  and the triple intersection numbers   in the case of the Calabi--Yau threefolds. 
Since the triple intersection numbers depend on a choice of a resolution, we compute them for each of the crepant resolutions in Theorem \ref{sec:triple}. 

$$
h^{1,1}(Y)=14-K^2, \quad h^{2,1}(Y)=29 K^2+15 K S+24 K T+3 S^2+6 S T+6 T^2+14,
$$
where $K$ is the canonical class of the base $B$.

\subsection{Compactification of M-theory to $5d$ ${\cal N}=1$ supergravity}

We analyze the physics of the compactifications of M-theory and F-theory on elliptically-fibered Calabi--Yau threefolds corresponding to  \susu-models. 
These give five and six-dimensional  gauge theories coupled to supergravity theories with eight supercharges respectively. For the five-dimensional supergravity theory, we compute the one-loop prepotential in the Coulomb branch (see Theorem \ref{Thm:Prepotentials}), and determine the Chern--Simons couplings, the number of vector multiplets, tensor multiplets, and hypermultiplets. The Chern--Simons couplings are computed geometrically as triple intersection numbers of fibral divisors  in each Coulomb chamber (see Theorem \ref{Thm:TripleInt}). 

The Chern--Simons levels are given by triple intersection numbers of divisors of the Calabi--Yau threefolds and determine the prepotential of a five-dimensional supergravity theory from an M-theory compactification on such Calabi--Yau threefolds \cite{Cadavid:1995bk}. Using such relation, we match the triple intersection polynomial with the one-loop prepotential in each Coulomb chamber to obtain constraints on the number of charged hypermultiplets (see equation \eqref{eq:numbers}). In many cases, such a method will completely fix the number of multiplets \cite{ES,G2,F4,SO4,  EKY2}; however for the \susu-model, we get linear constraints that does not fully determine the number of multiplets.\footnote{When we consider a model with a semisimple Lie algebra and a trivial Mordell--Weil group, we always get linear constraints that does not fully determine the number of multiplets by matching the prepotential and the triple intersection polynomial \cite{SO4,  EKY2,SU2G2,SO4}.} However, they can completely be fixed by using Witten's genus formula, which is a five-dimensional result. In the five-dimensional theory, we also determine the structure of the Coulomb chambers. Each chamber corresponds to a specific crepant resolution that we determine explicitly.

If we denote the number of hypermultiplets transforming in the irreducible representation $\bf{R}$ as $n_{\bf{R}}$ and denote the curve supporting the gauge group SU($2$) and SU($3$) by $S$ and $T$ respectively, and write $K$ for the canonical class of the base of the elliptic threefold,\footnote{The genus $g(S)$ of a supporting curve of $S$ in a surface of canonical class $K$ satisfies $2-2g(S)=-S\cdot K-S^2$. The same holds for the case of $T$ as well.} we have:
$$
\begin{aligned}
n_{\bf{3,1}} &= \frac{1}{2}(K\cdot S+S^2+2),  \quad   && \quad \quad n_{\bf{1,8}} = \frac{1}{2}(K\cdot T+T^2+2), \\
n_{\bf{2,1}} & =-S\cdot (8 K +2S+3T), \quad   &&n_{\bf{1,3}}+n_{\bf{1,\bar{3}}} = -T\cdot (9K+3 T+2S), \quad  && n_{\bf{2,3}} + n_{\bf{2,\bar{3}}}= S\cdot T.
\end{aligned}
$$ 
Interpreting the number of charged hypermultiplets with the genus of the supporting curves of $S$ and $T$, denoted as $g(S)$ and $g(T)$ respectively, we get
$$
\begin{aligned}
n_{\bf{3,1}} &= g(S), \quad   && \quad \quad n_{\bf{1,8}} =g(T), \\
n_{\bf{2,1}} & =16(1-g(S)) + 6 S^2 -3S\cdot T, \quad   &&n_{\bf{1,3}}+n_{\bf{1,\bar{3}}} =  18 (1-g(T)) + 6 T^2 -2 S\cdot T,\quad n_{\bf{2,3}} &+ n_{\bf{2,\bar{3}}}= S\cdot T. 
\end{aligned}
$$

\subsection{Compactification of F-theory to $6d$ $\mathcal{N}=(1,0)$ supergravity and  cancellation of anomalies} 

We also study in detail the anomaly cancellations of the six-dimensional theory. Considering the set of equations from requiring the local anomalies in the six-dimensional theory to cancel, we determine the unique solution of the number of charged hypermultiplets in each irreducible representation. Using Sadov's techniques \cite{Sadov:1996zm}, we check that anomalies are canceled explicitly by the Green-Schwarz mechanism. The number of hypermultiplets in each irreducible representations for an anomaly-free six-dimensional theory matches with those we get from the five-dimensional theories (see Section \ref{sec:anomaly} for the discussion). This ensures that the five-dimensional theory can be uplifted to an anomaly-free six-dimensional theory.

Furthermore, in this paper we show that the matter content of the 6d gauge theory is anomaly-free. This theory can be still anomaly-free after an additional compactification on a Riemann surface to four-dimensions if we impose the condition $n_{\mathbf{3}}=n_{\mathbf{\bar{3}}}$, which will naturally follow from the CPT condition. For local gauge anomalies in four-dimensional spacetime, SU($2$) is always anomaly-free as  all its representation are (pseudo)-real whereas the triangle diagram of SU($3$) is anomaly-free only when the number of matter transforming in the $\mathbf{3}$ and $\mathbf{\overline{3}}$ are equal to each other, which will naturally be satisfied if the  CPT invariance is respected. 

For global anomalies in four spacetime dimensions, one has to worry about SU($2$) as it is subject to Witten anomaly \cite{Witten:1982fp}. This is a direct consequence of the fact that the fourth homotopy group for SU($N$) is zero for $N$ greater than two whereas for SU($2$) it is a $\mathbb{Z}_2$. Thus, Witten anomalies from SU($2$) only vanishes when there is an even number of SU($2$) doublet. 
Similarly, the global anomaly contributions from SU($2$) and SU($3$) in six-dimensions can be discussed by looking into the sixth homotopy group for SU($2$) and SU($3$), which are given by $\mathbb{Z}_{12}$ and $\mathbb{Z}_6$ respectively. Bershadsky and Vafa has shown that this yields the linear constraints on the number of hypermultiplets \cite{Bershadsky:1997sb}.
Considering these conditions only give a constraint on the genus of the curve $S$ supporting the SU($2$) group: 
$
g(S)=0 \ \mod 3,
$
as derived in Section \ref{sec:6dAnomaly}.

\subsection{Organization of this paper}

The rest of the paper is structured as follows. 
In Section \ref{sec:ManyFaces}, we introduce the possible collisions of fibers that produce  \susu-models and define their Weierstrass models.
In Section \ref{SU2SU3Collision}, we describe the eight crepant resolutions of each of the  six Weierstrass models introduced in Section \ref{sec:ManyFaces}. We also  compute the Euler characteristic of the crepant resolutions and the triple intersection of the fibral divisors. 
In the case of Calabi--Yau threefolds, we also compute the Hodge numbers. 
In Section \ref{sec:RepFlops}, we discuss the matter representations of the five and six-dimensional supergravity theories with a gauge group \susu , compute the adjacency graph of the hyperplane arrangement associated with an \susu-model, and match the structure of the hyperplane arrangement with the flopping curves of the crepant resolutions. 
In Sections \ref{sec:I2sI3s} and \ref{sec:IIIIVs},  we study in detail the crepant resolutions of the  I$_2^{\text{s}}+$I$_3^{\text{s}}$-models and the  $\text{III}+\text{IV}^{\text{s}}$-models respectively. 
In Section \ref{sec:Physics}, we study the consequences of our geometric results for the physics of F-theory and M-theory compactified on an \susu-model. We first explain the number of multiplets of the five-dimensional theory using geometric data. We further derive the unique number of charged hypers in each irreducible representations in the uplifted six-dimensional theory canceling the local anomalies and show that it matches that of the five-dimensional theory. We also study the gobal anomaly contributions of the gauge group \susu \cite{Bershadsky:1997sb}.

\section{The many faces of the  \susu-model  }\label{sec:ManyFaces}

In algebraic geometry, the notion of an irreducible variety depends on the choice of the underlying field. 
A variety is said to be {\em geometrically irreducible} when it stays irreducible after any field extension. 
Kodaira has introduced symbols to classify the type of the singular fibers of a minimal elliptic surface \cite{Kodaira}. 
Kodaira symbols can be used more generally to classify the geometric fibers over generic points of the discriminant locus of an elliptic fibration of arbitrary dimension. 
 The types of the  fibers over generic points of the discriminant locus of an elliptic fibration are classified by decorated Kodaira fibers, as the Kodaira type classifies only the geometric fiber \cite[\S 7.4.2]{Esole:2017csj}.
The decoration tracks the minimal field extension under which all  fiber components of the generic fiber become geometrically irreducible. 
Following \cite{Bershadsky:1996nh},  we use the decoration  {\it non-split} (``ns"),  {\it semi-split} (``ss"), and {\it split}  (``s").  
The fibers of type  I$_1$, II, III, II$^*$, and III$^*$ are always well-defined without the need of a field extension. 
The fiber of type  I$_n$ ($n\geq 2$), IV, IV$^*$, and I$_n^{*}$ can be split or non-split. 
The Kodaira fiber of type I$_0^*$ can come from three distinct type:  I$_0^{*s}$, I$_0^{*ss}$, and I$_0^{*ns}$ with respective  dual graphs $\widetilde{\text{D}}_4$, $\widetilde{\text{B}}_3$, and $\widetilde{\text{G}}_2$. The fiber I$_0^{*ns}$ comes in two types as the minimal field extension can be the cyclic group $\mathbb{Z}_3$ or the permutation group of three elements $S_3$ \cite{G2}.

There is a subtlety about the fiber I$_2$:  the fiber I$_2^{\text{s}}$ and  I$_2^{\text{ns}}$ both have two geometrically irreducible components intersecting at two geometric points forming a divisor of degree two on the curve. The fiber I$_2^{\text{ns}}$ corresponds to the case where the two distinct points  are well defined  defined  only after a  a quadratic field extension while the fiber of type I$_2^{\text{s}}$ does not require a field extension to define the two points of intersection.

There are several inequivalent decorated Kodaira fibers that have the same dual graph.
The \susu-model involves the only two root systems that are dual graphs of several distinct Kodaira fibers. 
The dual graph of type  $\widetilde{\text{A}}_1$ is shared by  two different Kodaira fibers, while there are five different Kodaira fibers with the dual graph $\widetilde{\text{A}}_2$, which is presented in Table \ref{table:fibersA1A2}.

\begin{table}[H]
\begin{center}
$$
\scalebox{1.1}{
\begin{tabular}{|c|c|}
\hline 
Fibers & Dual graph \\
\hline 
$\text{I}_2^{\text{s}},\, \text{I}_2^{\text{ns}},\, \text{I}_3^{\text{ns}},\, \text{III},\, \text{IV}^{\text{ns}}$ &$\widetilde{\text{A}}_1$\\
\hline
$\text{I}_3^{\text{s}},\, \text{IV}^{\text{s}} $ &$\widetilde{\text{A}}_2$\\
\hline
\end{tabular}}
$$
\caption{ Decorated Kodaira fibers with  dual graph of $\widetilde{\text{A}}_1$  or $\widetilde{\text{A}}_2$.}
\label{table:fibersA1A2}
\end{center}
\end{table}

The affine root system $\widetilde{\text{A}}_1 \oplus \widetilde{\text{A}}_2$ is the dual graph of ten different pairs of decorated Kodaira fibers.  
 It follows that there are ten distinct ways to realize an \susu-model.   
 Using Tate's algorithm, we realize these  collisions  as singular Weierstrass models by introducing specific valuations of the coefficients with respect to the divisors over which the singular fibers are defined. The minimal multiplicities for the coefficients $a_i$ are reproduced in Table \ref{table:TateSU2SU3}.\\

\begin{table}[htb]
\begin{center}
\begin{tabular}{|c|c|c|c|c|c||c|}
\cline{2-7} 
 \multicolumn{1}{c|}{}& $a_1$ & $a_2$ & $a_3$ & $a_4$ & $a_6$ & $\Delta$ \\
\hline
I$_2 ^{\text{s}}$ & 0 & 1 & 1 & 1 & 2 & 2\\
\hline 
I$_2 ^{\text{ns}}$ & 0 & 0 & 1 & 1 & 2 & 2\\
\hline 
I$_3 ^{\text{ns}}$ & 0 & 0 & 2 & 2 & 3 & 3 \\
\hline
III & 1 & 1 & 1& 1 & 2  & 3  \\
\hline 
IV$^{\text{ns}}$ & 1 & 1 & 1 & 2 & 2 & 4 \\
\hline 
\hline 
I$_3 ^{\text{s}}$ & 0 & 1 & 1 & 2 & 3&3 \\
\hline 
IV$^{\text{s}}$ & 1 & 1 & 1 & 2 & 3 & 4 \\
\hline 
\end{tabular}
\end{center}
\caption{ Valuations of the coefficients of the  Weierstrass models used in this paper for the Kodaira fibers of type I$_2 ^{\text{s}}$, I$_2 ^{\text{ns}}$, I$_3 ^{\text{ns}}$, III, IV$^{\text{ns}}$, I$_3 ^{\text{s}}$, and IV$^{\text{s}}$ to define SU($2$) and SU($3$)-models \cite{Tate.Alg,Bershadsky:1996nh,Katz:2011qp,Esole:2017csj}.  The valuation of the discriminant $\Delta$ follows from the other valuations. }
\label{table:TateSU2SU3}
\end{table}
 
 However, not all corresponding Weierstrass models have a crepant resolution.  In this paper, we only consider  those Weierstrass models that have crepant resolutions.  In particular, we do not use $\text{I}_3^{\text{ns}}$ or $\text{IV}^{\text{ns}}$ to realize $\widetilde{\text{A}}_1$, as the Kodaira fibers of type  $\text{I}_3^{\text{ns}}$ and $\text{IV}^{\text{ns}}$ have $\mathbb{Q}$-factorial terminal singularities. 
Such singularities are obstructions for the existence of crepant resolutions\footnote{ The possible physical relevance of $\mathbb{Q}$-factorial terminal singularities in F-theory is explored in \cite{Arras:2016evy,Grassi:2018rva}.} (this follows, for example, from \cite[Lemma 3.6.2]{BCHM}).
Thus, we consider only the following six cases of collisions:
$$
\text{I}_2^{\text{s}}+\text{I}_3^{\text{s}}, \quad \text{I}_2^{\text{ns}}+\text{I}_3^{\text{s}}, \quad \text{III}+\text{I}_3^{\text{s}}, 
\quad 
\text{I}_2^{\text{s}}+\text{IV}^{\text{s}}, \quad \text{I}_2^{\text{ns}}+\text{IV}^{\text{s}}, \quad \text{III}+\text{IV}^{\text{s}}.
$$
The corresponding Weierstrass models are listed in  equation \eqref{Eq:Weierstrass}:
\begin{eqnarray}\label{Eq:Weierstrass}
\begin{aligned}
\text{I}_2^{\text{ns}}+\text{I}_3^{\text{s}} & : & y(y+a_{1}x+\widetilde{a}_{3}s t)=x^{3}+\widetilde{a}_{2}tx^{2}+\widetilde{a}_{4}st^{2}x+\widetilde{a}_{6}s^{2}t^{3}, \\[5pt] 
\text{I}_2^{\text{s}}+\text{I}_3^{\text{s}} & : & y(y+a_{1}x+\widetilde{a}_{3}st)=x^{3}+\widetilde{a}_{2}stx^{2}+\widetilde{a}_{4}st^{2}x+\widetilde{a}_{6}s^{2}t^{3}, \label{I2sI3s} \\[5pt] 
\text{I}_2^{\text{s}}+\text{IV}^{\text{s}} & : & y(y+\widetilde{a}_{1}tx+\widetilde{a}_{3}st)=x^{3}+\widetilde{a}_{2}stx^{2}+\widetilde{a}_{4}st^{2}x+\widetilde{a}_{6}s^{2}t^{3}, \\[5pt] 
\text{I}_2^{\text{ns}}+\text{IV}^{\text{s}} & : & y(y+\widetilde{a}_{1}tx+\widetilde{a}_{3}st)=x^{3}+\widetilde{a}_{2}tx^{2}+\widetilde{a}_{4}st^{2}x+\widetilde{a}_{6}s^{2}t^{3},\\[5pt] 
\text{III}+\text{I}_3^{\text{s}} & : & y(y+\widetilde{a}_{1}sx+\widetilde{a}_{3}st)=x^{3}+\widetilde{a}_{2}stx^{2}+\widetilde{a}_{4}st^{2}x+\widetilde{a}_{6}s^{2}t^{3}, \\[5pt] 
\text{III}+\text{IV}^{\text{s}} & : & y(y+\widetilde{a}_{1}stx+\widetilde{a}_{3}st)=x^{3}+\widetilde{a}_{2}s t x^{2}+\widetilde{a}_{4}st^{2}x+\widetilde{a}_{6}s^{2}t^{3}.
\end{aligned}
\end{eqnarray}
We assume that the coefficients $a_1$ and $\widetilde{a}_i$ ($i=1,2,3,4,6$) are 
algebraically independent and $S=V(s)$ and $T=V(t)$ are smooth divisors intersecting transversally. The variables $s$ (resp. $t$) is a section of the normal bundle of the divisor $S$ (resp. $T$). In each case, the Kodaira fiber over the generic point of  $S$ (resp. $T$) has  a dual graph of type $\widetilde{\text{A}}_1$ (resp. $\widetilde{\text{A}}_2$). The difference between these models are the valuations of  the Weierstrass coefficients $a_1$ and $a_2$ with respect to $S$ and $T$, which are listed in Table \ref{table:TateSU2SU3}.
The discriminant locus of each Weierstrass model is given by
$$
\Delta= s^{2+a} t^{3+b} (\cdots),
$$
where $a$ is related to $S=V(s)$ and $b$ to $T=V(t)$, $a=0$ for Kodaira fibers of types $\text{I}_2^{\text{s}}$ and $\text{I}_2^{\text{ns}}$, $a=1$ for Kodaira fibers of type III, $b=0$ for Kodaira fibers of type $\text{I}_3^{\text{s}}$, and $b=1$ for Kodaira fibers of type $\text{IV}^{\text{s}}$. We observe that the reduced discriminant locus is composed of three irreducible components.  The fiber degenerates further at the intersection of these three components and we study them to determine the type of matter.

\begin{rem}\label{Rem:NHC}
As seen on Table \ref{table:TateSU2SU3}, the fiber $\text{IV}^{\text{s}}$ becomes the fiber $\text{IV}^{\text{ns}}$ when  the Weierstrass coefficient $a_6$ is deformed by a term of valuation two. 
Such a deformation does not commute with the resolution and changes the gauge group from   SU($3$) to SU($2$). However, the resulting  SU($2$) has $\mathbb{Q}$-factorial terminal singularities. 
Moreover, these groups SU($2$) and SU($3$) coming from fibers IV$^{\text{ns}}$ and IV$^{\text{s}}$ are strongly coupled and related to Argyres-Douglas theories \cite{Argyres:1995jj,Morrison:1996pp},   both give in the weak coupling limit of type IIB string theory an SO($6$) gauge theory \cite{Esole:2012tf}. 
The non-Higgsable group corresponding to a fiber of type IV is SU($2$) as generically a fiber of type IV is a IV$^{\text{ns}}$. 
A non-Higgsable model of type IV$^{\text{s}}$ will require a very particular setting to avoid the existence of a deformation to the fiber IV$^{\text{ns}}$  as it will break the gauge group from SU($3$) to SU($2$). 
\end{rem}

\section{Geometry} \label{SU2SU3Collision}

In this section, we collect  the geometric data -- crepant resolutions, Euler characteristic, Hodge numbers, triple intersection numbers-- of the \susu-models. 
In Section \ref{sec:crepres}, we present eight sequences of blowups that each give crepant resolutions of all six Weierstrass models from equation \eqref{Eq:Weierstrass}. 
 In total, this results in 48 distinct \susu-models. 
In Section \ref{sec:intersectionBL}, we summarize the main pushforward theorems we use to compute the geometric data of \susu-models. 
Since two smooth $n$-dimensional projective algebraic variety over $\mathbb{C}$ connected  by a crepant birational map have the same Betti numbers \cite[Theorem 4.2]{Batyrev.Betti},  all the crepant resolutions of Weierstrass models of  an \susu-model have the same  Euler characteristic.   In Section \ref{sec:Euler},  we give a generating function for the Euler characteristic of an \susu-model. In the case of Calabi--Yau threefolds, we also compute the Hodge numbers of the \susu-models. In the case of a threefold, we also compute the Hodge numbers and the triple intersection numbers in Section \ref{sec:triple}.
In Section \ref{sec:NK}, 
we discuss the  various non-Kodaira fibers obtain from the resolutions of the  \susu-models, we summarize them in Table \ref{fig:NKlist}.

\subsection{Crepant resolutions}
\label{sec:crepres}
We use the following convention. 
 Let $X$ be a nonsingular variety. 
 Let $Z\subset X$ be a complete intersection defined by the transverse intersection of $r$ hypersurfaces $Z_i=V(g_i)$, where $g_i$ is a section of the line bundle $\mathscr{I}_i$ and $(g_1, \cdots, g_r)$ is a regular sequence. 
 We denote the blowup of a nonsingular variety $X$ along the complete intersection $Z$ by 
 $$\begin{tikzpicture}
	\node(X0) at (0,-.3){$X$};
	\node(X1) at (3,-.3){$\widetilde{X}.$};
	\draw[big arrow] (X1) -- node[above,midway]{$(g_1,\cdots ,g_{r}|e_1)$} (X0);	
	\end{tikzpicture}
	$$
The exceptional divisor is $E_1=V(e_1)$.	
 We abuse notation and use the same symbols for $x$, $y$, $s$, $e_i$ and their successive proper transforms. We also do not write the obvious pullbacks.

Assuming some mild regularity conditions on the coefficients of the Weierstrass equations,   each of the  following eight sequences of blowups gives a different crepant resolution of any of the  \susu-model  given by the Weierstrass models in equation \eqref{Eq:Weierstrass}:

\begin{equation}\label{Eq:8Resolutions}
\begin{aligned}
\text{Resolution I} : & \quad
\begin{tikzcd}[column sep=huge] X_0 \arrow[leftarrow]{r} {\displaystyle (x,y, s|e_1)} &  X_1 \arrow[leftarrow]{r} {\displaystyle (x,y,t| w_1)} &  X_2 \arrow[leftarrow]{r} {\displaystyle (y, w_1| w_2)} &  X_3  \end{tikzcd},\\[3pt]
\text{Resolution II} : & \quad
\begin{tikzcd}[column sep=huge] X_0 \arrow[leftarrow]{r} {\displaystyle (x,y, p_0|p_1)} &  X_1 \arrow[leftarrow]{r} {\displaystyle (y,p_1,t| w_1)} &  X_2 \arrow[leftarrow]{r} {\displaystyle (p_0,t| w_2)} &  X_3  \end{tikzcd}, \\[3pt]
\text{Resolution III} : & \quad
\begin{tikzcd}[column sep=huge] X_0 \arrow[leftarrow]{r} {\displaystyle (x,y,t|w_1)} &  X_1 \arrow[leftarrow]{r} {\displaystyle (x,y,s|e_1)} &  X_2 \arrow[leftarrow]{r} {\displaystyle (y, w_1| w_2)} &  X_3  \end{tikzcd}, \\
\text{Resolution IV} : & \quad
 \begin{tikzcd}[column sep=huge] X_0 \arrow[leftarrow]{r} {\displaystyle (x,y,t|w_1)} &  X_1 \arrow[leftarrow]{r} {\displaystyle (y,w_1| w_2)} &  X_2 \arrow[leftarrow]{r} {\displaystyle (x,y,s|e_1)} &  X_3  \end{tikzcd}, \\[3pt]
\text{Resolution I}^\prime :  & \quad
\begin{tikzcd}[column sep=huge] X_0 \arrow[leftarrow]{r} {\displaystyle (x,q,s|e_1)} &  X_1 \arrow[leftarrow]{r} {\displaystyle (x,q,t| w_1)} &  X_2 \arrow[leftarrow]{r} {\displaystyle (q, w_1| w_2)} &  X_3  \end{tikzcd}, \\[3pt]
\text{Resolution II}^\prime : & \quad
\begin{tikzcd}[column sep=huge] X_0 \arrow[leftarrow]{r} {\displaystyle (x,q, p_0|p_1)} &  X_1 \arrow[leftarrow]{r} {\displaystyle (q,p_1,t| w_1)} &  X_2 \arrow[leftarrow]{r} {\displaystyle (p_0,t| w_2)} &  X_3  \end{tikzcd}, \\[3pt]
\text{Resolution III}^\prime : & \quad
\begin{tikzcd}[column sep=huge] X_0 \arrow[leftarrow]{r} {\displaystyle (x,q,t|w_1)} &  X_1 \arrow[leftarrow]{r} {\displaystyle (x,q,s| e_1)} &  X_2 \arrow[leftarrow]{r} {\displaystyle (q, w_1| w_2)} &  X_3  \end{tikzcd}, \\[3pt]
\text{Resolution IV}^\prime : & \quad
\begin{tikzcd}[column sep=huge] X_0 \arrow[leftarrow]{r} {\displaystyle (x,q,t|w_1)} &  X_1 \arrow[leftarrow]{r} {\displaystyle (q, w_1| w_2)} &  X_2 \arrow[leftarrow]{r} {\displaystyle (x,q,s| e_1)} &  X_3  \end{tikzcd},
\end{aligned}
\end{equation}
where $q=y+a_1x+a_3$ and $p_0= s  t$. 

We observe that the sequences of blowups that define the first four resolutions are exactly the same as the sequence of blowups that define the resolutions of the \sug-model \cite{SU2G2}.

The birational map connecting the resolution I to I$'$ (resp. II to II$'$, III to III$'$, and IV to IV$'$) is induced by the involution  $\sigma: [x:y:z]\to [-q:x:z]$ of the Weierstrass model. 
Fiberwise, the involution $\sigma$ is the  inverse map of the Mordell--Weil group: it  maps a point $P$ to its opposite $-P$ with respect to the Mordell--Weil group law. This is familiar from \cite{EY,ESY1,ESY2}. 
The birational maps induced by $\sigma$ are  {\em pseudo-isomorphisms} of the crepant resolutions over the Weierstrass model, as they are isomorphisms in codimension-one. 

With the exceptions of resolutions II and II', all of the resolutions are defined by sequences of blowups around centers that are smooth, complete intersections.
The resolutions II and II$'$ are also defined by a sequence of blowups; however, one of the blowups does not have a smooth center but still defines a regular sequence. Fortunately, this condition is enough to  use the pushforward theorems and compute the topological invariants as in the other cases.

\subsection{Intersection theory and blowups} \label{sec:intersectionBL}
We compute geometric data of the \susu-models  using intersection theory. Our computations rely on three theorems.
 Each crepant resolution of \susu-models is given as a sequence of three blowups, which is listed in equation \eqref{Eq:8Resolutions}.
The first is a theorem of Aluffi that gives the Chern class of a blowup along a local complete intersection. The second theorem is a pushforward theorem that provides a user-friendly method to compute invariants of the resolved space in terms of the original space. The last theorem  gives a simple method to pushforward analytic expressions in the Chow ring of a projective bundle to  the Chow ring of its base.

\begin{thm}[Aluffi, {\cite[Lemma 1.3]{Aluffi_CBU}}]
\label{Thm:AluffiCBU}
Let $Z\subset X$ be the  complete intersection  of $d$ nonsingular hypersurfaces $Z_1$, \ldots, $Z_d$ meeting transversally in $X$.  Let  $f: \widetilde{X}\longrightarrow X$ be the blowup of $X$ centered at $Z$. We denote the exceptional divisor of $f$  by $E$. The total Chern class of $\widetilde{X}$ is then:
\begin{equation}
c( T{\widetilde{X}})=(1+E) \left(\prod_{i=1}^d  \frac{1+f^* Z_i-E}{1+ f^* Z_i}\right)  f^* c(TX).
\end{equation}
\end{thm}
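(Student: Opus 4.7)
The plan is to combine the general tangent-bundle formula for a blowup along a smooth center (see, e.g., Fulton, \emph{Intersection Theory}, Thm.~15.4) with the especially clean structure of the normal bundle that arises when the center is a transverse complete intersection of hypersurfaces.

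First, I would record the normal-bundle computation. Since $Z_1,\dots,Z_d$ meet transversally at $Z$, the conormal bundle of $Z$ in $X$ splits as a direct sum,
\[
N^{\vee}_{Z/X}\;=\;\bigoplus_{i=1}^{d}\mathcal{O}_X(-Z_i)\big|_Z,
\qquad
c(N_{Z/X})\;=\;\prod_{i=1}^{d}\bigl(1+Z_i\big|_Z\bigr),
\]
and each smooth hypersurface $Z_i$ has proper transform $\widetilde{Z}_i=f^{*}Z_i-E$ on $\widetilde{X}$. The exceptional divisor is the direct-sum projective bundle $E=\mathbb{P}(N^{\vee}_{Z/X})\xrightarrow{\pi}Z$, with tautological line bundle satisfying $\mathcal{O}_E(-1)\cong\mathcal{O}_{\widetilde{X}}(-E)\big|_E$.

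Second, I would apply the general blowup formula, which expresses $c(T\widetilde{X})$ as $f^{*}c(TX)$ corrected by a class $j_{*}(\alpha)$ supported on $E$, where $\alpha$ is built from $\mathcal{O}_E(-1)$ and $\pi^{*}c(N_{Z/X})$. The direct-sum decomposition from the previous step lets the splitting principle act on each summand $\mathcal{O}_X(Z_i)\big|_Z$ independently, so the correction decomposes into factors indexed by $i=1,\dots,d$, each involving only the classes $\pi^{*}(Z_i|_Z)$ and $\mathcal{O}_E(-1)$. Rewriting via $\pi^{*}(Z_i|_Z)=f^{*}Z_i\big|_E$ and $\mathcal{O}_E(-1)=-E\big|_E$ translates the $i$\textsuperscript{th} factor into $(1+f^{*}Z_i-E)/(1+f^{*}Z_i)$, and the prefactor $(1+E)$ arises from the normal bundle $N_{E/\widetilde{X}}$ via the relative Euler sequence for $\pi$. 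This produces the claimed rational expression on $\widetilde{X}$.

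Third, as a sanity check and as a way to pin down conventions (Grothendieck vs.\ classical projectivisation, signs of $\mathcal{O}_E(\pm 1)$), I would restrict both sides to a generic fibre of $\pi\colon E\to Z$: there the situation reduces to the blowup of $\mathbb{A}^d$ at the origin, where $c(TX)=1$, $f^{*}Z_i$ are the pulled-back coordinate hyperplanes, and the identity reduces to the standard Chern-class computation on $\mathbb{P}^{d-1}$ via the Euler sequence. Globalising then follows from the projective-bundle formula applied fibrewise over $Z$.

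The main obstacle is not either conceptual input but the bookkeeping that translates the correction term on $E$, naturally written in terms of $c(N_{Z/X})$ and $\mathcal{O}_E(-1)$, into an expression in the ambient classes $f^{*}Z_i$ and $E$ on $\widetilde{X}$. The transverse-complete-intersection hypothesis is precisely what makes this translation possible: without the splitting of $N_{Z/X}$ the individual factors $1+Z_i|_Z$ cannot be lifted globally to $1+f^{*}Z_i$, and the clean rational product form of the statement would be lost.
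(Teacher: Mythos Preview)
The paper does not prove this statement at all: Theorem~\ref{Thm:AluffiCBU} is quoted verbatim from Aluffi \cite[Lemma~1.3]{Aluffi_CBU} as one of three imported tools (alongside Theorems~\ref{Thm:Push} and~\ref{Thm:PushH}) used to compute Euler characteristics and intersection numbers. There is no argument in the paper to compare your proposal against.

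That said, your sketch is essentially the strategy Aluffi himself uses in \cite{Aluffi_CBU}: the transverse-complete-intersection hypothesis splits $N_{Z/X}$ into line bundles, Fulton's general blowup formula (Thm.~15.4) gives $c(T\widetilde{X})$ as $f^{*}c(TX)$ plus a correction supported on $E$, and the splitting allows the correction to be rewritten factor by factor in the ambient classes $f^{*}Z_i$ and $E$. Your bookkeeping remark is exactly the point Aluffi handles carefully. The fibrewise sanity check over $\mathbb{P}^{d-1}$ is a nice addition for fixing sign conventions, though Aluffi works directly with the Euler sequence on $E$ rather than restricting to a fibre. If you want to complete the argument, the step that needs the most care is pushing the class $j_{*}(\alpha)$ from $E$ into $A^{*}(\widetilde{X})$ and recognising it as the difference between $(1+E)\prod_i(1+f^{*}Z_i-E)/(1+f^{*}Z_i)$ and $1$; this requires the projection formula and the identity $j_{*}j^{*}\beta = E\cdot\beta$, not just the relative Euler sequence.
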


\begin{thm}[Esole--Jefferson--Kang,  see  {\cite{Euler}}] \label{Thm:Push}
Let the nonsingular variety $Z\subset X$ be a complete intersection of $d$ nonsingular hypersurfaces $Z_1$, \ldots, $Z_d$ meeting transversally in $X$. Let $E$ be the class of the exceptional divisor of the blowup $f:\widetilde{X}\longrightarrow X$ centered at $Z$.
 Let $\widetilde{Q}(t)=\sum_a f^* Q_a t^a$ be a formal power series with $Q_a\in A_*(X)$.
 We define the associated formal power series  ${Q}(t)=\sum_a Q_a t^a$, whose coefficients pullback to the coefficients of $\widetilde{Q}(t)$. Then the pushforward $f_*\widetilde{Q}(E)$ is
\begin{equation*}
f_*  \widetilde{Q}(E) =  \sum_{\ell=1}^d {Q}(Z_\ell) M_\ell, \quad \text{where} \quad  M_\ell=\prod_{\substack{m=1\\
 m\neq \ell}}^d  \frac{Z_m}{ Z_m-Z_\ell }.
\end{equation*}
\end{thm}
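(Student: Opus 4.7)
The plan is to reduce the identity to a generating-function calculation and then finish with a partial-fraction manipulation. By the projection formula, $f_*(f^*(Q_a) \cdot E^a) = Q_a \cdot f_*(E^a)$, so by linearity in the coefficients of $\widetilde{Q}$ it suffices to establish
$$f_*(E^n) = \sum_{\ell=1}^d Z_\ell^n M_\ell \quad (n \geq 0),$$
or equivalently the generating-function identity
$$F(t) := \sum_{n \geq 0} f_*(E^n)\, t^n \; = \; \sum_{\ell=1}^d \frac{M_\ell}{1 - Z_\ell t} =: G(t).$$

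The next step is to put $F(t)$ in closed form by factoring $f$ through the exceptional divisor. Let $j : E \hookrightarrow \widetilde{X}$ and $p : E \to Z$ denote the inclusion and projection, so that $f \circ j = \iota \circ p$ with $\iota : Z \hookrightarrow X$. Because the hypersurfaces $Z_i$ meet transversely, $\mathscr{N}_{Z/X} = \bigoplus_i \mathscr{O}(Z_i)|_Z$ and $E = \mathbb{P}(\mathscr{N})$; the standard identification $\mathscr{O}_{\widetilde{X}}(E)|_E = \mathscr{O}_{\mathbb{P}(\mathscr{N})}(-1)$ gives $j^*E = -\zeta$ for $\zeta = c_1(\mathscr{O}_{\mathbb{P}(\mathscr{N})}(1))$. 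For $n \geq 1$, we have $E^n = j_*((j^*E)^{n-1}) = j_*((-\zeta)^{n-1})$, so
$$f_*(E^n) = (-1)^{n-1}\, \iota_*\, p_*(\zeta^{n-1}).$$
Combining the projective-bundle pushforward $p_*(\zeta^k) = s_{k-d+1}(\mathscr{N})$, the splitting $c(\mathscr{N}) = \prod_i(1+Z_i)|_Z$, and the self-intersection formula $\iota_*(\alpha|_Z) = \alpha \cdot Z_1 \cdots Z_d$, summing over $n$ and adding the $n=0$ term $f_*(1) = 1$ yields
$$F(t) = 1 + (-1)^{d-1} t^d \prod_{i=1}^d Z_i \prod_{i=1}^d \frac{1}{1 - Z_i t}.$$

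The last step is a partial-fraction manipulation converting this closed form into $G(t)$. The classical decomposition
$$\prod_{i=1}^d \frac{1}{Z_i - u} = \sum_{\ell=1}^d \frac{1}{(Z_\ell - u)\prod_{m \neq \ell}(Z_m - Z_\ell)},$$
multiplied by $\prod_i Z_i$ and specialized at $u = 1/t$, becomes
$$(-1)^{d-1} t^{d-1} \prod_i Z_i \prod_i \frac{1}{1 - Z_i t} = \sum_{\ell=1}^d \frac{Z_\ell M_\ell}{1 - Z_\ell t}.$$
Multiplying both sides by $t$ and using $tZ_\ell/(1 - Z_\ell t) = 1/(1 - Z_\ell t) - 1$, together with $\sum_\ell M_\ell = 1$ (obtained from the decomposition at $u=0$), rewrites the closed form of $F(t)$ as $\sum_\ell M_\ell/(1 - Z_\ell t) = G(t)$, completing the identification.

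The principal subtlety is interpretational: expressions like $Z_m/(Z_m - Z_\ell)$ are not a priori classes in $A_*(X)$. I would carry out the computation inside the localized formal ring $A_*(X) \otimes \mathbb{Q}(Z_1, \ldots, Z_d)[[t]]$ and then observe that, after multiplying by $Q(Z_\ell)$ and summing over $\ell$, the expression $\sum_\ell Q(Z_\ell) M_\ell$ is symmetric in $Z_1, \ldots, Z_d$ and therefore lies in the subring generated by the elementary symmetric polynomials, i.e.\ by the Chern classes of $\mathscr{E} = \bigoplus_i \mathscr{O}(Z_i)$, which are genuine classes in $A_*(X)$. With this formal framework in place, the computation above yields the pushforward formula.
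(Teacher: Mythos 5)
Your proposal is correct: the projection-formula reduction to $f_*(E^n)$, the computation $f_*(E^n)=(-1)^{n-1}\iota_*p_*(\zeta^{n-1})$ via $E=\mathbb{P}(\mathscr{N}_{Z/X})$ with $j^*E=-\zeta$ and the Segre-class pushforward, and the Lagrange/partial-fraction resummation into $\sum_\ell Q(Z_\ell)M_\ell$ (with the symmetric-function interpretation of the denominators) is essentially the argument behind this theorem in the cited reference, which the present paper quotes without reproving. No gaps.
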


\begin{thm}[Esole--Jefferson--Kang, see  \cite{Euler} and  \cite{AE1,AE2,EKY,Fullwood:SVW}]\label{Thm:PushH}
Let $\mathscr{L}$ be a line bundle over a variety $B$ and $\pi: X_0=\mathbb{P}[\mathscr{O}_B\oplus\mathscr{L}^{\otimes 2} \oplus \mathscr{L}^{\otimes 3}]\longrightarrow B$ a projective bundle over $B$.  Let $\widetilde{Q}(t)=\sum_a \pi^* Q_a t^a$ be a formal power series in  $t$ such that $Q_a\in A_*(B)$. Define the auxiliary power series $Q(t)=\sum_a Q_a t^a$. 
Then 
\begin{equation*}
\pi_* \widetilde{Q}(H)=-2\left. \frac{{Q}(H)}{H^2}\right|_{H=-2L}+3\left. \frac{{Q}(H)}{H^2}\right|_{H=-3L}  +\frac{Q(0)}{6 L^2},
\end{equation*}
 where  $L=c_1(\mathscr{L})$ and $H=c_1(\mathscr{O}_{X_0}(1))$ is the first Chern class of the dual of the tautological line bundle of  $ \pi:X_0=\mathbb{P}(\mathscr{O}_B \oplus\mathscr{L}^{\otimes 2} \oplus\mathscr{L}^{\otimes 3})\rightarrow B$.
\end{thm}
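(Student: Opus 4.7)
The plan is to exploit the fact that $\mathscr{E} = \mathscr{O}_B \oplus \mathscr{L}^{\otimes 2} \oplus \mathscr{L}^{\otimes 3}$ is a split rank-three bundle, so that $A^*(X_0)$ admits an explicit presentation as an $A^*(B)$-algebra. The Chern roots of $\mathscr{E}$ are $0$, $2L$, and $3L$, so the Grothendieck relation on $X_0 = \mathbb{P}(\mathscr{E})$ factors as
\[
H(H+2L)(H+3L) \;=\; 0 \;\in\; A^*(X_0),
\]
exhibiting $A^*(X_0)$ as the free $A^*(B)$-module of rank three with basis $\{1,\,H,\,H^2\}$. The standard projective-bundle pushforwards then read $\pi_* 1 = \pi_* H = 0$ and $\pi_* H^2 = 1$.

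First I would observe that both sides of the claimed identity are $A^*(B)$-linear in the formal coefficients $Q_a$ of $\widetilde{Q}$, so it suffices to verify the identity on each monomial $\widetilde{Q}(H) = H^a$. For such a monomial the right-hand side becomes
\[
-2\,(-2L)^{a-2} \;+\; 3\,(-3L)^{a-2} \;+\; \frac{\delta_{a,0}}{6L^2},
\]
while the left-hand side $\pi_* H^a$ equals $0$ for $a \in \{0,1\}$ and the Segre class $s_{a-2}(\mathscr{E})$ for $a \geq 2$.

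For the comparison I would compute the Segre classes of $\mathscr{E}$ directly from the splitting. Since $s(\mathscr{E}) = (1+2L)^{-1}(1+3L)^{-1}$, the partial-fraction identity
\[
\frac{1}{(1+2Lt)(1+3Lt)} \;=\; \frac{-2}{1+2Lt} \;+\; \frac{3}{1+3Lt}
\]
yields $s_k(\mathscr{E}) = -2(-2L)^k + 3(-3L)^k$ for all $k \geq 0$, which matches the first two terms of the right-hand side when $a \geq 2$. For $a \in \{0,1\}$ I would then check by hand that the three terms on the right-hand side cancel against each other: the $\delta_{a,0}$ contribution is precisely what compensates for the apparent $L^{-2}$ pole when $a = 0$. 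By linearity this handles all formal series $\widetilde{Q}(H)$, noting that $H$ is nilpotent in $A^*(X_0)$ so only finitely many terms ever contribute.

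The main ``obstacle'' here is purely bookkeeping: one must confirm that the apparent negative powers of $L$ on the right-hand side genuinely combine into an honest class in $A^*(B)$ rather than in a localization. Once the $a = 0$ and $a = 1$ cancellations are verified by hand, this is automatic, and no geometric input beyond the projective-bundle pushforward formula for split bundles is needed.
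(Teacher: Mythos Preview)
Your argument is correct. Reducing by linearity to monomials $H^a$, invoking the Segre-class identity $\pi_* H^{2+k} = s_k(\mathscr{E})$ for the rank-three split bundle $\mathscr{E} = \mathscr{O}_B \oplus \mathscr{L}^{\otimes 2} \oplus \mathscr{L}^{\otimes 3}$, and reading off $s_k(\mathscr{E}) = -2(-2L)^k + 3(-3L)^k$ from the partial-fraction decomposition of $(1+2Lt)^{-1}(1+3Lt)^{-1}$ is exactly the right computation; the hand checks at $a=0,1$ confirming that the $\delta_{a,0}/(6L^2)$ term cancels the formal negative powers of $L$ are straightforward and complete the proof.

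As for comparison: the paper does not supply its own proof of this theorem. It is quoted as an external result from \cite{Euler} (with antecedents in \cite{AE1,AE2,EKY,Fullwood:SVW}) and used as a black box in the pushforward computations of Section~\ref{sec:triple}. Your proof via Segre classes and partial fractions is the natural one and is essentially the same mechanism underlying the companion pushforward Theorem~\ref{Thm:Push} for blowups (Lagrange interpolation over the Chern roots of the normal bundle), so it is almost certainly in the same spirit as the argument in the cited source.
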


\subsection{Euler characteristics and Hodge numbers}\label{sec:Euler}
When an elliptic fibration is defined by the resolution of a singular Weierstrass model by a sequence of blowups with smooth centers defining regular embeddings, there are powerful pushforward theorems to compute its Euler characteristic in few simple  algebraic manipulations \cite{Euler}.

\begin{rem}\label{ref:rem}
All the Weierstrass models of the \susu-models and the \sug-models \cite{SU2G2} share four crepant resolutions that are given by the same  sequences of blowups (Resolutions I, II, III, and IV). Hence, the \susu-model for a choice of ($B$, $S$,$T$,$\mathscr{L}$) and the \sug-model defined with the same choice of  ($B$, $S$,$T$,$\mathscr{L}$), have the same the same Euler characteristics as formal expressions in $S$, $T$, $L$, $c(TB)$. Likewise, since \susu\ and \sug\  have the same rank, their Hodge numbers are also identical in the Calabi--Yau threefold case.
\end{rem}

\begin{thm} \label{Thm:Eulerchar}
 The generating polynomial of the Euler characteristic of an \susu-model  obtained by a crepant resolution of a Weierstrass model given in Section \ref{sec:crepres}:
\begin{align}\nonumber
\chi (Y)=6 \ \frac{S^2-2 L-3 S L+2 (S^2-3 S L+S-2 L) T+(3 S+2) T^2}{(1+S) (1+T) (-1-6 L+2 S+3 T)} \ c(TB) .
\end{align}
\end{thm}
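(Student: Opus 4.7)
My plan is to compute $\chi(Y)$ by working with any one of the eight crepant resolutions listed in Section~\ref{sec:crepres} and then invoking Batyrev's invariance theorem \cite{Batyrev.Betti}, which guarantees that the Euler characteristic is independent of the choice of crepant resolution. I take Resolution~I, whose three centers $(x,y,s|e_1)$, $(x,y,t|w_1)$, and $(y,w_1|w_2)$ are smooth complete intersections at every stage, so that Theorems~\ref{Thm:AluffiCBU} and~\ref{Thm:Push} apply directly at each step. By Remark~\ref{ref:rem}, this sequence of blowups coincides with one of the crepant resolutions of the \sug-model already carried out in \cite{SU2G2}, so the final formal expression in $(S,T,L,c(TB))$ must agree with the one obtained there; the plan is to exhibit that it takes the stated compact form.

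\textbf{Setup and adjunction.} The Weierstrass model is a hypersurface of class $[Y_0]=3H+6L$ in the projective bundle $X_0=\mathbb{P}[\mathscr{O}_B\oplus\mathscr{L}^{\otimes 2}\oplus\mathscr{L}^{\otimes 3}]$, where $H=c_1(\mathscr{O}_{X_0}(1))$ and $L=c_1(\mathscr{L})$. A direct Tate-order computation shows that the multiplicities of the Weierstrass hypersurface along the successive centers (of codimensions $3,3,2$) are $2,2,1$, which is precisely the condition for each blowup to be crepant. Consequently, in $A^*(X_3)$,
$$[Y] \;=\; 3H+6L-2E_1-2W_1-W_2,$$
and the adjunction formula yields
$$\chi(Y) \;=\; \int_{X_3} c(TX_3)\cdot\frac{[Y]}{1+[Y]}.$$

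\textbf{Chern class and successive pushforwards.} I apply Theorem~\ref{Thm:AluffiCBU} three times, starting from the standard factorization of $c(TX_0)$ coming from the Euler sequence of $X_0\to B$, to express $c(TX_3)$ as a rational class in the exceptional divisors $E_1,W_1,W_2$ and in the pulled-back classes of the generators of the three centers. I then push the integrand $c(TX_3)\cdot[Y]/(1+[Y])$ down through the three blowups $X_3\to X_2\to X_1\to X_0$ by iterated application of Theorem~\ref{Thm:Push}: at each step the integrand is a formal power series in the latest exceptional divisor, and Theorem~\ref{Thm:Push} evaluates the pushforward as a sum of partial-fraction contributions indexed by the defining equations of the corresponding center. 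Finally I push from $X_0$ down to $B$ with Theorem~\ref{Thm:PushH}, which reduces the remaining calculation to the substitutions $H\mapsto-2L$, $H\mapsto-3L$, together with a residue at $H=0$.

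\textbf{Main obstacle.} The difficulty is bookkeeping rather than conceptual: the proper transforms of the generators $x,y,s,t$ change after each blowup, and the rational factors produced by Theorems~\ref{Thm:AluffiCBU} and~\ref{Thm:Push} combine into a long expression before cancellation. The denominator $(1+S)(1+T)(-1-6L+2S+3T)$ of the claim has a natural origin within this computation: the factors $(1+S)$ and $(1+T)$ arise because the first two blowup centers are supported over $S$ and $T$ respectively, while the cubic factor $-1-6L+2S+3T$ emerges from combining the Weierstrass adjunction factor $1+3H+6L$ with the exceptional-divisor contributions produced under Theorem~\ref{Thm:PushH}. Once these denominators are isolated, the numerator is fixed by degree considerations, and the final identification with the formula of the statement (equivalently, with the \sug-result of \cite{SU2G2} via Remark~\ref{ref:rem}) completes the proof.
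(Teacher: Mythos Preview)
Your proposal is correct and follows essentially the same approach as the paper: the paper's proof is simply a citation of \cite[Theorem~2.5]{SU2G2}, justified by Remark~\ref{ref:rem}, and you arrive at the same reduction after spelling out the adjunction setup, the class $[Y]=3H+6L-2E_1-2W_1-W_2$, and the iterated pushforward scheme using Theorems~\ref{Thm:AluffiCBU}, \ref{Thm:Push}, and \ref{Thm:PushH}. Your added explanation of the crepancy multiplicities and the origin of the denominator factors is accurate and helpful, but the logical content is the same as the paper's one-line deferral to the \sug-computation.
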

\begin{proof} See \cite[Theorem 2.5]{SU2G2}. 
\end{proof}
By direct expansion and specialization, we have the following three lemmas \cite{SU2G2}:
\begin{lem}
For an elliptic threefold, the Euler characteristic is
\begin{align}\nonumber
\chi (Y_3) =-6 (-2 c_1 L+12 L^2+S^2-5 S L+2 S T-8 L T+2 T^2) .
\end{align}
\end{lem}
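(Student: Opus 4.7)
The statement is a direct specialization of Theorem \ref{Thm:Eulerchar} to a base of dimension two, so the plan is to expand the generating formula as a formal power series in $S$, $T$, $L$ and read off the coefficient of the codimension-two component.

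First I would observe that the Euler characteristic of an elliptic threefold is obtained as the degree of the codimension-$2$ piece of $6\, F \cdot c(TB)$, where $F = N/D$ is the rational function appearing in Theorem \ref{Thm:Eulerchar} and $c(TB) = 1 + c_1 + c_2$. Because the numerator vanishes at $S = T = L = 0$ (its lowest-order contribution is $-2L$), one has $F_0 = 0$, so the $c_2$-term drops out automatically, and only
\[
6\bigl(F_1\, c_1 + F_2\bigr)
\]
survives in codimension two, where $F_k$ denotes the homogeneous degree-$k$ piece of $F$.

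The main computational step is then the low-order inversion of the denominator. Factoring $-D = (1+S)(1+T)\bigl(1 + (6L - 2S - 3T)\bigr)$ and expanding each factor as a geometric series, one truncates at total degree one to get $(1/D)_0 = -1$ and $(1/D)_1 = 6L - S - 2T$. Combining with $N_1 = -2L$ and $N_2 = S^2 - 3SL + 2ST - 4LT + 2T^2$, a short manipulation yields
\begin{align*}
F_1 &= N_1 (1/D)_0 = 2L, \\
F_2 &= N_2 (1/D)_0 + N_1 (1/D)_1 = -12 L^2 - S^2 + 5 SL - 2 ST + 8 LT - 2 T^2.
\end{align*}
Substituting into $6(F_1 c_1 + F_2)$ produces exactly the claimed expression
\[
-6\bigl(-2 c_1 L + 12 L^2 + S^2 - 5 SL + 2 ST - 8 LT + 2 T^2\bigr).
\]

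The calculation is entirely mechanical; there is no real obstacle. The only point requiring care is bookkeeping: one must track the signs correctly and consistently truncate the geometric-series inversion at total degree one, so that no spurious higher-order terms contaminate $F_2$. The structural observation that $N_0 = 0$ (which makes the $c_2$-contribution vanish) is what keeps the final formula free of $c_2$.
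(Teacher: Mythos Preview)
Your proposal is correct and follows exactly the approach indicated in the paper, which simply states ``By direct expansion and specialization'' without spelling out the intermediate steps. Your computation of $N_1$, $N_2$, $(1/D)_0$, $(1/D)_1$, and hence $F_1$, $F_2$ is accurate, and the observation that $N_0=0$ kills the $c_2$-contribution is precisely why the formula involves only $c_1$ and degree-two monomials in $S,T,L$.
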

\begin{lem}
In the case of a Calabi--Yau threefold, by applying $c_1=L=-K$, we have
\begin{align}\nonumber
\chi (Y_3) =-6 (10 K^2+S^2+5 S K+2 S T+8 K T+2 T^2) .
\end{align}
\end{lem}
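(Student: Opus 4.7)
The plan is to specialize the general elliptic-threefold formula from the immediately preceding lemma,
$$\chi(Y_3) = -6\bigl(-2 c_1 L + 12 L^2 + S^2 - 5 S L + 2 S T - 8 L T + 2 T^2\bigr),$$
by imposing the Calabi--Yau condition on the Weierstrass fibration. For a Weierstrass elliptic threefold $\pi:Y\to B$ built from the line bundle $\mathscr{L}$, the adjunction-type computation gives $K_Y = \pi^*(K_B + L)$, so $Y$ is Calabi--Yau precisely when $L = -K_B$. Writing $K$ for $K_B$, this reads $L = -K$; and since $c_1$ in the preceding lemma abbreviates $c_1(TB) = -K$, the two independent substitutions collapse into the single prescription $c_1 = L = -K$, as indicated in the statement.

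First I would substitute $c_1 \mapsto -K$ and $L \mapsto -K$ monomial by monomial inside the bracket. The individual transformations are
$-2 c_1 L \mapsto -2K^2$, $12 L^2 \mapsto 12 K^2$, $-5 S L \mapsto 5 S K$, and $-8 L T \mapsto 8 K T$, while $S^2$, $2 S T$, and $2 T^2$ are unchanged. Collecting the coefficient of $K^2$ gives $-2 + 12 = 10$, and the remaining terms match those in the claim verbatim, producing
$$\chi(Y_3) = -6\bigl(10 K^2 + S^2 + 5 S K + 2 S T + 8 K T + 2 T^2\bigr).$$

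There is no genuine obstacle: the argument is a single specialization of Theorem \ref{Thm:Eulerchar} via the preceding lemma. The only point that requires a moment's thought is the identification $L = -K$ (which uses the Calabi--Yau hypothesis on $Y$, not on $B$) together with the automatic identification $c_1(TB) = -K$; once both are in hand, the result is a mechanical consolidation of $K^2$-terms.
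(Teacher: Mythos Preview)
Your proposal is correct and follows exactly the approach the paper uses: the paper states that the Calabi--Yau threefold formula is obtained from the general threefold formula ``by direct expansion and specialization,'' and your monomial-by-monomial substitution $c_1 = L = -K$ into the preceding lemma carries this out verbatim.
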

\begin{lem}
The Euler characteristic for an elliptic fourfold is given by
\begin{align}\nonumber
\begin{split}
\chi (Y_4)=-6 & \left( -2 c_2 L-72 L^3+12 c_1 L^2+ c_1 S^2-5 c_1 S L+2 c_1 S T-8 c_1 L T+2 c_1 T^2   \right. \\
& \ \ \left. +S^3-15 S^2 L +6 S^2 T+54 S L^2-44 S L T+9 S T^2+84 L^2 T-34 L T^2+4 T^3\right) .
\end{split}
\end{align}
\end{lem}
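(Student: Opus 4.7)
The plan is to compute $\chi(Y)$ for one convenient crepant resolution and appeal to Batyrev's theorem on the crepant invariance of Betti numbers to conclude that the same formula applies to all eight crepant resolutions and all six collisions, since any two are connected by flops. Resolution I is a natural choice, since its three blowup centers are smooth complete intersections defined by regular sequences, so the tools of Section \ref{sec:intersectionBL} apply directly.

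First, I would record $c(TX_0) = \pi^* c(TB)\cdot (1+H)(1+H+2L)(1+H+3L)$ for the ambient projective bundle $X_0 = \mathbb{P}(\mathscr{O}_B\oplus \mathscr{L}^{\otimes 2}\oplus \mathscr{L}^{\otimes 3})$, together with the class $[Y_0]$ of the Weierstrass hypersurface in $X_0$. Then I would apply Theorem \ref{Thm:AluffiCBU} three times along the blowup sequence $(x,y,s\,|\,e_1)$, $(x,y,t\,|\,w_1)$, $(y,w_1\,|\,w_2)$ of Resolution I, obtaining $c(TX_3)$ and tracking the proper transform $[Y_3]$ of the Weierstrass hypersurface through each blowup by subtracting the appropriate multiplicity of the new exceptional divisor (these multiplicities are precisely what makes the resolution crepant). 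By the adjunction formula, $\chi(Y)$ equals the degree of $[Y_3]\,c(TX_3)/(1+[Y_3])$ on $X_3$.

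Second, I would expand this integrand as a formal power series whose coefficients pull back from the appropriate base at each stage, and apply Theorem \ref{Thm:Push} successively to eliminate the exceptional classes $W_2$, $W_1$, $E_1$, reducing the expression to a polynomial in $H$ with coefficients in $A_*(B)$. A final application of Theorem \ref{Thm:PushH} then performs the pushforward $X_0\to B$ and produces a rational expression in $S$, $T$, $L$, and $c(TB)$.

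The main obstacle is the algebraic bookkeeping: at each blowup several terms acquire apparent poles, and demonstrating that these poles cancel to produce the compact closed form with denominator $(1+S)(1+T)(-1-6L+2S+3T)$ requires careful simplification. Fortunately, as noted in Remark \ref{ref:rem}, Resolutions I--IV are defined by exactly the same blowup sequences as the \sug-model of \cite{SU2G2}; the resulting formal expression in $S$, $T$, $L$, $c(TB)$ must therefore coincide with the one established there, providing both a shortcut and a nontrivial cross-check on the computation.
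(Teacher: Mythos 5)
Your strategy reproduces the generating function of Theorem \ref{Thm:Eulerchar} (Aluffi's formula for $c(TX_3)$, adjunction $\chi(Y)=\int_{X_3}[Y]\,c(TX_3)/(1+[Y])$, then Theorems \ref{Thm:Push} and \ref{Thm:PushH} to push forward to $B$), and that is indeed how the underlying formula is obtained in \cite{SU2G2}, which the paper simply cites. But the statement you were asked to prove is the fourfold lemma, and the paper's entire proof of it is the step you omit: specialize to $\dim B=3$, write $c(TB)=1+c_1+c_2+c_3$, expand the rational generating function as a power series in $S$, $T$, $L$, and extract the degree-three component, which after multiplying by $-6$ gives exactly the displayed polynomial (the same expansion in degree two recovers the threefold lemma, a useful check). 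Your proposal ends with ``a rational expression in $S$, $T$, $L$, and $c(TB)$'' and never performs this expansion, so it establishes the theorem the lemma is derived from rather than the lemma itself; as written, the claimed fourfold formula is not actually reached.

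A secondary point: your appeal to Batyrev \cite{Batyrev.Betti} to cover ``all six collisions, since any two are connected by flops'' is not correct as stated. The six collisions are distinct Weierstrass models related by \emph{deformations} of coefficients, not by crepant birational maps, so Batyrev does not transfer $\chi$ between them; it only identifies the eight crepant resolutions of a \emph{fixed} Weierstrass model, which are connected by flops. The reason all six models share the same generating function is that the same three blowup centers have the same multiplicities along each $Y_0$, so the class of the proper transform (and hence the whole pushforward computation) is identical as a formal expression in $S$, $T$, $L$ --- this is what Remark \ref{ref:rem} and the proof via \cite{SU2G2} actually use. Since you compute Resolution I directly, this slip does not derail the calculation, but the justification for uniformity across the six models should be the identical blowup data, not Batyrev.
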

\begin{lem}
The same Calabi--Yau condition $c_1=L=-K$ is applied to get the Euler characteristic for a Calabi--Yau fourfold:
\begin{align}\nonumber
\begin{split}
\chi (Y_4)=-6 & \left(2 c_2 K+60 K^3+S^3+14 S^2 K+6 S^2 T+49 S K^2+42 S K T+9 S T^2  +76 K^2 T+32 K T^2+4 T^3 \right).
\end{split}
\end{align}
\end{lem}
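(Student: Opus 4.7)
The plan is to derive this formula as a direct specialization of the preceding lemma. The first step is to justify that the Calabi--Yau condition forces the simultaneous substitution $c_1=L=-K$. For an elliptic fibration $\pi: Y\to B$ realized by a Weierstrass model inside $\mathbb{P}[\mathscr{O}_B\oplus \mathscr{L}^{\otimes 2}\oplus \mathscr{L}^{\otimes 3}]$, the adjunction formula combined with the triviality $K_Y=0$ imposes $L=c_1(\mathscr{L})=-K_B$; at the same time $c_1=c_1(TB)=-K_B$ by definition of the canonical class. Thus both formal variables $c_1$ and $L$ may be replaced by the single class $-K$ in the generating polynomial of $\chi(Y_4)$ coming from the elliptic fourfold lemma above. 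By Remark \ref{ref:rem}, this substitution is legitimate as a purely formal operation on polynomials in $c_1, L, S, T$, independently of which of the eight crepant resolutions we work with.

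The second step is to execute the substitution monomial by monomial in the expression
\[
\chi(Y_4)=-6\bigl(-2 c_2 L-72 L^3+12 c_1 L^2+ c_1 S^2-5 c_1 S L+2 c_1 S T-8 c_1 L T+2 c_1 T^2 +S^3-15 S^2 L +6 S^2 T+54 S L^2-44 S L T+9 S T^2+84 L^2 T-34 L T^2+4 T^3\bigr),
\]
organizing the calculation by collecting terms of each type in $K, S, T$. For instance, the $K^3$ coefficient comes from $-72 L^3+12 c_1 L^2\mapsto 72 K^3-12 K^3=60 K^3$; the $S^2 K$ coefficient assembles from $c_1 S^2$ and $-15 S^2 L$ as $-1+15=14$; the $S K^2$ coefficient from $-5 c_1 S L$ and $54 S L^2$ as $-5+54=49$; and the mixed coefficients of $SKT$, $K^2T$, $KT^2$ arise from $-2+44=42$, $-8+84=76$, and $-2+34=32$ respectively. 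The monomials $S^3$, $6 S^2 T$, $9 S T^2$, $4 T^3$, not involving $c_1$ or $L$, are unaffected, while $-2 c_2 L\mapsto 2 c_2 K$.

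The main (minor) obstacle is purely bookkeeping: ensuring correct signs when substituting $-K$ into monomials of odd total degree in $c_1$ and $L$. Grouping the pushforward by the combined $(c_1,L)$-degree before substitution, and reading off the coefficients of each monomial in $K,S,T$ as above, yields exactly the polynomial
\[
\chi(Y_4)=-6\bigl(2 c_2 K+60 K^3+S^3+14 S^2 K+6 S^2 T+49 S K^2+42 S K T+9 S T^2 +76 K^2 T+32 K T^2+4 T^3\bigr),
\]
which completes the proof.
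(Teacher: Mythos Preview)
Your proof is correct and follows essentially the same approach as the paper, which simply states that the lemma follows ``by direct expansion and specialization.'' You have supplied the explicit monomial-by-monomial bookkeeping that the paper omits, and your arithmetic checks out. One minor remark: your invocation of Remark~\ref{ref:rem} is not really needed here---that remark concerns the coincidence of Euler characteristics between the \susu- and \sug-models, whereas the legitimacy of the formal substitution $c_1=L=-K$ follows directly from the fact that the preceding lemma already expresses $\chi(Y_4)$ as a polynomial in the Chow ring of $B$, valid for any base.
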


\begin{thm}\label{Thm:Hodge}
In the Calabi--Yau case, the Hodge numbers of an \susu-model 
 given by the crepant resolution of a Weierstrass model given in Section \ref{sec:crepres} are
$$
h^{1,1}(Y)=14-K^2, \quad h^{2,1}(Y)=29 K^2+15 K S+24 K T+3 S^2+6 S T+6 T^2+14.
$$
\end{thm}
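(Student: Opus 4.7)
The plan is to compute $h^{1,1}(Y)$ directly via the Shioda--Tate--Wazir formula, and then invoke the Calabi--Yau threefold identity $\chi(Y) = 2\bigl(h^{1,1}(Y) - h^{2,1}(Y)\bigr)$ together with the Euler characteristic formula already established in the preceding lemma.

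First, I would compute $h^{1,1}(Y)$. Since an \susu-model has trivial Mordell--Weil group by definition, Shioda--Tate--Wazir reduces to
\[
h^{1,1}(Y) = h^{1,1}(B) + 1 + \mathrm{rk}(\mathfrak{g}),
\]
where the ``$1$'' accounts for the zero section, and $\mathrm{rk}(\mathfrak{g}) = \mathrm{rk}(\mathrm{A}_1) + \mathrm{rk}(\mathrm{A}_2) = 1+2 = 3$ counts the fibral divisors orthogonal to the zero section (one over $S$, two over $T$). The Calabi--Yau condition forces the base to be a smooth rational surface, so Noether's formula gives $h^{1,1}(B) = 10 - K^2$, and summing these contributions yields $h^{1,1}(Y) = 14 - K^2$.

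Next, I would apply
\[
h^{2,1}(Y) = h^{1,1}(Y) - \tfrac{1}{2}\chi(Y),
\]
substituting the Euler characteristic $\chi(Y) = -6\bigl(10 K^2 + S^2 + 5 S K + 2 S T + 8 K T + 2 T^2\bigr)$ from the Calabi--Yau threefold specialization of Theorem \ref{Thm:Eulerchar}. A direct algebraic expansion then produces
\[
h^{2,1}(Y) = 14 - K^2 + 30 K^2 + 3 S^2 + 15 K S + 6 S T + 24 K T + 6 T^2,
\]
which rearranges to the stated formula.

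The only real subtlety, rather than any calculational obstacle, is verifying the input to Shioda--Tate--Wazir: one must check that the fibral divisors over $S$ and $T$ contribute exactly $1$ and $2$ independent $(1,1)$-classes beyond the pullback from the base, uniformly across all six collisions and all eight crepant resolutions. This amounts to confirming that the generic fibers over $S$ and $T$ in each resolution of Section \ref{sec:crepres} have dual graphs $\widetilde{\mathrm{A}}_1$ and $\widetilde{\mathrm{A}}_2$, which is built into the definition of an \susu-model. As recorded in Remark \ref{ref:rem}, the resolutions I--IV agree with those used for the \sug-model, and since $\mathrm{rk}(\mathrm{G}_2) = 2 = \mathrm{rk}(\mathrm{A}_2)$, the Hodge number computation from \cite{SU2G2} transports verbatim to the \susu-model and covers all 48 crepant resolutions simultaneously.
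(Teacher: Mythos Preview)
Your proposal is correct and follows the standard route that the paper defers to: the paper's own proof is simply a citation to \cite[Theorem 2.10]{SU2G2}, and your argument (Shioda--Tate--Wazir for $h^{1,1}$, then the relation $\chi=2(h^{1,1}-h^{2,1})$ combined with the Euler characteristic lemma) is precisely the computation underlying that reference, as anticipated by Remark~\ref{ref:rem}. Your closing paragraph correctly identifies why the result transports from the \sug-model.
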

\begin{proof}
 See  \cite[Theorem 2.10]{SU2G2}.
\end{proof}

\subsection{Triple intersection numbers} \label{sec:triple}
Let $Y$ be a crepant resolution of an \susu-model defined by one of the crepant resolutions $f:Y\to Y_0$ given in Section \ref{sec:crepres}.
Assuming that $Y$ is a threefold, the triple intersection polynomial of $Y$ is a polynomial containing the divisors $(D_a\cdot D_b \cdot D_c)\cap [Y]$. 
We express a triple intersection polynomial of the \susu-model as a polynomial in $\psi_0$, $\psi_1$, $\phi_0$, $\phi_1$,  and $\phi_2$ that couples respectively with the fibral divisors $D_0^{\text{s}}$, $D_1^{\text{s}}$, $D_0^t$, $D_1^t$, and $D_2^t$. 
The pushforward is expressed in the base by pushing forward to the Chow ring of $X_0$ and then to the base $B$. 
We recall that $\pi:X_0\to B$ is the projective bundle in which the Weierstrass model is defined. Then,
$$
\mathscr{F}_{trip}=
\int_Y\Big[\Big(
\psi_0 D_0^{\text{s}} +\psi_1 D_1^{\text{s}} +\phi_0 D_0^t +\phi_1 D_1^t +\phi_2 D_2^t  
\Big)^3\Big]=
\int_B\pi_* f_* \Big[\Big(
\psi_0 D_0^{\text{s}} +\psi_1 D_1^{\text{s}} +\phi_0 D_0^t +\phi_1 D_1^t +\phi_2 D_2^t  
\Big)^3\Big].
$$
Once the classes of the fibral divisors are determined, all that is left is to compute the pushforward to the base $B$ using the pushforward theorems of Section \ref{sec:intersectionBL}.
\begin{thm}\label{Thm:TripleInt}
The triple intersection polynomial of an \susu-model defined by the  crepant resolutions in Section \ref{sec:crepres}  is 
\quad
\begin{itemize}
\item Resolution I:
\begin{align}\nonumber
\begin{aligned}
\mathscr{F}^{\text{(I)}}_{trip}=& -2 S(2 L+S) \psi _1^3 -6 S T \psi _1 \left(\phi _1^2-\phi _2 \phi _1+\phi _2^2\right) -4T(T-L)\phi _1^3 \\
& -3T(5 L-S-2 T)\phi _1^2\phi _2 -3T(-4 L+S+T) \phi _1\phi _2^2 -T(5 L-2 S+T)\phi _2^3 \\
& +4 S(L-S) \psi _0^3+6 S (S-2 L) \psi _0^2 \psi _1 +12 L S \psi _0\psi _1^2 \\
& -2 T \phi _0^3 (-2 L+S+2 T)+ \phi _0^2 \left(3 T \left(\phi _1+\phi _2\right) (-2 L+S+T)-6 S T \psi _1\right) \\
& +3 T \phi _0 \left(2 \phi _1 \left(L \phi _2+S \psi _1\right)+\phi _1^2 (L-S)+\phi _2^2 (L-S)-2 S \left(\psi _0-\psi _1\right)^2+2 S \psi _1 \phi _2\right) 
\end{aligned}
\end{align}
\item Resolution  II: 
\begin{align}\nonumber
\begin{aligned}
\mathscr{F}^{\text{(II)}}_{trip}=& -S (4 L+2 S-T)\psi_1^3 +3 T (-5 L+S+2 T)\phi_1^2\phi_2 -3 T(-4 L+S+T) \phi_1\phi_2^2 +4 T(L-T)\phi_1^3 \\
& +T (-5 L+S-T)\phi_2^3 -3 S T \psi_1^2 \phi_2 -3ST \psi_1 \left(2\phi_1^2-2\phi_1 \phi_2+\phi_2^2\right)\\
&-S (-4 L + 4 S + T)\psi_0^3 + 3 S (-4 L + 2 S + T) \psi_0^2 \psi_1 + 3 S (4 L - T) \psi_0 \psi_1^2 -T (-4 L + S + 4 T)\phi _ 0^3 \\
& +3 T \phi_0^2 \left(-S (\psi_0 + \psi_1 ) - (2 L-T)\phi_2 \right)+3 T\phi_0\left (L\phi_2^2 - S\left (\psi_0 - \psi_1 \right)^2 + 2 S\psi_0\phi_2 \right)\\
& +3 T \phi 0^2 \phi 1 (-2 L+S+T)+3 T \phi 0 \phi 1 \left(\phi 1 (L-S)+2 L \phi 2+2 S \psi 1\right)-3 S T \psi 0 \phi 2 \left(\psi 0-2 \psi 1+\phi 2\right).
\end{aligned}
\end{align}
\item Resolution III:
\begin{align}\nonumber
\begin{aligned}
\mathscr{F}^{\text{(III)}}_{trip}=& -2S(2 L+S-T)\psi _1^3 -T(-4 L+S+4 T)\phi _1^3 -3T(5 L-S-2 T)\phi _1^2 \phi _2 \\
& -3T(-4 L+S+T)\phi _1\phi _2^2 -T (5 L-S+T)\phi _2^3 -3 S T \psi _1 \left(\phi _1-\phi _2\right)^2 -3 S T \psi _1^2 \left(\phi _1+\phi _2\right) \\
&-2 S(-2 L+2 S+T) \psi _0^3 +\psi _0^2 \left(6 S \psi _1 (-2 L+S+T)-3 S T \left(\phi _1+\phi _2\right)\right) \\
&+\psi _0 \left(-6 S \psi _1^2 (T-2 L)+6 S T \psi _1 \left(\phi _1+\phi _2\right)-3 S T \left(\phi _1^2+\phi _2^2\right)\right)+4 T \phi _0^3 (L-T) \\
&+\phi _0^2 \left(3 T \left(\phi _1+\phi _2\right) (T-2 L)-6 S T \psi _0\right)+3 T \phi _0 \left(\phi _1+\phi _2\right) \left(L \left(\phi _1+\phi _2\right)+2 S \psi _0\right)
\end{aligned}
\end{align}
\item Resolution IV:
\begin{align}\nonumber
\begin{aligned}
\mathscr{F}^{\text{(IV)}}_{trip}=& \ S (-4 L-2 S+3 T)\psi _1^3  -4T(T-L) \phi _1^3 -3T(5 L-2 T) \phi _1^2 \phi _2 \\
& -3T(T-4 L)\phi _1\phi _2^2-T(5 L+T)\phi _2^3-6 S T \psi _1^2 \phi _2  \\
&+\psi _0^2 \left(3 S \psi _1 (-4 L+2 S+3 T)-6 S T \phi _2\right) +3 S \psi _0 \left(\psi _1^2 (4 L-3 T)+4 T \psi _1 \phi _2\right) \\
& -6 ST \psi _0 \left(\phi _1^2-\phi _2 \phi _1+\phi _2^2\right)+S (4 L-4 S-3 T)\psi _0^3 +4 T \phi _0^3 (L-T) \\
& +\phi _0^2 \left(3 T \left(\phi _1+\phi _2\right) (T-2 L)-6 S T \psi _0\right) +3 T \phi _0 \left(\phi _1+\phi _2\right) \left(L \left(\phi _1+\phi _2\right)+2 S \psi _0\right)
\end{aligned}
\end{align}
\end{itemize}
The triple intersection polynomials for the resolutions $\mathrm{I}'$, $\mathrm{II}'$, $\mathrm{III}'$, and $\mathrm{IV}'$ are respectively derived from those of the resolutions  $\mathrm{I}$, $\mathrm{II}$, $\mathrm{III}$, and $\mathrm{IV}$ by the involution $\phi_1 \leftrightarrow \phi_2$.
\end{thm}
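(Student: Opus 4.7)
The plan is to carry out, for each of the four resolutions I--IV in turn, the three-step calculation: (i) write each fibral divisor $D_i^s$, $D_j^t$ as an explicit class in the Chow ring of the resolved space $Y$; (ii) expand the cube of the linear combination $\psi_0 D_0^s + \psi_1 D_1^s + \phi_0 D_0^t + \phi_1 D_1^t + \phi_2 D_2^t$ by multilinearity into a sum of monomials in these classes together with the pullbacks of $S$, $T$, $H$, and the Chern classes of $TB$; (iii) push these monomials down to $B$ using the pushforward theorems of Section \ref{sec:intersectionBL}.

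For step (i), I would identify the components of the generic fiber over $S$ (resp.\ $T$) by inspecting how the singular Weierstrass model behaves after each blowup in \eqref{Eq:8Resolutions}. One component of the fiber over $S$ is the proper transform of the vanishing locus of $s$ (whose class is the pullback $[S]$ minus the exceptional contributions picked up along the blowup sequence), and the other is the exceptional divisor $E_1$. Similarly, the three components of the fiber over $T$ are read off as the proper transform of $V(t)$ together with the exceptionals $W_1$ and $W_2$, each corrected for multiplicities in the total transforms. In Resolution~II the second blowup center is not smooth, but it is still a regular embedding, so the formalism applies without modification.

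For step (iii), Theorem \ref{Thm:Push} is applied three times, once per blowup, to reduce the integrand to a polynomial in the Chow ring of the ambient projective bundle $X_0$ with coefficients pulled back from $B$. Theorem \ref{Thm:PushH} then turns this into a polynomial in $L$, $S$, $T$ and $c(TB)$ by replacing $H$ with $-2L$ and $-3L$ and collecting the constant term. Collecting coefficients of the monomials in $\psi_0,\psi_1,\phi_0,\phi_1,\phi_2$ yields the formulas stated for resolutions I--IV.

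For the primed resolutions, the inversion $\sigma\colon[x:y:z]\mapsto[-q:x:z]$ on the Weierstrass model induces a pseudo-isomorphism between each unprimed resolution and its primed counterpart. Fiberwise $\sigma$ swaps the two non-affine nodes of the $\widetilde{\mathrm{A}}_2$ fiber while fixing $D_0^s$, $D_1^s$, and $D_0^t$; since triple intersection numbers of divisors are preserved by isomorphisms in codimension one, the triple intersection polynomial of the primed resolution is obtained from that of the unprimed resolution by the substitution $\phi_1\leftrightarrow\phi_2$. The principal obstacle is step (i): one must carefully read off the multiplicity with which each fibral component appears in the total transform of $V(s)$ and $V(t)$ after each blowup, since any error there propagates through every subsequent pushforward; once those classes are in hand, the rest of the computation is mechanical polynomial algebra that can be verified with a computer algebra system.
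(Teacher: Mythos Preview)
Your plan matches the paper's approach closely: write the fibral divisor classes in terms of the exceptional divisors, expand the cube, and push forward via Theorems~\ref{Thm:Push} and~\ref{Thm:PushH}. There is, however, one step you have glossed over that the paper makes explicit and that is essential for the computation to be well-posed. The pushforward theorems of Section~\ref{sec:intersectionBL} apply to the ambient spaces $X_3\to X_2\to X_1\to X_0\to B$, not to $Y$ itself. To convert $\int_Y(\cdots)^3$ into something those theorems can handle, you must first cap with the class of the hypersurface $Y$ inside $X_3$: for Resolution~I this is $[Y]=3H+6L-2E_1-2W_1-W_2$, the proper transform of the Weierstrass class $3H+6L$ after the three blowups. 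Your step~(ii) mentions monomials in $S$, $T$, $H$ without explaining why $H$ enters; it enters precisely through this hypersurface class. (Also, the Chern classes of $TB$ play no role in a pure triple-intersection computation---they are relevant for the Euler characteristic, not here.)

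For the primed resolutions, your argument via the Mordell--Weil involution $\sigma$ is correct and is in fact a cleaner justification than what the paper's proof spells out: the paper establishes in Section~\ref{sec:crepres} that $\sigma$ induces pseudo-isomorphisms between primed and unprimed resolutions, but in the proof of the theorem it simply says the other cases ``follow the same pattern,'' leaving it to the reader whether this means direct recomputation or invoking $\sigma$. Your observation that $\sigma$ fixes $D_0^s$, $D_1^s$, $D_0^t$ and swaps $D_1^t\leftrightarrow D_2^t$, together with invariance of divisorial triple intersections under isomorphisms in codimension one, is exactly what is needed.
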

\begin{proof}
We give the proof for the case of Resolution I discussed in detail in Section \ref{Sec:ResI}, the other cases follow the same pattern. 
$$
\begin{aligned}
\mathscr{F}_{trip} &=
\int_Y\Big[{\Big(
\psi_0 D_0^{\text{s}} +\psi_1 D_1^{\text{s}} +\phi_0 D_0^t +\phi_1 D_1^t +\phi_2 D_2^t  
\Big)^3}\Big]\\
&=\int_{X_3}\Big[\Big(
\psi_0 D_0^{\text{s}} +\psi_1 D_1^{\text{s}} +\phi_0 D_0^t +\phi_1 D_1^t +\phi_2 D_2^t  
\Big)^3(3H+6L-2E_1-2W_1-W_2)\Big] \\
&=\int_{X_0}f_{1*}f_{2*} f_{3*}\Big[\Big(
\psi_0 D_0^{\text{s}} +\psi_1 D_1^{\text{s}} +\phi_0 D_0^t +\phi_1 D_1^t +\phi_2 D_2^t  
\Big)^3(3H+6L-2E_1-2W_1-W_2)\Big] \\
&=\int_B\pi_* f_{1*}f_{2*} f_{3*} \Big[\Big(
\psi_0 D_0^{\text{s}} +\psi_1 D_1^{\text{s}} +\phi_0 D_0^t +\phi_1 D_1^t +\phi_2 D_2^t  
\Big)^3   (3H+6L-2E_1-2W_1-W_2)\Big].
\end{aligned}
$$
The classes of the fibral divisors in the Chow ring of $X_3$ are  
$$
[D_0^{\text{s}}]=S-E_1 , \quad [D_1^{\text{s}}]=E_1,\quad  [D_0^t]=T-W_1, \quad [D_1^t]=W_1-W_2, \quad  [D_2^t]=W_2. 
$$
Denoting by $M$ an arbitrary divisor in the  class of the Chow ring of the base $B$, the nonzero intersection numbers of the products of $M$, $H$, $E_1$, $W_1$, and $W_2$ are  
$$
\begin{aligned}
& \int_Y E_1^3=-2 S (2 L+S), \quad \int_Y W_1^3=-2 T (2 L-S+T),  \quad \int_Y M W_1^2=-2 TM,\\
&    \int_Y W_2^3=-T (5 L-2 S+T),\ \  \int_Y W_1^2 E_1= -2 S T, \  \int_Y W_1^2 W_2=T (-2 L+S-T),\ \int_Y M E_1^2=-2 SM,\\
& \quad \int_Y W_2^2 E_1=-2 S T, \quad \int_Y W_2^2 W_1= T (-L+S-2 T),\quad \int_Y E_1 W_1 W_2 =-ST, \\
&  \int_Y H M=3  M,\quad \int_Y H^2 M=-9 L M,\quad \int_Y H^3=27L^2, \quad \int_Y M W_2^2 =-2 TM,
\end{aligned}
$$
 where the right-hand-side of each equality is computed in the Chow ring of the base $B$, the pushforward for $f_{i*}$ ($i=1,2,3$)   are obtained via Theorem 
\ref{Thm:Push}, the pushforward for $\pi_*$ uses Theorem \ref{Thm:PushH}.
The triple intersection numbers of the fibral divisors follow from these by simple linearity. 
\end{proof}

The triple intersection polynomials computed in Theorem \ref{Thm:TripleInt} are very different from each other in chambers I, II, III, and IV. 
In chamber III, we get all possible ten homogeneous monomials in $\psi_1$, $\phi_1$ and $\phi_2$. 
In chamber II, we get nine of them ($\psi_1^2\phi_1$ is missing); in chamber I, we are missing  two ($\psi^2_1 \phi_1$ and  $\psi^2_1\phi_2$); in chamber IV, we are missing four (
$\psi^2_1\phi_1$, $\psi_1 \phi_1^2$, $\psi_1\phi_2^2$ and $\psi_1 \phi_1 \phi_2$).  
These facts become handy when comparing the triple intersection polynomials with the prepotentials in Section \ref{sec:5d}. 


\subsection{Non-Kodaira fibers} \label{sec:NK}

The \susu-models have a particularly rich fiber structure with various types of non-Kodaira fibers.
We have  identified a total of  thirteen non-Kodaira fibers over points in codimension-two or three in the base, which are all summarized in Table \ref{fig:NKlist}. 
The fiber structure of each models studied in this paper is described in Appendix  \ref{sec:FiberEnhancement}. 
 Two of these non-Kodaira fibers appear for the first time in the literature. They are contraction of the fiber I$_2^*$ appearing over codimension codimension-three points in the base in the crepant resolution II or IV  of the collision I$_2$+IV$^s$. They are related to the sequence A$_1$+A$_2\rightarrow$D$_5\rightarrow$ D$_6$. All the non-Kodaira fibers obtained at the collision of $S$ and $T$ can be derived by  removing certain nodes on the Kodaira fibers  I$_0^*$, I$_1^*$, I$_2^*$, IV$^*$, or III$^*$. 
Away from $S\cap T$, there is also a non-Kodaira fiber that appears in the specialization of the fiber of type III. 
The phenomena that non-Kodaira fibers of a flat fibration are contractions of Kodaira fibers has been noticed by Miranda in the particular case of his regularization of elliptic threefolds \cite{Miranda.smooth} and also in generalizations of Miranda's models to $n$-folds as studied by Szydlo \cite{Szydlo.Thesis}.
Cattaneo  argues in \cite{Cattaneo} that this is always the case for flat elliptic threefolds that are crepant resolutions of Weierstrass models. 
The  study of crepant resolutions of singular Weierstrass models and the geography of their flops is an essential endeavor in stringy geometry and has been studied for many models producing in this way most  of the known non-Kodaira fibers \cite{Miranda.smooth,Cattaneo,Szydlo.Thesis,EY,Morrison:2011mb,EFY,Lawrie:2012gg,ESY1,ESY2,Tatar:2012tm,Krause:2011xj,SU2G2,EKY,EKY2,SO4,G2,F4,EP1}.

\begin{table}[htb]
\begin{center}
\def\arraystretch{1.5}
 \scalebox{.7}{
\begin{tabular}{|c|ccc|}
\hline
\raisebox{1 cm}{\scalebox{1.5}{I$_0^*$}}
& \includegraphics[trim=-2cm  -0.5cm 0 0cm,scale=.8]{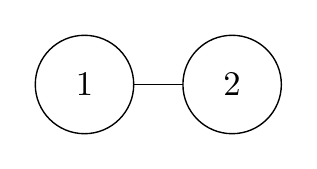} & 
\includegraphics[trim=-2cm  -0.5cm 0 0cm,scale=.8]{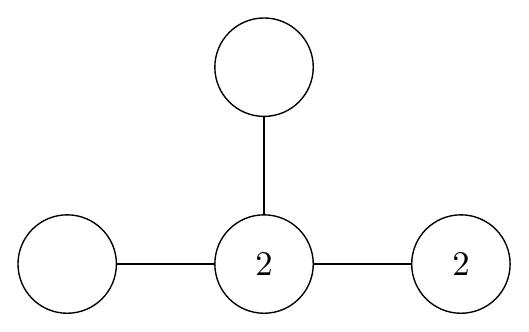} &  \\
\hline 
\raisebox{0 cm}{\scalebox{1.5}{I$_1^*$}}
& \includegraphics[trim=-0.3cm  -0.7cm 0 0cm,scale=.7]{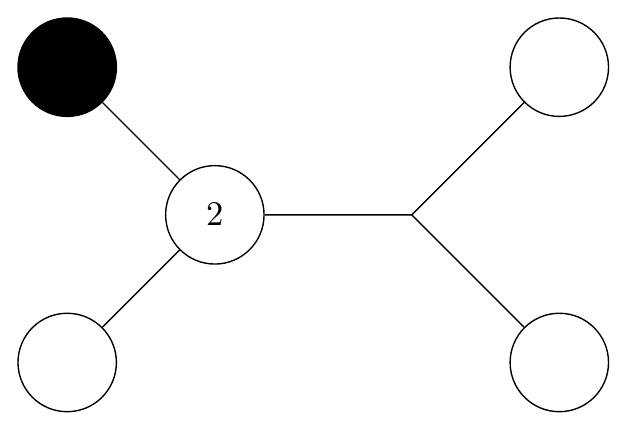} & 
\includegraphics[trim=0cm  -0.5cm 0 0cm,scale=.8]{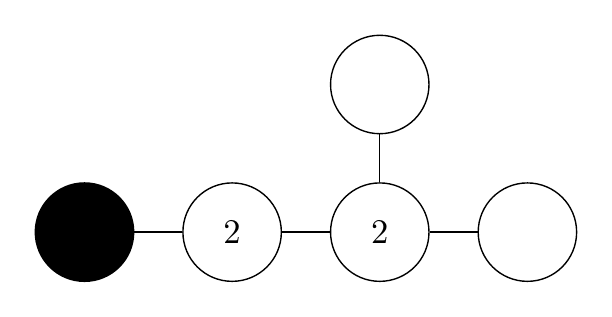} & \\
& 
\includegraphics[trim=0cm  -0.2cm 0 0cm,scale=.7]{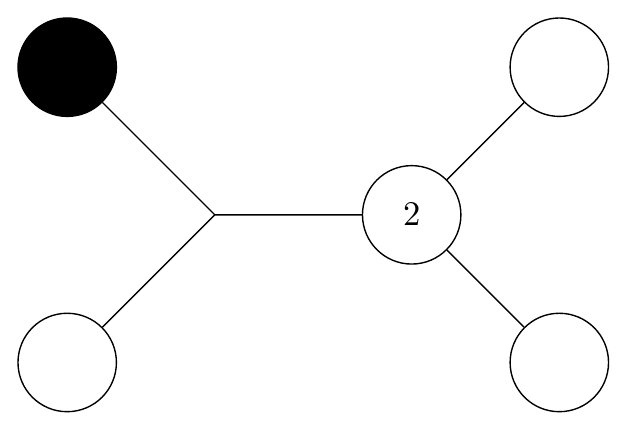} & 
\includegraphics[trim=0cm  -0.2cm 0 0cm,scale=.8]{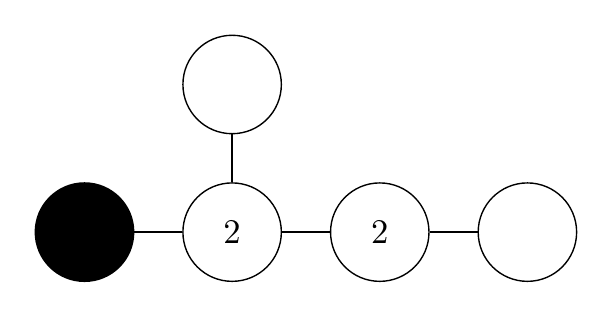} &
\\ 
\hline
\raisebox{1 cm}{\scalebox{1.5}{I$_2^*$}}
&\includegraphics[trim=-0.3cm  -0.5cm 0 0cm,scale=.7]{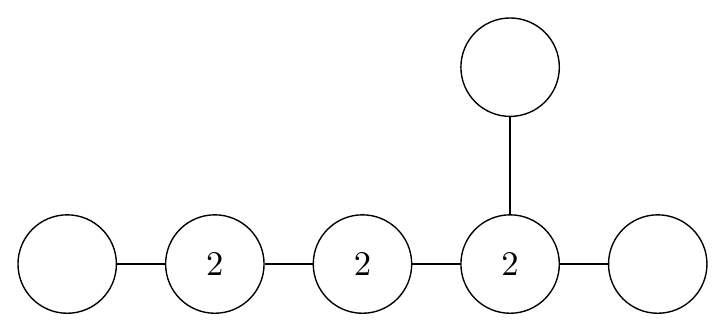} & 
\includegraphics[trim=-0.3cm  -0.5cm 0 0cm,scale=.6]{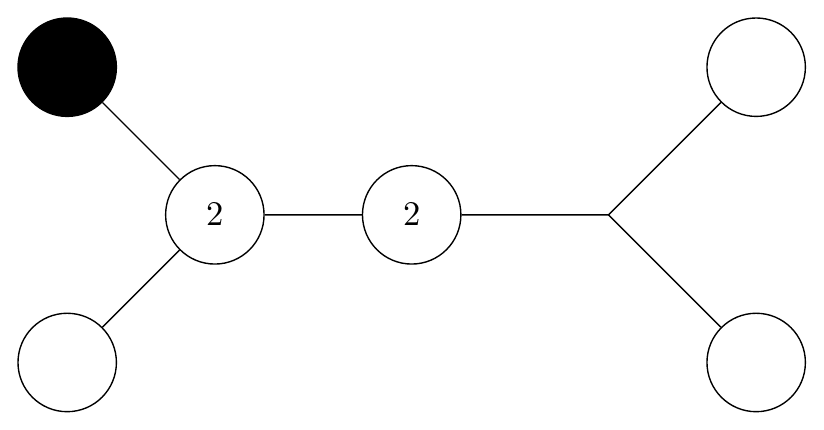}  &
\includegraphics[trim=-0.3cm  -0.5cm 0 0cm,scale=.6]{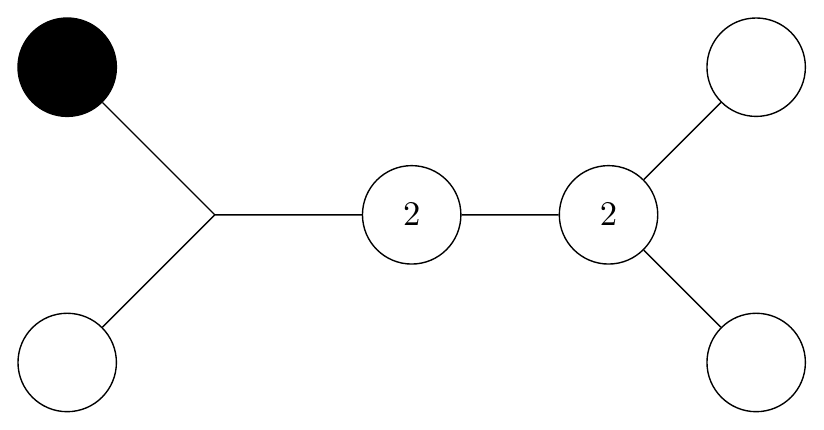} 
\\
\hline
\raisebox{1 cm}{\scalebox{1.5}{IV$^*$}} & 
\includegraphics[trim=0cm  -0.5cm 0 0cm,scale=.8]{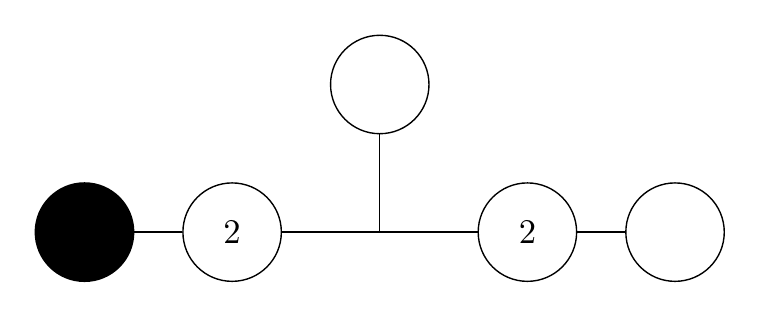} & 
\includegraphics[trim=0cm  -1.2cm 0 0cm,scale=.8]{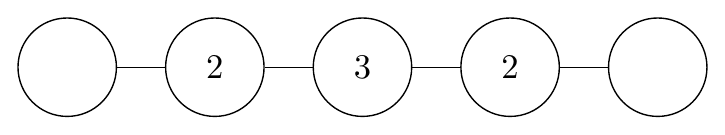} & 
\includegraphics[trim=0cm  -0.5cm 0 0cm,scale=.8]{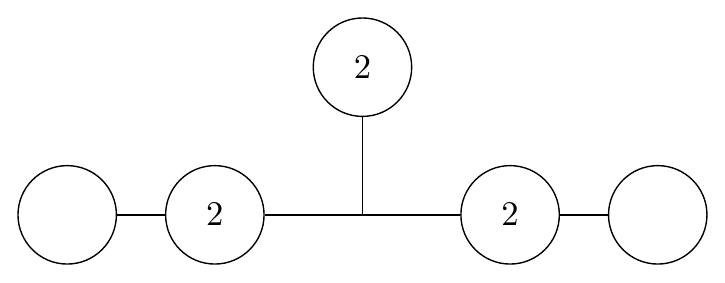} \\
\hline 
\raisebox{1 cm}{\scalebox{1.5}{III$^*$}} & & \includegraphics[trim=0cm  -0.5cm 0 0cm,scale=.8]{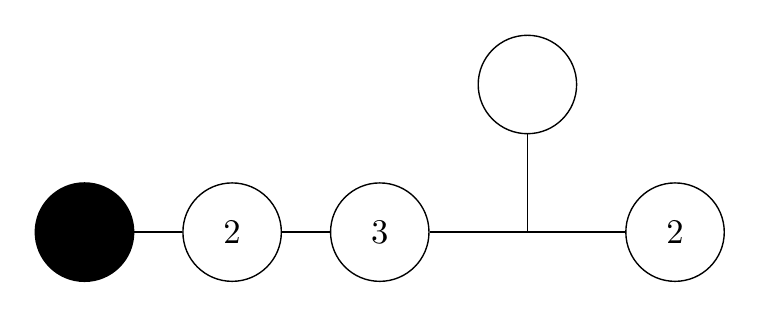} &\\
\hline
\end{tabular}
}
\end{center}
\caption{These are the non-Kodaira fibers observed for the SU$(2)\times$SU$(3)$-models organized by the type of the resulting Kodaira fibers with contracted nodes. When  there is a possible ambiguity, the node that touches the zero section is colored in black. The last two fibers in the row of I$_2^*$ are observed for the first time. }
\label{fig:NKlist}
\end{table}

\section{Hyperplane arrangements and geography of flops}\label{sec:RepFlops}

In Section \ref{sec:RepWeights}, we discuss matter representations of the \susu-model in five and six-dimensional theories with eight supercharges from Katz-Vafa method and confirm its perfect match with the representations computed from the geometry. 

In Section \ref{sec:hyper}, we study the hyperplane arrangement I$({\text{A}}_1\oplus {\text{A}}_2,(\mathbf{1},\mathbf{3})\oplus (\mathbf{2},\mathbf{3}))$. The full representation of the \susu-model is $\bold{R}=(\bold{2},\bold{1})\oplus(\bold{1},\bold{3})\oplus(\bold{1},\bold{\bar{3}})\oplus(\bold{2},\bold{3})\oplus(\bold{2},\bold{\bar{3}})\oplus (\bold{3},\bold{1})\oplus (\bold{1},\bold{8})$. However,  the hyperplane arrangement  I$({\text{A}}_1\oplus {\text{A}}_2,\mathbf{R})$ has the same chamber structure as I$({\text{A}}_1\oplus {\text{A}}_2,(\mathbf{1},\mathbf{3})\oplus (\mathbf{2},\mathbf{3}))$
since the adjoints only define the exterior walls of the dual fundamental Weyl chamber, taking care of the redundancy, and noticing that $((\bold{2},\bold{1}))$ does not contribute interior walls, it is sufficient to consider $(\mathbf{1},\mathbf{3})\oplus (\mathbf{2},\mathbf{3})$ only.

In Section \ref{sec:Corres}, we match the chamber of the hyperplane arrangement I$({\text{A}}_1\oplus {\text{A}}_2,(\mathbf{1},\mathbf{3})\oplus (\mathbf{2},\mathbf{3}))$ with the crepant resolutions as inspired by their interpretation as Coulomb branches of a five-dimensional gauge theory discussed in Section \ref{sec:5d}. 

\subsection{Geometric weights and matter representations}\label{sec:RepWeights}

An  important geometric data is the representation $\mathbf{R}$ under which the matter fields transform. This representation is characterized by its weights, which are computed geometrically by intersection numbers of fibral divisors with vertical curves over codimension-two points. We do not add by hand the chiral conjugates of representations; all representations are seen explicitly by their weights via fibers given by the geometry. 
Starting from a collection of weights, we determine the representation by using the notion of \emph{saturated set of weights} borrowed from Bourbaki. See \cite{F4,G2}  for more information.

The representation $\bold{R}$ that we obtain from purely geometric considerations is consistent with what one would indirectly guess using the Katz-Vafa method \cite{Katz:1996xe}. But the Katz-Vafa method can fail for certain models such as the SU($2$)$\times$ G$_2$ model, while the method of saturations of weight still provides the correct representation $\mathbf{R}$   as discussed in \cite{SU2G2}. 
The sector of $\bold{R}$ that does not contain fundamental representations can be derived from the branching rule of the adjoint representation of  maximal embedding 
$\text{A}_1\oplus\text{A}_2\to \text{A}_4$ while the fundamental  representations follow from the branching rule of the adjoints of the maximal embedding   $\text{A}_1\to \text{A}_2$ and  $\text{A}_2\to \text{A}_3$:
\begin{align}
\begin{cases}
  \mathbf{24}&\longrightarrow (\bold{3},\bold{1})\oplus (\bold{1},\bold{8})\oplus(\bold{2},\bold{3})\oplus(\bold{2},\bold{\bar{3}})\oplus (\bold{1},\bold{\bar{1}}),  \\
 \bf{15}&\longrightarrow \bf{8}\oplus \bf{3} \oplus \bf{\bar{3}}\oplus\bf{1}, \quad \quad    \\
 \bf{8}&\longrightarrow \bf{3}\oplus \bf{2} \oplus \bf{\bar{2}}\oplus {1}.
\end{cases}
\end{align}
A frozen representation is a representation $\mathbf{\rho}$ whose weights are carried by certain curves of the elliptic fibration over codimension-two points in the base, but no hypermultiplet is charged under $\mathbf{\rho}$ \cite{F4,SU2G2}.  
When compactified to a five-dimensional or six-dimensional supergravity theory with eight supercharges, matter in these adjoint representations are frozen when the curves supporting the components of the gauge group are smooth rational curves since the number of adjoint hypermultiplets is given by the arithmetic genus of the curve supporting the gauge group.

\subsection{Hyperplane arrangement} \label{sec:hyper}
We consider the  semi-simple Lie algebra $$\mathfrak{g}=\text{A}_1\oplus \text{A}_2.$$
An irreducible representation of $\text{A}_1\oplus \text{A}_2$ is the tensor product $\bf{r}_1\otimes \bf{r}_2$, where $\bf{r}_1$ and $\bf{r}_2$ are respectively irreducible representations of A$_1$  and A$_2$.  Following a common convention in physics, we denote a  representation of A$_n$ by its dimension in bold character. 
The weights are denoted by $\varpi^I_j$ where the upper index I denotes the representation $\bf{R}_I$ and the lower index $j$ denotes a particular  weight of the representation $\bf{R}_I$. 
A weight of a representation of $\text{A}_1\oplus \text{A}_2$ is denoted by a triple $(a;b,c)$ such that $(a)$ is a weight of A$_1$ and $(b,c)$ is a weight of A$_2$, all in the basis of fundamental weights. 
We use the same notation for coroots. 
 Let $\phi=(\psi_1; \phi_1, \phi_2)$ be a vector of the coroot space of $\text{A}_1 \oplus \text{A}_2$   in the basis of the fundamental coroots.  
 Each weight $\varpi$ defines a linear form $\phi\cdot \varpi$ defined by the natural evaluation on a coroot. We recall that fundamental coroots are dual to fundamental weights. Hence, with our choice of conventions, 
 $\phi\cdot \varpi$ is the usual Euclidian scalar product.

To study the hyperplane arrangement, it is not necessary to consider the full representation  $\bold{R}=(\bold{2},\bold{1})\oplus(\bold{1},\bold{3})\oplus(\bold{1},\bold{\bar{3}})\oplus(\bold{2},\bold{3})\oplus(\bold{2},\bold{\bar{3}})\oplus (\bold{3},\bold{1})\oplus (\bold{1},\bold{8})$  since the adjoints only define the dual fundamental Weyl chamber and $\bold{3}$ and $\bold{\bar{3}}$ defer only by a sign. Thus, we use without loss of generality the representation $\bold{R}$ as:
\begin{equation}
\mathbf{R}=(\bold{2},\bold{1})\oplus(\bold{1},\bold{3})\oplus(\bold{2},\bold{3}),
\end{equation}
 which is the sum of the fundamental  of A$_1$, the fundamental of A$_2$, and the bifundamental representations of A$_1$ and A$_2$. We study the arrangement of hyperplanes  perpendicular to the weights of the representation $\mathbf{R}$ inside the dual  fundamental Weyl chamber of $\text{A}_1 \oplus \text{A}_2$.

The open dual fundamental Weyl chamber is the half cone defined by the  the positivity of the linear form induced by the simple roots: 
\begin{equation}\label{eq:DFW}
\psi_1>0, \quad 2\phi_1-\phi_2>0, \quad -\phi_1+2\phi_2>0. 
\end{equation}
The weight system of the  representation \textbf{2} of A$_1$ and the representation \textbf{3} of A$_2$ are 
\begin{align}
\bf{2}:&\quad \varpi^{\bf{2}}_1=1,\quad \varpi^{\bf{2}}_2=-1\\
\bf{3}:&\quad  \varpi^{\bf{3}}_1=(1,0),\quad \varpi^{\bf{3}}_2=(-1,1),\quad \varpi^{\bf{3}}_3=(0,-1).
\end{align}
The weights of the representation $(\bold{2},\bold{1})$,  $(\bold{1},\bold{3})$ and $(\bold{2},\bold{3})$ are (in the Cartan's basis of fundamental weights).
All the relevant weights are given in Table \ref{Table:Weight}.
\begin{thm}
The hyperplane arrangement $\textnormal{I}(\text{A}_1\oplus\text{A}_2,(\bold{1},\bold{3})\oplus(\bold{2},\bold{3}))$  has eight chambers whose sign vectors   (with respect to the forms   
$(\varpi^{(\bf{1},\bf{3})}_2$,   $\varpi^{(\bf{2},\bf{3})}_5$, $\varpi^{(\bf{2},\bf{3})}_4$,    $\varpi^{(\bf{2},\bf{3})}_3$,   $\varpi^{(\bf{2},\bf{3})}_2)$) are as listed 
in Table \ref{Table:ChambersIneq}. 
The corresponding adjacency graph is  
given in Figure  \ref{2DChambersMatch}. 
\end{thm}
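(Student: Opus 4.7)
The plan is to enumerate the interior hyperplanes of the arrangement, describe the chambers explicitly by inequalities, and read off the adjacency graph from a codimension-one face analysis.

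Writing $\phi=(\psi_1;\phi_1,\phi_2)$ in the basis of fundamental coweights, each weight of $(\mathbf{1},\mathbf{3})\oplus(\mathbf{2},\mathbf{3})$ yields a linear form on $\phi$. The defining inequalities \eqref{eq:DFW} of the open dual fundamental Weyl chamber force $\psi_1,\phi_1,\phi_2>0$, so the forms $\phi_1$, $-\phi_2$, $\psi_1+\phi_1$, and $-\psi_1-\phi_2$ have constant sign and define no interior walls. The remaining five forms, which are precisely those indexed by $\varpi^{(\mathbf{1},\mathbf{3})}_2,\varpi^{(\mathbf{2},\mathbf{3})}_2,\varpi^{(\mathbf{2},\mathbf{3})}_3,\varpi^{(\mathbf{2},\mathbf{3})}_4,\varpi^{(\mathbf{2},\mathbf{3})}_5$ in the statement, give the five interior hyperplanes
\begin{equation*}
\phi_1=\phi_2,\qquad \psi_1=\phi_1,\qquad \psi_1=\phi_2,\qquad \psi_1=\phi_1-\phi_2,\qquad \psi_1=\phi_2-\phi_1.
\end{equation*}

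I would then split the Weyl chamber along $\phi_1=\phi_2$. In the half $\phi_1>\phi_2$, the constraint $\phi_1<2\phi_2$ gives the strict ordering $0<\phi_1-\phi_2<\phi_2<\phi_1$, and the hyperplane $\psi_1=\phi_2-\phi_1$ contributes no wall there (since $\phi_2-\phi_1<0<\psi_1$). Hence this half is subdivided by three parallel thresholds for $\psi_1$ into four chambers indexed by the intervals $(0,\phi_1-\phi_2),(\phi_1-\phi_2,\phi_2),(\phi_2,\phi_1),(\phi_1,\infty)$. The outer involution $\phi_1\leftrightarrow\phi_2$ produces an identical picture in the half $\phi_2>\phi_1$, yielding four more chambers and eight in total. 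The sign vectors in Table \ref{Table:ChambersIneq} are then verified by evaluating the five forms at an interior test point in each chamber.

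For the adjacency graph I would determine which pairs of chambers share a codimension-one face. Within each half the four chambers form a linear chain under successive crossings of the three parallel thresholds, giving three edges per half. Across $\phi_1=\phi_2$ only the chambers whose closures meet this hyperplane in a two-dimensional set contribute an edge; a direct check of the four chambers per half shows that the bottom chamber (with $\psi_1$ below all thresholds) and the third chamber (with $\phi_2<\psi_1<\phi_1$, or its mirror) degenerate to codimension two or higher as $\phi_1\to\phi_2$, while the middle chamber and the top chamber retain full-dimensional facets on the central hyperplane. This produces exactly two cross-edges, linking middle to middle and top to top, assembling the hexagon of Figure \ref{fig:chambers}; the bottom chambers of the two halves are the two legs, attached at the adjacent ``middle'' vertices of the hexagon. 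The main obstacle is precisely this boundary analysis across $\phi_1=\phi_2$: both walls $\psi_1=\phi_1-\phi_2$ and $\psi_1=\phi_2-\phi_1$ degenerate to $\psi_1=0$, and the walls $\psi_1=\phi_1$ and $\psi_1=\phi_2$ coincide along $\psi_1=\phi_1=\phi_2$, so one must track carefully which chambers keep full-dimensional walls on this central hyperplane in order to avoid spurious adjacencies.
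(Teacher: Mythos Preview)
Your proposal is correct and follows essentially the same direct-verification strategy as the paper, but with a more structured organization. The paper's proof simply identifies the five interior hyperplanes and then invokes a ``direct check of all possible signs'' against the Weyl-chamber inequalities \eqref{eq:DFW} to produce Table~\ref{Table:ChambersIneq}. You instead exploit the $\phi_1\leftrightarrow\phi_2$ involution to reduce to one half of the chamber, and observe that in the half $\phi_1>\phi_2$ the Weyl-chamber inequality $-\phi_1+2\phi_2>0$ forces the strict ordering $0<\phi_1-\phi_2<\phi_2<\phi_1$, so the three surviving interior walls become totally ordered thresholds for $\psi_1$ and cut out a linear chain of four chambers. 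This makes the count of eight chambers and the chain structure in each half transparent without enumerating $2^5$ sign vectors. Your adjacency analysis across $\phi_1=\phi_2$ is also more explicit than the paper's (which leaves the graph to the figure): the observation that the closures of chambers $2,2'$ and $4,4'$ meet $\phi_1=\phi_2$ only in codimension~$\geq 2$, while chambers $1,1'$ and $3,3'$ retain genuine facets there, is exactly what is needed and is correctly argued. Both approaches are elementary; yours has the advantage of explaining \emph{why} the adjacency graph is a hexagon with two legs rather than merely exhibiting it.
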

\begin{proof}
There are five  hyperplanes intersecting  the interior of the dual fundamental Weyl chamber: $ 
\varpi^{(\bf{1},\bf{3})}_2, \     \varpi^{(\bf{2},\bf{3})}_2,  \   \varpi^{(\bf{2},\bf{3})}_3,  \   \varpi^{(\bf{2},\bf{3})}_4,  \   \text{ and }  \    \varpi^{(\bf{2},\bf{3})}_5$.
We use them in the order $(\varpi^{(\bf{1},\bf{3})}_2$,   $\varpi^{(\bf{2},\bf{3})}_5$, $\varpi^{(\bf{2},\bf{3})}_4$ ,    $\varpi^{(\bf{2},\bf{3})}_3$,  $ \varpi^{(\bf{2},\bf{3})}_2)$, the sign vector is  $(-\phi_1+\phi_2,-\psi_1 - \phi_1 + \phi_2,-\psi_1 +  \phi_1,\psi_1-\phi_2,\psi_1 - \phi_1 + \phi_2).$
Keeping in mind the conditions in equation \eqref{eq:DFW} defining the open  dual fundamental Weyl  chamber, the results follow from a 
direct check of all possible signs and the chambers are listed on Table \ref{Table:ChambersIneq}. 
\end{proof}

\begin{table}[tbh]
\begin{center}
$
\begin{array}{|c|rrr|}
\hline
\text{Representation}& \multicolumn{3}{c|}{\text{Weights}}\\
\hline
(\bold{2},\bold{1}) &\quad \varpi^{(\bf{2},\bf{1})}_1=(1;0,0)\  \   &\varpi^{(\bf{2},\bf{1})}_2=(-1;0,0)    \  \   & \\
\hline
(\bold{1},\bold{3})&\quad \varpi^{(\bf{1},\bf{3})}_1=(0;1,0)\   \  & \varpi^{(\bf{1},\bf{3})}_2=(0;-1,1)\       \  &  \varpi^{(\bf{2},\bf{1})}_3=(0;0,-1) \\
\hline
(\bold{2},\bold{3}) &\quad  \varpi^{(\bf{2},\bf{3})}_1=(1;1,0)\   \  & \varpi^{(\bf{2},\bf{3})}_2=(1;-1,1)\      \   & \varpi^{(\bf{2},\bf{3})}_3=(1;0,-1)\\
                             &\quad \varpi^{(\bf{2},\bf{3})}_4=(-1;1,0)\  \   & \varpi^{(\bf{2},\bf{3})}_5=(-1;-1,1)\   \   & \varpi^{(\bf{2},\bf{3})}_6=(-1;0,-1)\\
                             \hline
\end{array}
$
\end{center}
\caption{Weights of the representations \label{Table:Weight}}
\end{table}

\subsection{Correspondence between the geometry and the representation theory}\label{sec:Corres}

In this section, we match the crepant resolutions of Section \ref{sec:crepres} and the chambers of the hyperplane arrangement $\text{I}(\mathfrak{g},\mathbf{R})$ of Table \ref{Table:ChambersIneq}. 
The graph of  flops between the crepant resolutions is isomorphic to the adjacency graph of the chambers of the hyperplane arrangement, but the isomorphism is not canonical since the graph has a $\mathbb{Z}_2$ automorphism. A simple way to fix the identification is to compare the triple intersection numbers in each resolution, which are given by Theorem \ref{Thm:TripleInt}, and the prepotentials computed in each chamber, which are given by Theorem \ref{Thm:Prepotentials}.

In Section \ref{SU2SU3Collision}, we described eight different possible resolutions. There  is a $\mathbb{Z}_2$ symmetry in the the structures of the resolutions mapping  resolutions I, II, III, IV and the resolutions I$^\prime$, II$^\prime$, III$^\prime$, IV$^\prime$ and induced by the inverse map of the Mordell--Weil group. Similarly, we see the $\mathbb{Z}_2$ symmetry in the adjacent graph of the chambers between the chambers $1$, $2$, $3$, $4$ and the chambers $1'$, $2'$, $3'$, $4'$. This can be observed easily in Figure \ref{2DChambersMatch}.

We see the explicit correspondence between the chambers and the resolutions as
\begin{equation}
\begin{array}{cccc}
\mathrm{I} \leftrightarrow 1, & \mathrm{II} \leftrightarrow 2, & \mathrm{III} \leftrightarrow 3, & \mathrm{IV} \leftrightarrow 4, \\
\mathrm{I}^\prime \leftrightarrow 1^\prime, & \mathrm{II}^\prime \leftrightarrow 2^\prime, & \mathrm{III}^\prime \leftrightarrow 3^\prime, & \mathrm{IV}^\prime \leftrightarrow 4^\prime .
\end{array}
\end{equation}
We can also compute the weight of the flopping curve between  pairs of resolutions connected by a flop, and compare it with the weight of the wall between the adjacent chambers. As an illustration, we treat the case of I$_2^{\text{s}}+$I$_3^{\text{s}}$-models and show that that the weights of the flopping curves, which are derived in Section \ref{sec:Flops}, match the weights of the walls in Figure \ref{2DChambersMatch}. This solidifies the duality between the chambers and the resolutions, which is represented by the complete structure of the resolutions and the adjacent graph of the chambers juxtaposed in Figure \ref{2DChambersMatch}.

The dual fundamental Weyl chamber is identified with the relative movable cone of a crepant resolution $Y\to Y_0$ over the Weierstrass model $Y_0$. 
This cone is an invariant of minimal models in the same birational class \cite[\S  12-2]{Matsuki.book}. 
The nef cone of any crepant resolution is then identified with a chamber of the hyperplane arrangement I($\mathfrak{g},\mathbf{R})$. 
In particular, two nef cones whose interior coincide represent the same crepant resolution. An interior walls of I($\mathfrak{g},\mathbf{R})$ corresponds to a geometric weight observed up to a sign between two distinct crepant resolutions separated by a flop. 
Two crepant resolutions have nef cones separated by an interior wall they they are connected by an extremal flop  \cite[Propostion  12-2-2]{Matsuki.book}.

\clearpage 
\begin{table}[htb]
\begin{center}
\bgroup
\def\arraystretch{1.5}
$
\begin{array}{|c|ccccc||c|}
\hline
\text{ Subchambers} &   \varpi^{(\bf{1},\bf{3})}_2 &   \varpi^{(\bf{2},\bf{3})}_5   &  \varpi^{(\bf{2},\bf{3})}_4 &     \varpi^{(\bf{2},\bf{3})}_3 &   \varpi^{(\bf{2},\bf{3})}_2 &  \text{Explicit description}  \\
 \hline  
\textcircled{1} & + & - & - & + & + &0< \phi_2-\phi_1<\phi_1< \phi_2< \psi _1  \\
 \hline  
 \textcircled{2} & + & - & - & - & + & 0<\phi_2-\phi_1<\phi_1<\psi_1<\phi_2 \\
 \hline  
 \textcircled{3} & + & - & + & - & +  & 0<\phi_2-\phi_1<\psi_1<\phi_1<\phi_2\\
 \hline  
 \textcircled{4} & + & + & + & - & + &     0<\psi _1<\phi _2-\phi _1<\phi_1<\phi_2\\
 \hline  
 \hline
\textcircled{1'} & - & - & - & + & +  &  0< \phi_1-\phi_2<\phi_2< \phi_1< \psi _1\\
 \hline  
\textcircled{2'} & - & - & + & + & + & 0<\phi_1-\phi_2<\phi_2<\psi_1<\phi_1 \\
 \hline  
\textcircled{3'} & - & - & + & - & + &  0< \phi _1-\phi _2<\psi _1<\phi _2<\phi_1 \\
 \hline  
\textcircled{4'} & - & - & + & - & -  &0<\psi _1<\phi _1-\phi _2<\phi_2<\phi_1 \\
 \hline  
\end{array}
$
\egroup
\end{center}
\caption{ Chambers of the hyperplane arrangement I($\text{A}_1\oplus\text{A}_2, \mathbf{R})$ with $\mathbf{R}=(\bf{1},\bf{3})\oplus(\bf{2},\bf{3})$.  
We will get exactly the same structure if we take the representation $\bold{R}=(\bold{2},\bold{1})\oplus (\bold{1},\bold{3})\oplus (\bold{2},\bold{3})$ since the representation $(\bold{2},\bold{1})$ does not contribute any hyperplane intersecting the interior of the  dual fundamental Weyl chamber. \label{Table:ChambersIneq}
}
\end{table}

\begin{figure}[H]
\begin{tikzpicture}[scale=.8]
\coordinate (A) at (90:5);
\coordinate (C) at (-30:5);
\coordinate (B) at (210:5);
\coordinate (B1) at (barycentric cs:A=2,B=1); 
\coordinate (B2) at (barycentric cs:A=1,B=2); 
\coordinate (C1) at (barycentric cs:A=2,C=1); 
\coordinate (C2) at (barycentric cs:A=1,C=2); 
\coordinate (A2) at (barycentric cs:B=1,C=1); 
\coordinate (A1) at (barycentric cs:A=3,B=1,C=1);
\coordinate (Ib) at  (barycentric cs:A=2,A1=1,B2=1);
\coordinate (IIb) at  (barycentric cs:B1=1,A1=1,B2=1);
\coordinate (IIIb) at  (barycentric cs:A2=1,A1=1,B2=1);
\coordinate (IVb) at  (barycentric cs:A2=1,B=1,B2=1);
\coordinate (I) at  (barycentric cs:A=2,A1=1,C2=1);
\coordinate (II) at  (barycentric cs:C1=1,A1=1,C2=1);
\coordinate (III) at  (barycentric cs:A2=1,A1=1,C2=1);
\coordinate (IV) at  (barycentric cs:A2=1,C=1,C2=1);
\coordinate (D) at (38:3.5); 
\coordinate (D1) at (90:1.67);
\coordinate (D2) at (-43:4.7);
\coordinate (D3) at (70:5.5);
\node[scale=1] at (Ib) {I$'$}; 
\node[scale=1] at (IIb) {II$'$}; 
\node[scale=1] at (IIIb) {III$'$}; 
\node[scale=1] at (IVb) {IV$'$}; 
\node[scale=1] at (I) {I}; 
\node[scale=1] at (II) {II}; 
\node[scale=1] at (III) {III}; 
\node[scale=1] at (IV) {IV}; 
\node[scale=1,right] at (D3) {\begin{tikzcd}[column sep=huge] X_0 \arrow[leftarrow]{r} {\displaystyle (x,y, s|e_1)} &  X_1  \end{tikzcd}};
\node[scale=1,right] at (D2) {\begin{tikzcd}[column sep=huge] X_0 \arrow[leftarrow]{r} {\displaystyle (x,y, t|w_1)} &  X_1  \end{tikzcd}};
\node[scale=1,right] at (D) {\begin{tikzcd}[column sep=huge] X_0 \arrow[leftarrow]{r} {\displaystyle (x,y, p_0|p_1)} &  X_1  \end{tikzcd}};
\node[scale=1] (A2') at (-90:3.5) {$\varpi_2^{\bf{1,3}}$};
\node[scale=1] (B1') at (129:4) {$\varpi_4^{\bf{2,3}}$};
\node[scale=1] (C1') at (51:4) {$\varpi_3^{\bf{2,3}}$};
\node[scale=1] (B2') at (167:4) {$\varpi_2^{\bf{2,3}}$};
\node[scale=1] (C2') at (13:4) {$\varpi_5^{\bf{2,3}}$};
\draw (A)--(B)--(C)--(A);
\draw (B1')--(C2);
\draw (B2)--(C1');
\draw (A)--(A2');
\draw (B2')--(A2);
\draw (C2')--(A2);
\draw[->] (D1)--(D);
\draw[->] (A2)--(D2);
\draw[->] (A)--(D3);
\end{tikzpicture}
\quad
\scalebox{.9}{\begin{tikzpicture}[every node/.style={circle,draw, minimum size= 10 mm},  scale=.45]
\node (A1) at (60*2-60:6cm) {$1$};
 \node (A2)at (60-60:6cm) {$2$};
 \node (A3) at ($(0-60:6cm)$) {$3$};
 \node (A4) at (-60-60:6cm) {$3'$};
 \node (A5) at (-120-60:6cm) {$2'$};
 \node (A6) at (180-60:6cm) {$1'$};
\draw (A1)-- node[draw=none,fill=none,right ]{$\varpi^{(\bf{2},\bf{3})}_3$}(A2)--node[draw=none,fill=none,above, right ]{$\varpi^{(\bf{2},\bf{3})}_4$}(A3)--node[draw=none,fill=none,below]{$\varpi^{(\bf{1},\bf{3})}_2$}(A4)--node[draw=none,fill=none,left]{$\varpi^{(\bf{2},\bf{3})}_3$}(A5)--node[draw=none,fill=none,left]{$\varpi^{(\bf{2},\bf{3})}_4$}(A6)--node[draw=none,fill=none,above]{$\varpi^{(\bf{1},\bf{3})}_2$}(A1);
 \node (A7) at (-60:12 cm)  {$4$};
 \node (A8) at (-60-60:12cm) {$4'$};
\draw  (A3)-- node[draw=none,fill=none,right]{$\varpi^{(\bf{2},\bf{3})}_5$}(A7);
\draw (A4)-- node[draw=none,fill=none,left]{$\varpi^{(\bf{2},\bf{3})}_2$} (A8);
\end{tikzpicture}}
\caption{ \textbf{Left:} the complete structure of the resolutions of \susu. This is a two-dimensional patch of the entire three-dimensional cones. Hence, every point (resp. line) on this picture represents a line (resp. surface).  Accordingly, these eight triangles are the three-dimensional triangular cones. The point in the top of the triangle is the resolution that resolves the SU(2), and the bottom-middle point of the triangle is the first resolution that resolves SU(3) only. The point in the middle of the triangle is the point that describes the blow-up that mixes both SU($2$) and SU($3$). All these three points are connected as expected and it works as the plane of the mirrors. \textbf{Right:} the adjacent graph of the eight chambers of $\text{I}(\mathfrak{g},\bold{R})$  with $\mathfrak{g}=\text{A}_1\oplus \text{A}_2$ and $\bold{R}=(\bold{1},\bold{3})\oplus (\bold{2},\bold{3})$.}
\label{2DChambersMatch}
\end{figure}
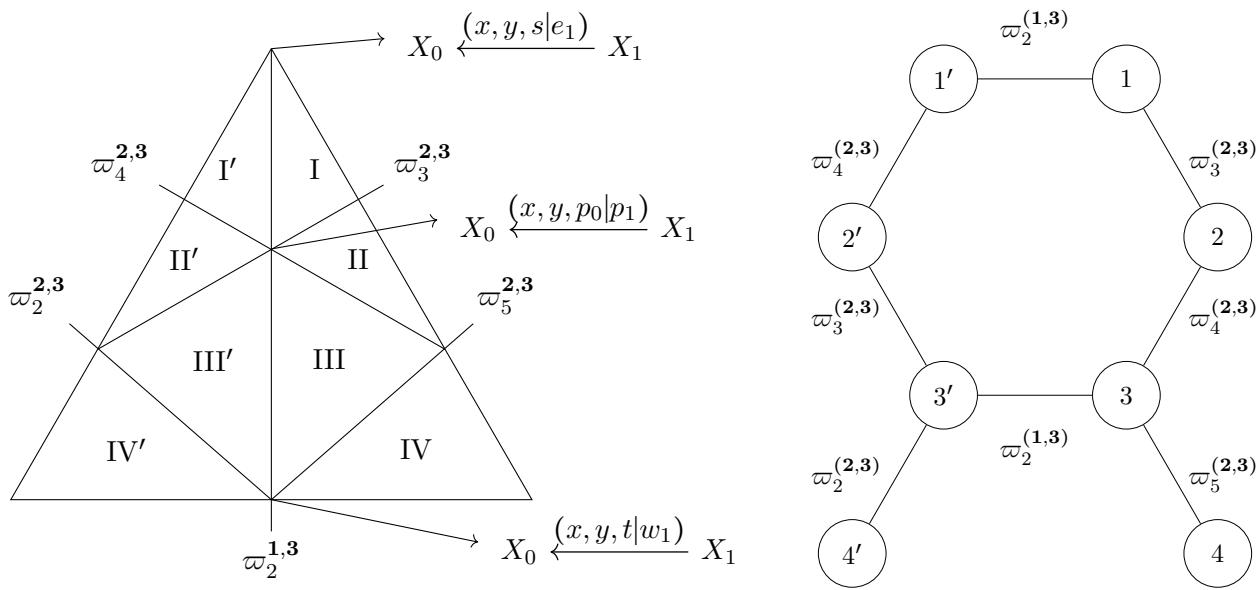

\clearpage

\section{The  I$_2^{\text{s}} +$I$_3^{\text{s}}$ Model \label{sec:I2sI3s}} 
In this section, we study the fiber structure of the crepant resolutions of the I$_2^{\text{s}} +$I$_3^{\text{s}}$ model defined by the following Weierstrass equation:
\begin{equation}
Y_0: \ y^{2}+a_{1}xy+\widetilde{a}_{3}sty=x^{3}+\widetilde{a}_{2}stx^{2}+\widetilde{a}_{4}st^{2}x+\widetilde{a}_{6}s^{2}t^{3}. \label{eq:I2sI3s}
\end{equation}
\subsection{Resolution I}\label{Sec:ResI}
Resolution I is defined by the following sequence of blowups:
\begin{equation} \begin{tikzcd}[column sep=huge] X_0=\mathbb{P}(\mathscr{O}_B\oplus\mathscr{L}^{\otimes 2}\oplus \mathscr{L}^{\otimes 3}) \arrow[leftarrow]{r} {\displaystyle (x,y, s|e_1)} &  X_1 \arrow[leftarrow]{r} {\displaystyle (x,y,t| w_1)} &  X_2 \arrow[leftarrow]{r} {\displaystyle (y, w_1| w_2)} &  X_3  \end{tikzcd} , \end{equation}
where $X_0$ is the projective bundle in which the Weierstrass model is defined; each successive blowup produces a projective bundle over the center of the blowup. The projective coordinates of the fibers of the successive projective bundles are
\begin{equation}
[e_{1}w_{1}w_{2}x\,;\, e_{1}w_{1}w_{2}^{2}y\,;\, z=1][w_{1}w_{2}x\,;\, w_{1}w_{2}^{2}y\,;\, s][x\,;\, w_{2}y\,;\, t][y\,;\, w_{1}].
\end{equation}
The proper transform of $Y_0$ is denoted $Y$ and is a smooth elliptic fibration:
\begin{equation}
Y: \ y(w_{2}y+a_{1}x+\widetilde{a}_{3}st)=w_{1}(e_{1}x^{3}+\widetilde{a}_{2}se_{1}tx^{2}+\widetilde{a}_{4}st^{2}x+\widetilde{a}_{6}s^{2}t^{3}).
\label{proptransf.I2sI3s.res134}
\end{equation}
We denote by $\text{D}^{\text{s}}_a$ and $D^{\text{t}}_a$ the irreducible fibral divisors that project to $S$ and $T$:
\begin{align}
&\text{I}_2^{\text{s}} : \
\begin{cases}
& D^{\text{s}}_0:  s=y(w_{2}y+a_{1}x)-w_{1}e_{1}x^{3}=0 \\
& D^{\text{s}}_1:  e_{1}=y(w_{2}y+a_{1}x+\widetilde{a}_{3}st)-w_{1}(\widetilde{a}_{4}st^{2}x+\widetilde{a}_{6}s^{2}t^{3})=0 \\
\end{cases} \\
&\text{I}_3^{\text{s}} : \
\begin{cases}
& D^{\text{t}}_0: t=y(w_{2}y+a_{1}x)-w_{1}e_{1}x^{3}=0 \\
& D^{\text{t}}_1: w_{1}=w_{2}y+a_{1}x+\widetilde{a}_{3}st=0 \\
& D^{\text{t}}_2: w_{2}=y(a_{1}x+\widetilde{a}_{3}st)-w_{1}(e_{1}x^{3}+\widetilde{a}_{2}se_{1}tx^{2}+\widetilde{a}_{4}st^{2}x+\widetilde{a}_{6}s^{2}t^{3})=0
\end{cases}
\label{5div.I2sI3s.res134}
\end{align}
The generic fiber of $D_a^s$ (resp. $D_a^t$) over $S$ (resp. $T$) is denoted as $C_a^s$ (resp. $C_a^t$). The fiber structure away from the intersection $S\cap T$ is well understood from the study of the individual SU($2$) and SU($3$)-models \cite{ESY1}.
 The generic fiber over the  intersection of $S$ and $T$ is of type I$_5^{\text{s}}$  as in Figure \ref{I5.EWW}, which is produced by the following splittings of $C_a^s$ and $C_a^t$.
\begin{equation}
\text{On} \ S\cap T: \
\begin{cases}
 & \begin{tikzcd}  C^{\text{s}}_0 \arrow[rightarrow]{r}  & \eta_0^0  \end{tikzcd} \\
& \begin{tikzcd}  C^{\text{s}}_1 \arrow[rightarrow]{r}  & \eta_1^{0A}+\eta_1^{0B}+\eta_1^1+\eta_1^2  \end{tikzcd}\\
& \begin{tikzcd}  C^{\text{t}}_0 \arrow[rightarrow]{r}  &\eta_1^{0A}+\eta_1^{0B} + \eta_0^0 \end{tikzcd}\\
& \begin{tikzcd}  C^{\text{t}}_1 \arrow[rightarrow]{r}  & \eta_1^1  \end{tikzcd}\\
& \begin{tikzcd}  C^{\text{t}}_2 \arrow[rightarrow]{r}  & \eta_1^2  \end{tikzcd}
 \end{cases}
 \end{equation}

\begin{equation}
\text{On} \ S\cap T: \
\begin{cases}
\begin{array}{cl}
C^{\text{s}}_{0}\cap C^{\text{t}}_{0}\rightarrow& \eta_0^0: s=t=y(w_{2}y+a_{1}x)-w_{1}e_{1}x^{3}=0\\
C^{\text{s}}_{1}\cap C^{\text{t}}_{0}\rightarrow &\eta_{1}^{0A}: e_{1}=t=y=0, \quad \eta_{1}^{0B} : e_{1}=t=w_{2}y+a_{1}x=0 \\
C^{\text{s}}_{1}\cap C^{\text{t}}_{1}\rightarrow &\eta_1^1: e_{1}=w_{1}=w_{2}y+a_{1}x+\widetilde{a}_{3}st=0\\
C^{\text{s}}_{1}\cap C^{\text{t}}_{2}\rightarrow & \eta_1^2: e_{1}=w_{2}=y(a_{1}x+\widetilde{a}_{3}st)-st^{2}w_{1}(\widetilde{a}_{4}x+\widetilde{a}_{6}st)=0
\end{array}
\end{cases}
\end{equation}

\begin{figure}[H]
\centering
\includegraphics{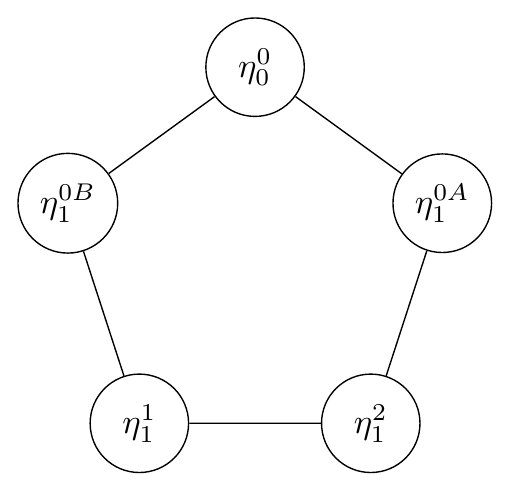}
\caption{Fiber over the generic point of the locus $S\cap T$ in Resolution I of the  I$_2^{\text{s}}$+I$_3^{\text{s}}$-model.\label{I5.EWW}} 
\end{figure}
 The curves $\eta_0^0$, $\eta_1^1$, and $\eta_1^2$ have the same weights as $C_0^{\text{s}}$, $C_1^{\text{t}}$, and $C_2^{\text{t}}$, respectively.
The curves $\eta_1^{0A}$ (resp. $\eta_1^{0B}$) has zero intersection with $D_1^{\text{t}}$ (resp. $D_2^{\text{t}}$). 
The intersection of the curves composing the fiber I$_5^{\text{s}}$ with the fibral divisors are listed on Table \ref{Table:WeightResI}.
\begin{table}[htb]
\begin{center}
\begin{tabular}{|c|c|c|c|c|c|c|c|}
\hline 
 & $D^{\text{s}}_{0}$ &$D^{\text{s}}_{1}$ & $D^{\text{t}}_{0}$ & $D^{\text{t}}_{1}$ & $D^{\text{t}}_{2}$& Weight& Representation\\
\hline 
\hline 
$\eta_0^0$ & -2 & 2 & 0 & 0 & 0& [-2;0,0] & $\bf{(3,1)}$\\
\hline 
$\eta_1^2$ & 0 & 0 & 1 & 1 & -2 & [0;-1,2] &   $\bf{(1,8)}$\\
\hline 
$\eta_1^1$ & 0 & 0 & 1 & -2 & 1& [0;2-1] &   $\bf{(1,8)}$\\
\hline 
$\eta_{1}^{0A}$ & 1 & -1 & -1 & 0 & 1 & [1;0,-1]  & $\bf{(2,3)}$\\
\hline 
$\eta_{1}^{0B}$ & 1 & -1 & -1 & 1 & 0 & [1;-1,0]  & $\bf{(2,\bar{3})}$\\
\hline 
\end{tabular}
\end{center}
\caption{Weights of vertical curves and representations  in the resolution 
 I of the I$_2^{\text{s}}+$I$_3^{\text{s}}$-model.\label{Table:WeightResI}}
\end{table}

The weights of the curves  $\eta_0^0$, $\eta_1^2$, and $\eta_1^1$ are 
are among the weights of the adjoint representation while the weights of the curves  $\eta_1^{0A}$ and $\eta_1^{0B}$ are respectively in the bifundamental representation $\bf{(2,3)}$ and $\bf{(2,\bar{3})}$. 

The fiber I$_5^{\text{s}}$ can degenerate in two different ways by following the degenerations of $\eta_1^{0B}$ and $\eta_1^2$. 
The curve $\eta_1^{0B}$ degenerates at $V(a_1)$, and $\eta_1^2$ is a conic that degenerates at the zero locus of its discriminant. 
The generic fiber over $S\cap T\cap V(a_1)$ is a non-Kodaira fiber corresponding to a contracted fiber of type IV$^*$ described in Figure \ref{NK1.EWW}. 
The generic fiber over  $S\cap T\cap V(a_{1}\widetilde{a}_{6}-\widetilde{a}_{3}\widetilde{a}_{4})$ is an I$_6^2$ fiber obtained by the degeneration of the conic $\eta_1^2$ into two lines intersecting transversally (see  Figure \ref{I6.EWW}).

\begin{equation}
\text{On} \ S\cap T\cap V(a_1): \
\begin{cases}
\eta_0^0  \quad\longrightarrow & \eta_0^0 : s=t=w_{2}y^{2}-w_{1}e_{1}x^{3}=0 \\
\eta_{1}^{0A} \ \longrightarrow & \eta_{1}^{0A} : e_{1}=t=y=0 \\
\eta_{1}^{0B} \ \longrightarrow & \eta_{1}^{0A} : e_{1}=t=y=0 , \ \eta_{1}^{02} : e_{1}=t=w_{2}=0 \\
\eta_1^1 \quad\longrightarrow & \eta_1^1 : e_{1}=w_{1}=w_{2}y+\widetilde{a}_{3}st=0 \\
\eta_1^2 \quad\longrightarrow & \eta_{1}^{02} : e_{1}=w_{2}=t=0 , \ \eta_1^2:  e_{1}=w_{2}=\widetilde{a}_{3}y-tw_{1}(\widetilde{a}_{4}x+\widetilde{a}_{6}st)=0
\end{cases}
\end{equation}
\begin{equation}\label{eq:nonflat}
\text{On} \ S\cap T\cap V(a_{1}\widetilde{a}_{6}-\widetilde{a}_{3}\widetilde{a}_{4}): 
\eta_1^2 \longrightarrow 
\begin{cases}  \eta_{1}^{2A}: a_{1}x+\widetilde{a}_{3}st=0, \\
 \eta_{1}^{2B}: \widetilde{a}_{3}y-\widetilde{a}_{6}st^{2}w_{1}=0
 \end{cases}
\end{equation}
\begin{figure}[H]
\centering
\includegraphics{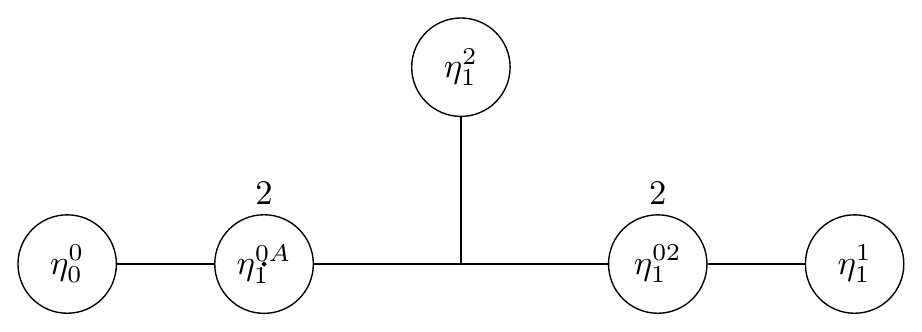}
\caption{Fiber over the generic point of the locus $S\cap T\cap V(a_1)$ in Resolution I of the I$_2^{\text{s}}$+I$_3^{\text{s}}$-model. \label{NK1.EWW}} 
\end{figure}

\begin{figure}[H]
\centering
\includegraphics{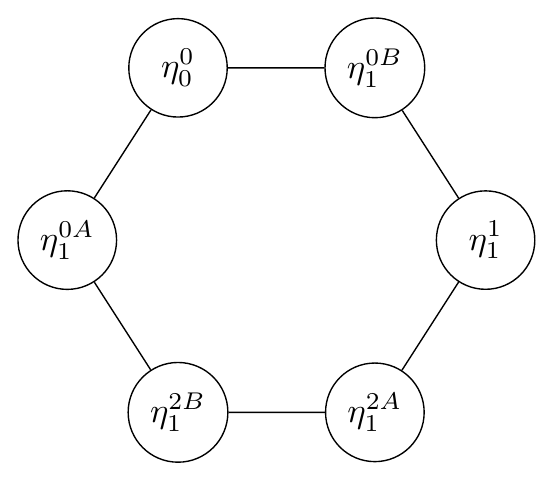}
\caption{Fiber over the generic point of the locus $S\cap T\cap V(a_{1}\widetilde{a}_{6}-\widetilde{a}_{3}\widetilde{a}_{4})$ in Resolution I of the I$_2^{\text{s}}$+I$_3^{\text{s}}$-model. \label{I6.EWW}} 
\end{figure}

As it is clear from  equation \eqref{eq:nonflat}, the curve $\eta_1^{2 A}$ will degenerate to a surface over $S\cap T\cap V(a_1,\widetilde{a}_3)$. For that reason, we assume that the base is at most a threefold  to ensure that the fibration is flat.

\subsection{Resolution II}
 In this section, we study the resolution II.
 In contrast to the other resolutions, some of the centers of the blowups that define resolutions II and II' are singular.  In particular, the first blowup in the sequence of blowups that defines resolution II has a singular center.
 In order to describe the first blowup,  it is useful to rewrite the equation \eqref{eq:I2sI3s} as
\begin{align}
Y_0: \
\begin{cases}
y(y+a_{1}x+\widetilde{a}_{3}p_0)=x^{3}+\widetilde{a}_{2}p_{0}x^{2}+\widetilde{a}_{4}p_0tx+\widetilde{a}_{6}p_0^{2}t\\
p_0=st
\end{cases}.
\end{align}
The resolution II is then given by the following sequence of blowups
\begin{equation} \begin{tikzcd}[column sep=huge] X_0 \arrow[leftarrow]{r} {\displaystyle (x,y, p_0|p_1)} &  X_1 \arrow[leftarrow]{r} {\displaystyle (y,t,p_1| w_1)} &  X_2 \arrow[leftarrow]{r} {\displaystyle (t, p_0| w_2)} &  X_3  \end{tikzcd}, \end{equation}
where $X_0=\mathbb{P}[\mathscr{O}_B\oplus\mathscr{L}^{\otimes 2}\oplus \mathscr{L}^{\otimes 3}]$. The projective coordinates on $X_3$ are then
\begin{equation}
[p_{1}w_1x\,:\, p_{1}w_1^{2}y\,:\, z=1][x\,:\, w_1y\,:\, p_0 w_2][y\,:\, tw_2\,:\, p_1][t\,:\, p_0],
\end{equation}
and the proper transform is 
\begin{align}
Y: \
\begin{cases}
y(w_1y+a_{1}x+\widetilde{a}_{3}p_0w_2)=p_{1}x^{3}+\widetilde{a}_{2}p_{0}p_{1}w_2x^{2}+\widetilde{a}_{4}p_{0}t w_2^2x+\widetilde{a}_{6}p_0^{2}tw_2^{3}\\
p_0p_1=st
\end{cases}.
\end{align}
The variety $X_1=Bl_{(x,y,p_0)} X_0$ has double point singularities at $p_0=p_1= s=t=0$.
The fibral divisors of $Y$  are
\begin{align}
&\text{I}_2^{\text{s}}: \
\begin{cases}
 & D^{\text{s}}_0:  s=p_0=y(w_1y+a_{1}x)-p_{1}x^{3}=0 \\
 & D^{\text{s}}_1:  s=p_{1}=y(w_1y+a_{1}x+\widetilde{a}_{3}p_0w_2)-tw_2^2(\widetilde{a}_{4}p_0x+\widetilde{a}_{6}p_0^{2}w_2)=0 \\
\end{cases} \\
&\text{I}_3^{\text{s}}: \
\begin{cases}
 & D^{\text{t}}_0:  w_2=p_0p_1-st=y(w_1y+a_{1}x)-p_1x^3=0 \\
 & D^{\text{t}}_1:  t=p_1=w_1y+a_{1}x+\widetilde{a}_3p_0w_2=0 \\
 & D^{\text{t}}_2:  w_1=p_0p_1-st=y(a_{1}x+\widetilde{a}_{3}p_0w_2)-(p_{1}x^{3}+\widetilde{a}_{2}p_0p_{1}w_2x^{2}+\widetilde{a}_{4}p_0tw_2^{2}x+\widetilde{a}_{6}p_0^{2}tw_2^{3})=0
\end{cases}
\end{align}

At the intersection of $S$ and $T$ the fiber enhances to an I$_5^{\text{s}}$. This is realized by the following splittings of the curves. 
\begin{equation}
\text{On} \ S\cap T: \
\begin{cases}
& \begin{tikzcd}  C^{\text{s}}_0 \arrow[rightarrow]{r}  & \eta_0^0+\eta_0^2  \end{tikzcd} \\
& \begin{tikzcd}  C^{\text{s}}_1 \arrow[rightarrow]{r}  & \eta_1^0+\eta_1^1+\eta_1^2  \end{tikzcd}\\
& \begin{tikzcd}  C^{\text{t}}_0 \arrow[rightarrow]{r}  & \eta_0^0+\eta_1^0  \end{tikzcd}\\
& \begin{tikzcd}  C^{\text{t}}_1 \arrow[rightarrow]{r}  & \eta_1^1  \end{tikzcd}\\
& \begin{tikzcd}  C^{\text{t}}_2 \arrow[rightarrow]{r}  & \eta_0^2+\eta_1^2  \end{tikzcd}\\
 \end{cases}
 \end{equation}
The curves at the intersection are given by
\begin{equation}
\text{On} \ S\cap T: \
\begin{cases}
C^{\text{s}}_{0}\cap C^{\text{t}}_{0} \quad\rightarrow & \eta_0^0: s=p_{0}=w_2=y(w_1y+a_{1}x)-p_{1}x^{3}=0, \\
C^{\text{s}}_{0}\cap C^{\text{t}}_{2} \quad\rightarrow & \eta_{0}^{2}: s=p_{0}=w_1=a_1y-p_1x^2=0, \\
C^{\text{s}}_{1}\cap C^{\text{t}}_{0} \quad\rightarrow & \eta_1^0: s=p_{1}=w_2=w_1y+a_{1}x=0, \\
C^{\text{s}}_{1}\cap C^{\text{t}}_{1} \quad\rightarrow & \eta_{1}^{1}: s=p_{1}=t=w_1y+a_{1}x+\widetilde{a}_{3}p_{0}w_2=0, \\
C^{\text{s}}_{1}\cap C^{\text{t}}_{2} \quad\rightarrow & \eta_1^2: s=p_{1}=w_1=y(a_{1}x+\widetilde{a}_{3}p_{0}w_2)-p_{0}tw_2^2(\widetilde{a}_{4}x+\widetilde{a}_{6}p_{0}w_2)=0 .
\end{cases}
\end{equation}
This corresponds to $\text{I}^{\text{s}}_{5}$ as in Figure \ref{I5.res2}. The curve $\eta_1^2$ is quadratic in $x$, $y$, and $p_0$ with the discriminant $a_{1}(a_{1}\widetilde{a}_{6}-\widetilde{a}_{3}\widetilde{a}_{4})$.

\begin{table}[htb]
\begin{center}
\begin{tabular}{|c|c|c|c|c|c|c|c|}
\hline 
  & $D^{\text{s}}_{0}$ & $D^{\text{s}}_{1}$ & $D^{\text{t}}_{0}$ & $D^{\text{t}}_{1}$ & $D^{\text{t}}_{2}$& Weight& Representation\\
\hline 
\hline 
$\eta_0^0$ & -1 & 1 & -1 & 0 & 1 & [-1;0,-1] & $\bf{(2,3)}$\\
\hline 
$\eta_0^2$ & -1 & 1 & 1 & 0 & -1 & [-1;0,1] & $\bf{(2,\bar{3})}$\\
\hline 
$\eta_1^1$ & 0  & 0 & 1 & -2 & 1 & [0;2,-1] &   $\bf{(1,8)}$\\
\hline 
$\eta_1^0$ & 1 & -1 & -1 & 1 & 0 & [1;-1,0] &   $\bf{(2,\bar{3})}$\\
\hline 
$\eta_1^2$ & 1 & -1 & 0 & 1 & -1 & [1;-1,1] & $\bf{(2,3)}$\\
\hline 
\end{tabular}
\end{center}
\caption{Weights of vertical curves and representations  in the resolution II of the I$_2^{\text{s}}+$I$_3^{\text{s}}$-model.}
\end{table}

\begin{figure}[H]
\centering
\includegraphics{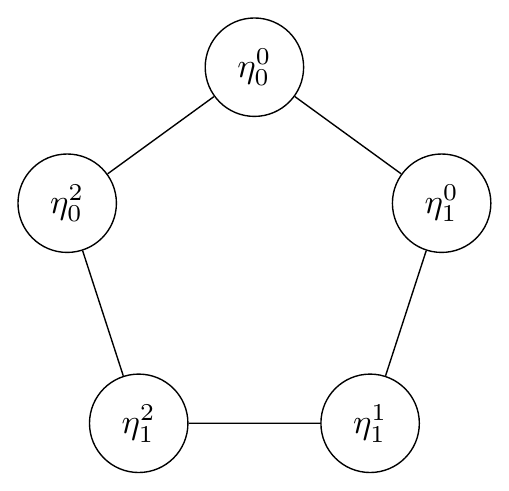}
\caption{Fiber over the generic point of the locus $S\cap T$ in Resolution II of the I$_2^{\text{s}}$+I$_3^{\text{s}}$-model. \label{I5.res2}} 
\end{figure}
There are two enhancements when the discriminant of the curve $\eta_1^2$ vanishes. First enhancement is when $a_{1}=0$:
\begin{equation}
\text{On} \ S\cap T\cap V(a_1): \ 
\begin{cases}
& \begin{tikzcd}  \eta_0^2 \arrow[rightarrow]{r}  & \eta_{01}^2  \end{tikzcd} \\
& \begin{tikzcd}  \eta_1^0 \arrow[rightarrow]{r}  & \eta_1^{02}  \end{tikzcd}\\
& \begin{tikzcd}  \eta_1^2 \arrow[rightarrow]{r}  & \eta_{01}^2+\eta_1^{02}+\eta_1^2  \end{tikzcd}\\
 \end{cases},
 \end{equation}
where the new curves are given by
\begin{align}
\begin{cases}
\eta_{01}^2: & \ s=p_{0}=p_1=w_1=0\\
\eta_{1}^{02}: & \ s=p_{1}=w_2=w_1=0 \\
\eta_1^2: & \ s=p_{1}=w_1=\widetilde{a}_3y-\widetilde{a}_4tw_2x-\widetilde{a}_6p_0tw_2^2=0
\end{cases}.
\end{align}
For this codimension-three enhancement, we get a non-Kodaira fiber that is an incomplete fiber of type IV$^*$, as illustrated in Figure \ref{NK1.res2}.

\begin{figure}[H]
\centering
\includegraphics{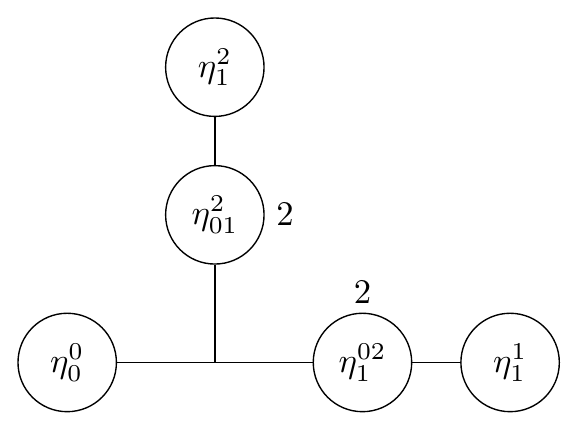}
\caption{Fiber over the generic point of the locus $S\cap T\cap V(a_1)$ in Resolution II of the I$_2^{\text{s}}$+I$_3^{\text{s}}$-model. \label{NK1.res2}} 
\end{figure}

The other specialization is when $(a_{1}\widetilde{a}_{6}-\widetilde{a}_{3}\widetilde{a}_{4})$,  where $\eta_1^2$ splits into two curves that intersect each other such that
\begin{equation}
\text{on} \ S\cap T\cap V(a_{1}\widetilde{a}_{6}-\widetilde{a}_{3}\widetilde{a}_{4}): \
\eta_1^2 \longrightarrow \eta_1^{2A}+\eta_1^{2B},
\end{equation}
while all the other fibers are the same. except $\eta_1^2$. The resulting fiber is a Kodaira fiber of type $\mathrm{I_{6}}$ as in Figure \ref{I6.res2}.

\begin{figure}[H]
\centering
\includegraphics{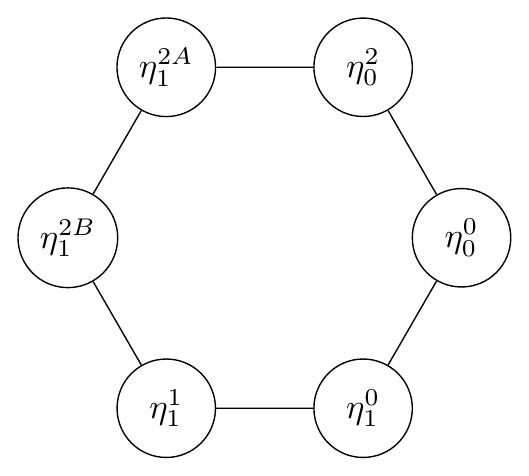}
\caption{Fiber over the generic point of the locus $S\cap T\cap V(a_{1}\widetilde{a}_{6}-\widetilde{a}_{3}\widetilde{a}_{4})$ in Resolution II of the I$_2^{\text{s}}$+I$_3^{\text{s}}$-model.}
\label{I6.res2}
\end{figure}

\subsection{Resolution III}
The  resolution III is defined by the  following sequence of blowups:
\begin{equation} \begin{tikzcd}[column sep=huge] X_0 \arrow[leftarrow]{r} {\displaystyle (x,y, t|w_1)} &  X_1 \arrow[leftarrow]{r} {\displaystyle (x,y, s| e_1)} &  X_2 \arrow[leftarrow]{r} {\displaystyle (y, w_1| w_2)} &  X_3  \end{tikzcd}  \end{equation}
The projective coordinates are then given by
\begin{equation}
[e_{1}w_{1}w_{2}x\,;\, e_{1}w_{1}w_{2}^{2}y\,;\, z=1][e_{1}x\,;\, e_{1}w_{2}y\,;\, t][x\,;\, w_{2}y\,;\, s][y\,;\, w_{1}].
\end{equation}
The proper transform of the elliptic fibration for the resolution III is the same as equation \eqref{proptransf.I2sI3s.res134} and the five fibral divisors are thus identical to those listed on the equation \eqref{5div.I2sI3s.res134}.

As before, we have a fiber of type I$_2^{\text{s}}$ over the generic point of $S$ and fiber of type I$_3^{\text{s}}$ over the generic point of $T$. At the collision of $S$ and $T$, the different curves are 
\begin{equation}
\text{On} \ S\cap T: \ 
\begin{cases}
\begin{array}{clcc}
C_{0}^{\text{s}}\cap C_{0}^t \rightarrow &\eta_0^0: s=t=y(w_{2}y+a_{1}x)-w_{1}e_{1}x^{3}=0 \\
C_{0}^{\text{s}}\cap C_{1}^t \rightarrow &\eta_0^1: s=w_{1}=w_{2}y+a_{1}x=0 & \\
C_{0}^{\text{s}}\cap C_{2}^t\rightarrow & \eta_0^2:  s=w_{2}=a_{1}y-e_{1}w_{1}x^{2}=0 \\
C_{1}^{\text{s}}\cap C_{1}^t\rightarrow &\eta_1^1: e_{1}=w_{1}=w_{2}y+a_{1}x+\widetilde{a}_{3}st=0 \\
C_{1}^{\text{s}}\cap C_{2}^t:\rightarrow & \eta_1^2: e_{1}=w_{2}=y(a_{1}x+\widetilde{a}_{3}st)-st^{2}w_{1}(\widetilde{a}_{4}x+\widetilde{a}_{6}st)=0 
\end{array}
\end{cases}
\end{equation}
The splittings of curves are
\begin{equation}
\text{On} \ S\cap T: \ 
\begin{cases}
& \begin{tikzcd}  C_0^{\text{s}} \arrow[rightarrow]{r}  & \eta_0^0+\eta_0^1+\eta_0^2  \end{tikzcd} \\
& \begin{tikzcd}  C_1^{\text{s}} \arrow[rightarrow]{r}  & \eta_1^1+\eta_1^2  \end{tikzcd}\\
& \begin{tikzcd}  C_0^t \arrow[rightarrow]{r}  & \eta_0^0  \end{tikzcd}\\
& \begin{tikzcd}  C_1^t \arrow[rightarrow]{r}  & \eta_0^1+\eta_1^1  \end{tikzcd}\\
& \begin{tikzcd}  C_2^t \arrow[rightarrow]{r}  & \eta_0^2+\eta_1^2  \end{tikzcd}\\
\end{cases}
\end{equation}
This corresponds to $\mathrm{I}_{5}^{\text{s}}$, which is represented in Figure \ref{I5.WEW}.
\begin{figure}[H]
\begin{center}
\includegraphics{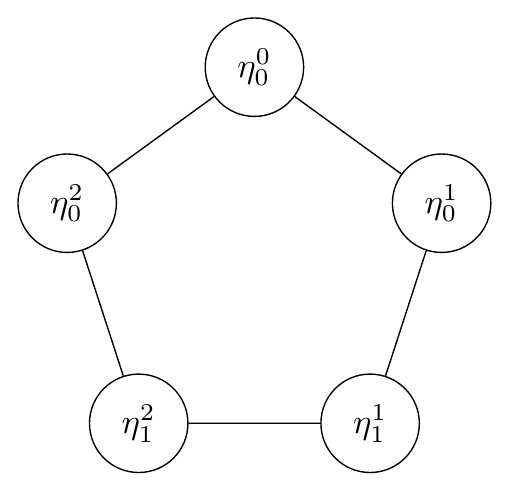}
\caption{Fiber over the generic point of the locus $S\cap T$ in Resolution III of the I$_2^{\text{s}}$+I$_3^{\text{s}}$-model. \label{I5.WEW}} 
\end{center}
\end{figure}
From the splittings of the curves, we compute the intersection numbers to get the weight vectors and further deduce the representations.
\begin{table}[htb]
\begin{center}
\begin{tabular}{|c|c|c|c|c|c|c|c|}
\hline 
  & $D^{\text{s}}_{0}$ & $D^{\text{s}}_{1}$ & $D^{\text{t}}_{0}$ & $D^{\text{t}}_{1}$ & $D^{\text{t}}_{2}$& Weight& Representation\\
\hline 
\hline 
$\eta_0^0$ & 0 & 0 & -2 & 1 & 1& [0;-1,-1] & $\bf{(1,8)}$\\
\hline 
$\eta_0^1$ & -1 & 1 & 1 & -1 & 0& [-1;1,0] & $\bf{(2,3)}$\\
\hline 
$\eta_1^1$ & 1 & -1 & 0 & -1 & 1& [1;1,-1] & $\bf{(2,\bar{3})}$\\
\hline 
$\eta_1^2$ & 1 & -1 & 0 & 1 & -1& [1;-1,1] & $\bf{(2,3)}$\\
\hline 
$\eta_0^2$ & -1 & 1 & 1 & 0 & -1& [-1;0,1] & $\bf{(2,\bar{3})}$\\
\hline 
\end{tabular}
\end{center}
\caption{Weights of vertical curves and representations  in the resolution
 III of the I$_2^{\text{s}}+$I$_3^{\text{s}}$-model.}
\end{table}

This  has two further specializations when the discriminant of $\eta_1^2$ vanishes. The first enhancement is when $a_{1}=0$. We  observe the following splittings for the elliptical fibrations: 
\begin{align}
\text{On} \ S\cap T\cap V(a_1): \ 
\begin{cases}
\eta_0^0 \quad\longrightarrow & \eta_0^0: s=t=w_{2}y^{2}-w_{1}e_{1}x^{3}=0 \\
\eta_0^1 \quad\longrightarrow & \eta_0^{12}: s=w_{1}=w_{2}=0 \\
\eta_0^2 \quad\longrightarrow & \eta_0^{12}: s=w_{2}=w_{1}=0, \  \eta_{01}^2: s=w_{2}=e_{1}=0 \\
\eta_1^1 \quad\longrightarrow & \eta_1^1: e_{1}=w_{1}=w_{2}y+\widetilde{a}_{3}st=0 \\
\eta_1^2 \quad\longrightarrow  & \eta_{01}^2: e_{1}=w_{2}=s=0, \eta_1^2 : e_{1}=w_{2}=\widetilde{a}_{3}y-tw_{1}(\widetilde{a}_{4}x+\widetilde{a}_{6}st)=0
\end{cases}
\end{align}
The splittings from the five divisors to the codimension-three enhancement when $a_1=0$ is 
\begin{equation}
\text{On} \ S\cap T\cap V(a_1): \ 
\begin{cases}
& \begin{tikzcd}  C_0^{\text{s}} \arrow[rightarrow]{r}  & \eta_0^0+\eta_0^{12}+\eta_{01}^2  \end{tikzcd} \\
& \begin{tikzcd}  C_1^{\text{s}} \arrow[rightarrow]{r}  & \eta_{01}^2+\eta_1^1+\eta_1^2  \end{tikzcd}\\
& \begin{tikzcd}  C_0^t \arrow[rightarrow]{r}  & \eta_0^0  \end{tikzcd}\\
& \begin{tikzcd}  C_1^t \arrow[rightarrow]{r}  & \eta_0^{12}+\eta_1^1  \end{tikzcd}\\
& \begin{tikzcd}  C_2^t \arrow[rightarrow]{r}  & \eta_0^{12}+\eta_{01}^2+\eta_1^2 . \end{tikzcd}\\
\end{cases}
\end{equation}
The generic fiber over $S\cap T\cap V(a_1)$ is a non-Kodaira fiber illustrated in Figure \ref{NK1.WEW}  and corresponding to an incomplete fiber of type IV$^*$.
\begin{figure}[H]
\begin{center}
\includegraphics{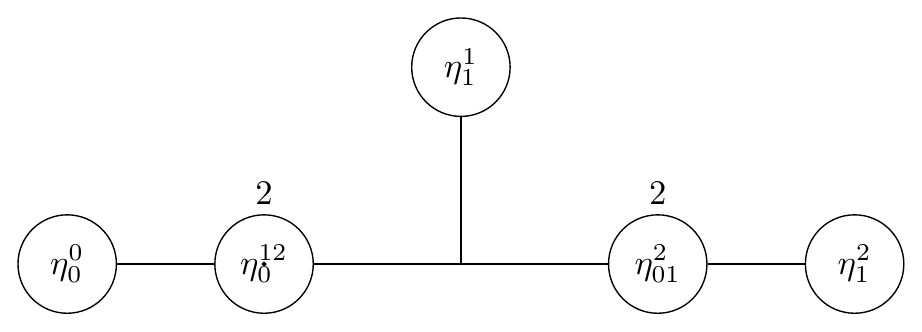}
\caption{Fiber over the generic point of the locus $S\cap T\cap V(a_1)$ in Resolution III of the I$_2^{\text{s}}$+I$_3^{\text{s}}$-model. \label{NK1.WEW}} 
\end{center}
\end{figure}

Now consider the other condition, $a_{1}\widetilde{a}_{6}=\widetilde{a}_{3}\widetilde{a}_{4}$, to get the other specialization. The curve $\eta_1^2$ splits into two fibers intersecting each other:
\begin{equation}
\text{on} \ S\cap T\cap V(a_{1}\widetilde{a}_{6}-\widetilde{a}_{3}\widetilde{a}_{4}): \ 
\begin{tikzcd}  \eta_1^2 \arrow[rightarrow]{r}  & \eta_1^{2A}+\eta_1^{2B} . \end{tikzcd}
\end{equation}
Thus, we get a fiber enhancement of type $\mathrm{I}_{6}^{\text{s}}$, which is represented in Figure \ref{I6.WEW}.
\begin{figure}[H]
\begin{center}
\includegraphics{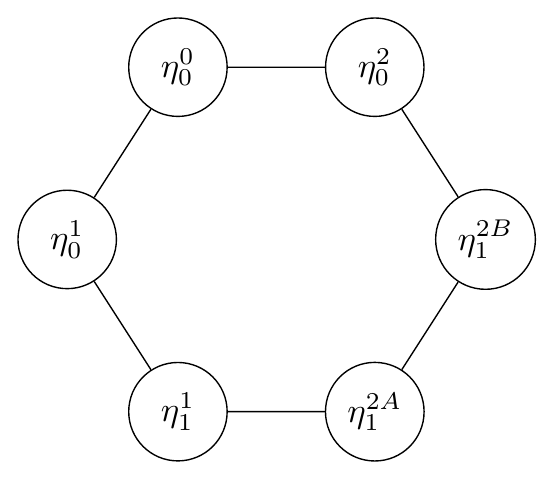}
\caption{Fiber over the generic point of the locus $S\cap T\cap V(a_{1}\widetilde{a}_{6}-\widetilde{a}_{3}\widetilde{a}_{4})$ in Resolution III of the I$_2^{\text{s}}$+I$_3^{\text{s}}$-model. \label{I6.WEW}} 
\end{center}
\end{figure}

\subsection{Resolution IV}

The resolution IV is given by the following sequence of  blowups:
\begin{equation} \begin{tikzcd}[column sep=huge] X_0 \arrow[leftarrow]{r} {\displaystyle (x,y, t|w_1)} &  X_1 \arrow[leftarrow]{r} {\displaystyle (y, w_1| w_2)} &  X_2 \arrow[leftarrow]{r} {\displaystyle (x,y, s| e_1)} &  X_3  \end{tikzcd} . \end{equation}
Its projective coordinates are then given by
\begin{equation}
[e_{1}w_{1}w_{2}x\,;\, e_{1}w_{1}w_{2}^{2}y\,;\, z=1][e_{1}x\,;\, e_{1}w_{2}y\,;\, t][e_{1}y\,;\, w_{1}][x\,;\, y\,;\, s].
\end{equation}
The proper transform of the elliptic fibration for the resolution III is the same as equation \eqref{proptransf.I2sI3s.res134} and the five fibral divisors are thus identical to the equation \eqref{5div.I2sI3s.res134}.
At the intersection of both divisors $S$ and $T$, we get the following curves:
\begin{align}
\text{On} \ S\cap T\cap V(a_1): \ 
\begin{cases}
C_{0}^{\text{s}}\cap C_{0}^t \longrightarrow & \eta_0^0: s=t=y(w_{2}y+a_{1}x)-w_{1}e_{1}x^{3}=0, \\
C_{0}^{\text{s}}\cap C_{1}^t \longrightarrow & \eta_0^1: s=w_{1}=w_{2}y+a_{1}x=0, \\
C_{0}^{\text{s}}\cap C_{2}^t \longrightarrow & \eta_{0}^{2A}: s=w_{2}=x=0, \ \eta_{0}^{2B}: s=w_{2}=a_{1}y-w_{1}e_{1}x^{2}=0, \\
C_{1}^{\text{s}}\cap C_{2}^t \longrightarrow & \eta_1^2: e_{1}=w_{2}=y(a_{1}x+\widetilde{a}_{3}st)-st^{2}w_{1}(\widetilde{a}_{4}x+\widetilde{a}_{6}st)=0.
\end{cases}
\end{align}
From the five fibral divisors, we summarize the splittings of the curves to be the following.
\begin{equation}
\text{On} \ S\cap T\cap V(a_1): \ 
\begin{cases}
& \begin{tikzcd}  C_0^{\text{s}} \arrow[rightarrow]{r}  & \eta_0^0+\eta_0^1+\eta_0^{2A}+\eta_0^{2B}  \end{tikzcd} \\
& \begin{tikzcd}  C_1^{\text{s}} \arrow[rightarrow]{r}  & \eta_1^2  \end{tikzcd}\\
& \begin{tikzcd}  C_0^t \arrow[rightarrow]{r}  & \eta_0^0  \end{tikzcd}\\
& \begin{tikzcd}  C_1^t \arrow[rightarrow]{r}  & \eta_0^1  \end{tikzcd}\\
& \begin{tikzcd}  C_2^t \arrow[rightarrow]{r}  & \eta_0^{2A}+\eta_0^{2B}+\eta_1^2 . \end{tikzcd}\\
\end{cases}
\label{eqn:I2sI3s.WWE.cd2}
\end{equation}
This corresponds to $\mathrm{I}_{5}^{\text{s}}$ as it is represented in Figure \ref{I5.WWE}.
\begin{figure}[H]
\begin{center}
\includegraphics{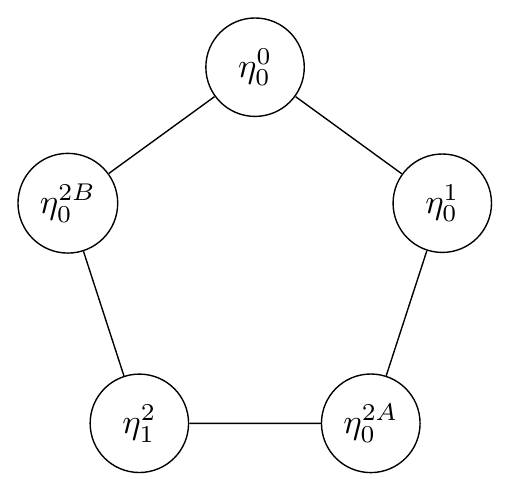}
\caption{Fiber over the generic point of the locus $S\cap T$ in Resolution IV of the I$_2^{\text{s}}$+I$_3^{\text{s}}$-model. \label{I5.WWE}} 
\end{center}
\end{figure}
The intersection numbers are computed using the equation \eqref{eqn:I2sI3s.WWE.cd2} to get the weights and the representations of the curves.
\begin{table}[htb]
\begin{center}
\begin{tabular}{|c|c|c|c|c|c|c|c|}
\hline 
  & $D^{\text{s}}_{0}$ & $D^{\text{s}}_{1}$ & $D^{\text{t}}_{0}$ & $D^{\text{t}}_{1}$ & $D^{\text{t}}_{2}$& Weight& Representation\\
\hline 
\hline 
$\eta_0^0$ & 0 & 0 & -2 & 1 & 1 & [0;-1,-1] & $\bf{(1,8)}$\\
\hline 
$\eta_0^1$ & 0 & 0 & 1 & -2 & 1 & [0;2,-1] & $\bf{(1,8)}$\\
\hline 
$\eta_{0}^{2A}$ & -1 & 1 & 0 & 1 & -1 & [-1;-1,1] & $\bf{(2,3)}$\\
\hline 
$\eta_1^2$ & 2 & -2 & 0 & 0 & 0 & [2;0,0] & $\bf{(3,1)}$\\
\hline 
$\eta_{0}^{2B}$ & -1 & 1 & 1 & 0 & -1 & [-1;0,1] & $\bf{(2,\bar{3})}$\\
\hline 
\end{tabular}
\end{center}
\caption{Weights of vertical curves and representations  in the resolution  
 IV of the I$_2^{\text{s}}+$I$_3^{\text{s}}$-model.}
\end{table}

This has two further specializations in codimension-three. The first specialization is when $a_{1}=0$. We observe the following splittings for the elliptical fibrations
\begin{align}
\text{on} \ S\cap T\cap V(a_1): \ 
\begin{cases}
\eta_0^0 \quad \longrightarrow & \eta_0^0: s=t=w_{2}y^{2}-w_{1}e_{1}x^{3}=0 , \\
\eta_0^1 \quad \longrightarrow & \eta_0^{12}: s=w_{1}=w_{2}=0 , \\
\eta_{0}^{2A} \ \longrightarrow & \eta_{0}^{2A}: s=w_{2}=x=0 , \\
\eta_{0}^{2B} \ \longrightarrow & \eta_0^{12}: s=w_{2}=w_{1}=0 , \ \eta_{01}^{2}: s=w_{2}=e_{1}=0 , \ \eta_{0}^{2A}: s=w_{2}=x=0 , \\
\eta_1^2 \quad \longrightarrow & \eta_{01}^{2}: e_{1}=w_{2}=s=0 , \ \eta_1^2: e_{1}=w_{2}=\widetilde{a}_{3}y-tw_{1}(\widetilde{a}_{4}x+\widetilde{a}_{6}st)=0 .
\end{cases}
\end{align}
For this codimension-three fiber enhancement, we get a specialization of $\mathrm{E_{6}}$, as it is represented in Figure \ref{NK1.WWE}.
\begin{figure}[H]
\begin{center}
\includegraphics{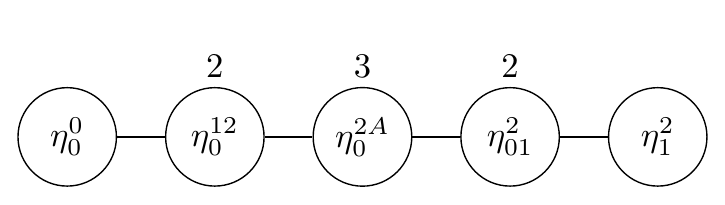}
\caption{Fiber over the generic point of the locus $S\cap T\cap V(a_1)$ in Resolution IV of the I$_2^{\text{s}}$+I$_3^{\text{s}}$-model. \label{NK1.WWE}} 
\end{center}
\end{figure}

The other specialization is when $a_{1}\widetilde{a}_{6}=\widetilde{a}_{3}\widetilde{a}_{4}$. Then all the other fibers are the same except $\eta_1^2$, which splits into two curves intersecting each other:
\begin{equation}
\text{on} \ S\cap T\cap V(a_{1}\widetilde{a}_{6}-\widetilde{a}_{3}\widetilde{a}_{4}): \ 
\begin{tikzcd}  \eta_1^2 \arrow[rightarrow]{r}  & \eta_1^{2A}+\eta_1^{2B} . \end{tikzcd}
\end{equation}
For this codimension-three enhancement, we get a fiber of type $\mathrm{I_{6}}^{\text{s}}$ as in Figure \ref{I6.WWE}.
\begin{figure}[H]
\begin{center}
\includegraphics{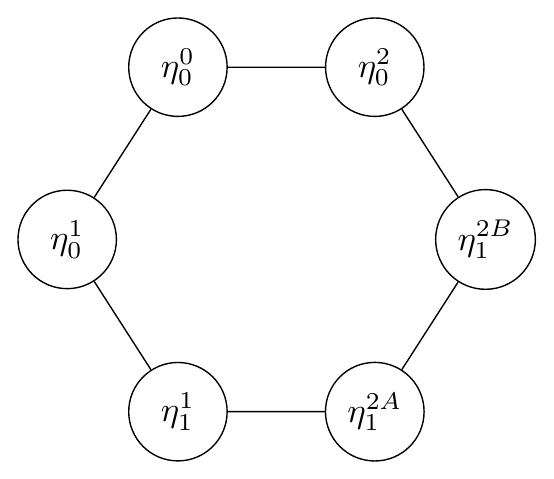}
\caption{Fiber over the generic point of the locus $S\cap T\cap V(a_{1}\widetilde{a}_{6}-\widetilde{a}_{3}\widetilde{a}_{4})$ in Resolution IV of the I$_2^{\text{s}}$+I$_3^{\text{s}}$-model. \label{I6.WWE}} 
\end{center}
\end{figure}

\subsection{Flops}\label{sec:Flops}
In this section, we discuss the flops between the resolutions I, II, III, and IV. We recall that the resolutions I$'$, II$'$, III$'$, IV$'$ are their mirrors under the birational map induced by the involution of the Mordell--Weil group. We consider the case of the I$_2^{\text{s}}+$I$_3^{\text{s}}$ model analyzed in Section \ref{sec:I2sI3s}. The other case follows the same scheme. When there is a simple flop between two resolutions, if the flopping curve has weights $\omega$ in one of the resolutions,  is replaced in the other by a curve of weights $-\omega$. Each resolution corresponds to a minimal model over the Weierstrass model. Hence, in the hyperplane arrangement, each resolution corresponds to a specific chamber. When two resolutions are connected by a flop of a curve of weight $\omega$, the hyperplane separating the corresponding chamber is exactly the hyperplane $\omega^\bot$ perpendicular to $\omega$. It follows that a chamber is uniquely defined by its possible flopping curves. 
We determine in Table \ref{flopcurves} all the flopping curves and show that their weights coincide with the hyperplanes separating two chambers of the hyperplane arrangement. These curves are identified with their corresponding weights from Table \ref{Table:Weight}. This result matches with the analysis in Figure \ref{2DChambersMatch}, which completes the correspondence between the geometry and the representation theory.

\begin{table}[H]
\begin{center}
\bgroup
\def\arraystretch{1.5}
\scalebox{.9}{
\begin{tabular}{|c c c c c c c|c|}
\hline 
\multicolumn{7}{|c|}{Flopping curves} & Weight \\
\hline
\hline
Resolution I: & $\eta_1^{0A}$ & $[1;0,-1]$ \, $\bf{(2,3)}$ & $\leftrightarrow$ & Resolution II: & $\eta_0^2$ & $[-1;0,1]$ \, $\bf{(2,\bar{3})}$ & $\omega_3^{(\bf{2,3})}$ \\
\hline 
Resolution II: & $\eta_1^1$ & $[1;-1,0]$ \, $\bf{(2,\bar{3})}$ & $\leftrightarrow$ & Resolution III: & $\eta_0^1$ & $[-1;1,0]$ \, $\bf{(2,3)}$ & $\omega_4^{(\bf{2,3})}$ \\
\hline 
Resolution III: & $\eta_1^1$ & $[1;1,-1]$ \, $\bf{(2,\bar{3})}$ & $\leftrightarrow$ & Resolution IV: & $\eta_0^{2A}$ & $[-1;-1,1]$ \, $\bf{(2,3)}$ \,& $\omega_5^{(\bf{2,3})}$ \\
\hline 
\end{tabular}
}
\egroup
\caption{The fibers that is the one that separates between the chambers and thus responsible for flops in the I$_2^{\text{s}}+$I$_3^{\text{s}}$-model. The weight of the contracted curve in a (terminal) flop connecting two resolutions is normal to the facet common to the closures of the corresponding chambers in the hyperplane arrangement.}
\label{flopcurves}
\end{center}
\end{table}

\section{The III$+$IV$^{\text{s}}$ Model \label{sec:IIIIVs}} 
In this section, we study the fiber structure of the crepant resolutions of the $\text{III}+\text{IV}^{\text{s}}$-model defined by the following Weierstrass equation:
\begin{equation}
Y_0: y^{2}+\widetilde{a}_{1}stxy+\widetilde{a}_{3}sty=x^{3}+\widetilde{a}_{2}stx^{2}+\widetilde{a}_{4}s t^{2}x+\widetilde{a}_{6}s^{2}t^{3},
\label{eq:IIIIVs}
\end{equation}
where the fiber III  and IV$^{\text{s}}$ are the fibers over the generic point of $S=V(s)$ and $T=V(t)$ respectively.
It corresponds to the low-right corner of Figure \ref{Pic:AllModels}. In particular, the fiber over $S$ and $T$ cannot specialize further while preserving their dual graphs (and hence, the  gauge group \susu).
  In \cite{Grassi:2014zxa}, this model was explored using the point of view of string junctions. 

 Here, we analyze the geometry of the crepant resolutions of the $\text{III}+\text{IV}^{\text{s}}$-model. The triple intersection numbers are the same as those of the I$_2^{\text{s}} +$I$_3^{\text{s}}$-model. 
The fiber over the generic point of $S\cap T$  is a non-Kodaira fiber corresponding to a fiber of type IV$^*$ with some nodes contracted. Such a fiber enhances further over $S\cap T\cap V(\widetilde{a}_3)$ to a non-Kodaira fiber corresponding to a fiber of type III$^*$ with some nodes contracted. The non-Kodaira fibers observed for the III$+$IV$^{\text{s}}$-model were already seen in the I$_2^{\text{s}}$+I$_3^{\text{s}}$-model but one codimension higher. 
  
\subsection{Resolution I}\label{Sec:IIIIVSResI}
The resolution I is defined by the following sequence of blowups:
\begin{equation} 
\begin{tikzcd}[column sep=huge] 
X_0 \arrow[leftarrow]{r} {\displaystyle (x,y, s|e_1)} &  X_1 \arrow[leftarrow]{r} {\displaystyle (x,y, t| w_1)} &  X_2 \arrow[leftarrow]{r} {\displaystyle (y, w_1| w_2)} &  X_3  
\end{tikzcd}  
\end{equation}
The proper transform of the III$+$IV$^{\text{s}}$-model is
\begin{equation}
Y: y(w_{2}y+\widetilde{a}_1se_{1}tw_{1}w_{2}x+\widetilde{a}_3st)=w_{1}(e_{1}x^{3}+\widetilde{a}_2se_{1}tx^{2}+\widetilde{a}_4st^{2}x+\widetilde{a}_6s^{2}t^{3}).
\label{eq:PTofIIIIVs}
\end{equation}
The projective coordinates are then given by
\begin{equation}
[e_{1}w_{1}w_{2}x\,;\, e_{1}w_{1}w_{2}^{2}y\,;\, z=1][w_{1}w_{2}x\,;\, w_{1}w_{2}^{2}y\,;\, s][x\,;\, w_{2}y\,;\, t][y\,;\, w_{1}].
\end{equation}
The fibral divisors are
\begin{align}
&\text{I}_2^{\text{s}}: \ 
\begin{cases}
D_{0}^{\text{s}} :&  s=w_{2}y^{2}-w_{1}e_{1}x^{3}=0, \\
D_{1}^{\text{s}} :& e_{1}=y(w_{2}y+\widetilde{a}_3st)-st^{2}w_{1}(\widetilde{a}_4x+\widetilde{a}_6st)=0, \\
\end{cases}\\
&\text{I}_3^{\text{s}}: \ 
\begin{cases}
D_{0}^t :& t=w_{2}y^{2}-w_{1}e_{1}x^{3}=0, \\
D_{1}^t :& w_{1}=w_{2}y+\widetilde{a}_3st=0, \\
D_{2}^t :& w_{2}=\widetilde{a}_3sty-w_{1}(e_{1}x^{3}+\widetilde{a}_2se_{1}tx^{2}+\widetilde{a}_4st^{2}x+\widetilde{a}_6s^{2}t^{3})=0.
\end{cases}
\end{align}
Over the generic point of  the intersection of $S$ and $T$, we get the following irreducible curves
\begin{align}
\text{On} \ S\cap T: \ 
\begin{cases}
C_{0}^s\cap C_{0}^{t} \quad \rightarrow & \eta_0^0: s=t=w_{2}y^{2}-w_{1}e_{1}x^{3}=0 ,\\
C_{1}^s\cap C_{0}^{t} \quad \rightarrow & \eta_1^{02}: e_{1}=t=w_{2}=0,\ \eta_1^{0A}: e_{1}=t=y=0,\\
C_{1}^s\cap C_{1}^{t} \quad \rightarrow & \eta_1^1: e_{1}=w_{1}=w_{2}y+\widetilde{a}_3st=0,\\
C_{1}^s\cap C_{2}^{t} \quad \rightarrow & \eta_1^{02}: e_{1}=w_{2}=t=0,\ \eta_1^2: e_{1}=w_{2}=\widetilde{a}_3y-tw_{1}(\widetilde{a}_4x+\widetilde{a}_6st)=0 .
\end{cases}
\end{align}
The fiber over the generic point of  $S\cap T$ has a structure given by Figure \ref{fig:IIIIVs.Res1.cd2}, and corresponds to a fiber of type  IV$^*$ with contracted nodes. 
At the collision $S\cap T$, the components of the fibers III and IV$^{\text{s}}$ degenerate as follows.
\begin{equation}
\text{On} \ S\cap T: \ 
\begin{cases}
& \begin{tikzcd}  C_0^{\text{s}} \arrow[rightarrow]{r}  & \eta_0^0,  \end{tikzcd} \\
& \begin{tikzcd}  C_1^{\text{s}} \arrow[rightarrow]{r}  & 2\eta_1^{02}+2\eta_1^{0A}+\eta_1^1+\eta_1^2,  \end{tikzcd}\\
& \begin{tikzcd}  C_0^t \arrow[rightarrow]{r}  & \eta_0^0+\eta_1^{02}+2\eta_1^{0A},  \end{tikzcd}\\
& \begin{tikzcd}  C_1^t \arrow[rightarrow]{r}  & \eta_1^1,  \end{tikzcd}\\
& \begin{tikzcd}  C_2^t \arrow[rightarrow]{r}  & \eta_1^{02}+\eta_1^2.  \end{tikzcd}\\
\end{cases}
\end{equation}

\begin{figure}[H]
\begin{center}
\includegraphics{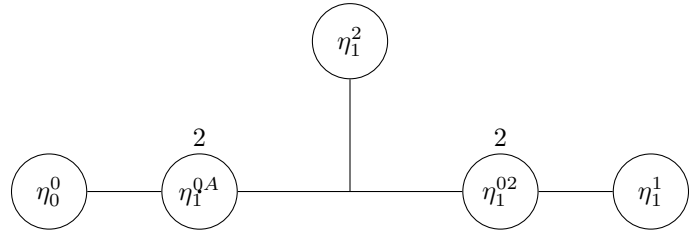}
\caption{Fiber over the generic point of the locus $S\cap T$ in Resolution I of the  III+IV$^{\text{s}}$-model. \label{fig:IIIIVs.Res1.cd2}} 
\end{center}
\end{figure}
We observe that this is identical to the fiber in the resolution I of the $\mathrm{I_{2}^{\text{s}}+I_{3}^{\text{s}}}$-model in codimension-three over $S\cap T\cap V(a_1)$. 

In order to get the weights of the curves, the intersection numbers are computed between the codimension-two curves and the fibral divisors.
\begin{table}[H]
\begin{center}
\begin{tabular}{|c|c|c|c|c|c|c|c|}
\hline 
  & $D^{\text{s}}_{0}$ & $D^{\text{s}}_{1}$ & $D^{\text{t}}_{0}$ & $D^{\text{t}}_{1}$ & $D^{\text{t}}_{2}$& Weight& Representation\\
\hline 
\hline 
$\eta_0^0$ & -2 & 2 & 0 & 0 & 0 & [-2;0,0] & $\bf{(3,1)}$\\
\hline 
$\eta_1^{0A}$ & 1 & -1 & -1 & 0 & 1 & [1;0,-1] & $\bf{(2,3)}$\\
\hline 
$\eta_1^2$ & 0 & 0 & 1 & 0 & -1 & [0;0,1] & $\bf{(1,\bar{3})}$\\
\hline 
$\eta_1^{02}$ & 0 & 0 & 0 & 1 & -1 & [0;-1,1] & $\bf{(1,3)}$\\
\hline 
$\eta_1^1$ & 0 & 0 & 1 & -2 & 1 & [0;2,-1] & $\bf{(1,8)}$\\
\hline 
\end{tabular}
\end{center}
\caption{Weights of vertical curves and representations in the resolution I of the III$+$IV$^{\text{s}}$-model.}
\end{table}

We note that the sum of the two curves $\eta_1^{02}+\eta_1^{0A}$ produce the weight $[1;-1,0]$ of the representation $(\mathbf{2,\bar{3}})$. In  the resolution I of the $\mathrm{I_{2}^{\text{s}}+I_{3}^{\text{s}}}$-model, the weight 
$[1;-1,0]$ corresponds to $\eta_1^{0B}$ in codimension-two, which splits into the two curves in codimension-three with the same weights as $\eta_1^{02}$ and $\eta_1^{0A}$.

The fiber over the generic point of $S\cap T$  shown on  Figure \ref{fig:IIIIVs.Res1.cd2}  specializes further  when $\widetilde{a}_{3}=0$:
\begin{equation}
\text{on} \ S\cap T\cap V(\widetilde{a}_{3}): \ 
\begin{cases}
& \begin{tikzcd}  \eta_1^1 \arrow[rightarrow]{r}  & \eta_1^{12},  \end{tikzcd} \\
& \begin{tikzcd}  \eta_1^2 \arrow[rightarrow]{r}  & \eta_1^{02}+\eta_1^{12}+\eta_1^{2},  \end{tikzcd}\\
\end{cases}
\end{equation}
where the new curves are given by
\begin{align}
\begin{cases}
\eta_1^{12}: \ & e_{1}=w_{1}=w_{2}=0, \\
\eta_1^{02}: \ & e_{1}=w_{2}=t=0, \\
\eta_1^2: \ & e_{1}=w_{2}=\widetilde{a}_4x+\widetilde{a}_6st=0.
\end{cases}
\end{align}
The fiber over the generic point $\ S\cap T\cap V(\widetilde{a}_{3})$ is illustrated in Figure \ref{fig:IIIIVs.Res1.cd3}, and  corresponds to a fiber of type  III$^*$ with contracted nodes.
\begin{figure}[H]
\begin{center}
\includegraphics{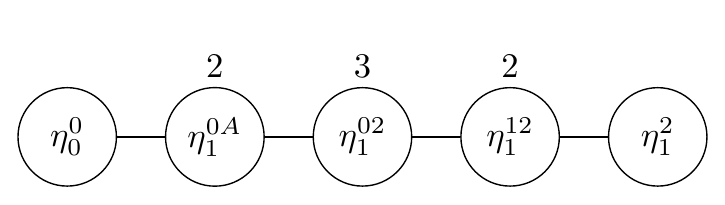}
\caption{Fiber over the generic point of the locus $S\cap T\cap V(\widetilde{a}_3)$ in Resolution I of the  III+IV$^{\text{s}}$-model. \label{fig:IIIIVs.Res1.cd3}} 
\end{center}
\end{figure}

\subsection{Resolution II}\label{Sec:IIIIVSResII}
In this section, we study the resolution II in detail. As the resolution II requires making a first blowup around a singular center, it is useful to rewrite equation \eqref{eq:IIIIVs} as
\begin{align}
Y_0: \ 
\begin{cases}
y(y+\widetilde{a}_1p_{0}x+\widetilde{a}_3p_{0})= x^{3}+\widetilde{a}_2p_{0}x^{2} +\widetilde{a}_4p_{0}t x+\widetilde{a}_6p_{0}^{2}t \\
p_0=st
\end{cases}.
\end{align}
The resolution II is then given by the following sequence of blowups
\begin{equation} 
\begin{tikzcd}[column sep=huge] 
X_0 \arrow[leftarrow]{r} {\displaystyle (x,y, p_0|p_1)} &  X_1 \arrow[leftarrow]{r} {\displaystyle (y,t,p_1| w_1)} &  X_2 \arrow[leftarrow]{r} {\displaystyle (t, p_0| w_2)} &  X_3  
\end{tikzcd} . 
\end{equation}
Where $X_0=\mathbb{P}[\mathscr{O}_B\oplus\mathscr{L}^{\otimes 2}\oplus \mathscr{L}^{\otimes 3}]$. The projective coordinates are then 
\begin{equation}
[p_{1}w_1x\,:\, p_{1}w_1^{2}y\,:\, z=1][x\,:\, w_1y\,:\, p_0 w_2][y\,:\, tw_2\,:\, p_1][t\,:\, p_0],
\end{equation}
and the proper transform is 
\begin{align}
Y: \ 
\begin{cases}
y(w_{1}y+\widetilde{a}_1p_{0}w_{2}x+\widetilde{a}_3p_{0}w_{2})= p_{1}x^{3}+\widetilde{a}_2p_{0}p_{1}w_{2}x^{2} +\widetilde{a}_4p_{0}tw_2^{2}x+\widetilde{a}_6p_{0}^{2}tw_{2}^{3}\\
p_0p_1=st
\end{cases}.
\end{align}
The variety $X_1=Bl_{(x,y,p_0)} X_0$ has double point singularities on $p_0=p_1= s=t=0$. The fibral divisors are:
\begin{align}
&\text{III}: \ 
\begin{cases}
 D^{\text{s}}_0: & s=p_0=w_1y^2-p_{1}x^{3}=0 \\
 D^{\text{s}}_1: & s=p_{1}=y(w_1y+\widetilde{a}_1p_0w_2x+\widetilde{a}_{3}p_0w_2)-(p_1x^3+\widetilde{a}_{4}p_0tw_2^2x+\widetilde{a}_{6}p_0^{2}t w_2^3)=0 \\
\end{cases} \\
&\text{IV}^{\text{s}}: \
\begin{cases}
 D^{\text{t}}_0: & w_2=p_0p_1-st=w_1y^2-p_1x^3=0 \\
 D^{\text{t}}_1: & t=p_1=w_1y+\widetilde{a}_1p_0w_2x+\widetilde{a}_3p_0w_2=0 \\
 D^{\text{t}}_2: & w_1=p_0p_1-st=y(\widetilde{a}_1p_0w_2x+\widetilde{a}_{3}p_0w_2)-(p_{1}x^{3}+\widetilde{a}_{2}p_0p_{1}w_2x^{2}+\widetilde{a}_{4}p_0tw_2^{2}x+\widetilde{a}_{6}p_0^{2}tw_2^{3})=0
\end{cases}
\end{align}

At the intersection of $S$ and $T$, the fiber enhances to a non-Kodaira fiber presented in Figure \ref{fig:IIIIVs.Res2.cd2}. This is realized by the following splitting of curves. 
\begin{equation}
\text{On} \ S\cap T: \ 
\begin{cases}
& \begin{tikzcd}  C^{\text{s}}_0 \arrow[rightarrow]{r}  & \eta_0^0+\eta_{01}^2  \end{tikzcd} \\
& \begin{tikzcd}  C^{\text{s}}_1 \arrow[rightarrow]{r}  & \eta_{01}^2+2\eta_1^{02}+\eta_1^1+\eta_1^2  \end{tikzcd}\\
& \begin{tikzcd}  C^{\text{t}}_0 \arrow[rightarrow]{r}  & \eta_0^0+\eta_1^{02}  \end{tikzcd}\\
& \begin{tikzcd}  C^{\text{t}}_1 \arrow[rightarrow]{r}  & \eta_1^1  \end{tikzcd}\\
& \begin{tikzcd}  C^{\text{t}}_2 \arrow[rightarrow]{r}  & 2\eta_{01}^2+\eta_1^{02}+\eta_1^2  \end{tikzcd}\\
\end{cases}
\end{equation}
The curves at the intersection are given by
\begin{align}
\text{On} \ S\cap T: \ 
\begin{cases}
C^{\text{s}}_{0}\cap C^{\text{t}}_{1} \quad\longrightarrow & \eta_0^0:  s=p_{0}=w_2=w_1y^2-p_{1}x^{3}=0 \\
C^{\text{s}}_{0}\cap C^{\text{t}}_{2} \quad\longrightarrow & \eta_{0}^{2}:  s=p_{0}=w_1=p_1=0 \\
C^{\text{s}}_{1}\cap C^{\text{t}}_{0} \quad\longrightarrow & \eta_1^{02}:  s=p_{1}=w_2=w_1=0 \\
C^{\text{s}}_{1}\cap C^{\text{t}}_{1} \quad\longrightarrow & \eta_{1}^{1}:  s=p_{1}=t=w_1y+\widetilde{a}_{3}p_{0}w_2=0 \\
C^{\text{s}}_{1}\cap C^{\text{t}}_{2} \quad\longrightarrow & \eta_{1}^{02}: s=p_{1}=w_1=w_2=0, \ \eta_{01}^2: s=p_{1}=w_1=p_{0}=0,\\
	& \eta_1^2: s=p_{1}=w_1=\widetilde{a}_{3}y-\widetilde{a}_{4}tw_2x+\widetilde{a}_{6}p_{0}tw_2^2=0 \\
\end{cases}
\end{align}
Note that we had the same fiber earlier in the resolution I of the $\mathrm{I_{2}^{\text{s}}+I_{3}^{\text{s}}}$-model in codimension-three with a condition $a_1=0$.

\begin{table}[htb]
\begin{center}
\begin{tabular}{|c|c|c|c|c|c|c|c|}
\hline 
& $D^{\text{s}}_{0}$ & $D^{\text{s}}_{1}$ & $D^{\text{t}}_{0}$ & $D^{\text{t}}_{1}$ & $D^{\text{t}}_{2}$& Weight& Representation\\
\hline 
\hline 
$\eta_0^0$ & -1 & 1 & -1 & 0 & 1 & [-1;0,-1] & $\bf{(2,3)}$\\
\hline 
$\eta_{01}^2$ & -1 & 1 & 1 & 0 & -1 & [-1;0,1] & $\bf{(2,\bar{3})}$\\
\hline 
$\eta_1^1$ & 0  & 0 & 1 & -2 & 1 & [0;2,-1] &   $\bf{(1,8)}$\\
\hline 
$\eta_1^{02}$ & 1 & -1 & -1 & 1 & 0 & [1;-1,0] &   $\bf{(2,\bar{3})}$\\
\hline 
$\eta_1^2$ & 1 & -1 & 0 & 0 & 0 & [1;0,0] & $\bf{(2,1)}$\\
\hline 
\end{tabular}
\end{center}
\caption{
Weights of vertical curves and representations  in the resolution 
 II of the III$+$IV$^{\text{s}}$-model.}
\end{table}

The chain  $\eta_1^{02}+\eta_{01}^{2}+\eta_1^2$ produces the weight $[1;-1,1]$ of the representation $(\mathbf{2,3})$. In the case of the resolution II of the $\mathrm{I_{2}^{\text{s}}+I_{3}^{\text{s}}}$-model, the sum of the curves corresponds to $\eta_1^2$ in codimension-two, which splits into the three curves in codimension-three.

\begin{figure}[H]
\centering
\includegraphics{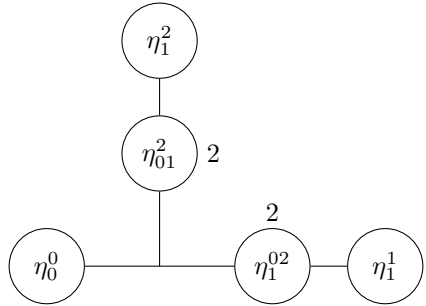}
\caption{Fiber over the generic point of the locus $S\cap T$ in Resolution II of the III+IV$^{\text{s}}$-model.  \label{fig:IIIIVs.Res2.cd2}}
\end{figure}
There is an enhancement when $\widetilde{a}_3$ as the  two curves $\eta_1^1$ and $\eta_1^2$ degenerate as follows:
\begin{equation}
\text{on} \ S\cap T\cap V(\widetilde{a}_{3}): \ 
\begin{cases}
& \begin{tikzcd}  \eta_1^1 \arrow[rightarrow]{r}  & \eta_1^{12} ,  \end{tikzcd} \\
& \begin{tikzcd}  \eta_1^2 \arrow[rightarrow]{r}  & \eta_1^{02}+\eta_1^{12}+\eta_1^{2} , \end{tikzcd}\\
 \end{cases}
 \end{equation}
 where the new curves are given by
 \begin{equation}
\begin{cases}
\eta_1^{12}: & s=p_1=t=w_1=0,\\
\eta_1^{02}: & s=p_1=w_1=w_2=0, \\
\eta_1^{2}: & s=p_1=w_1=\widetilde{a}_4 x-\widetilde{a}_6p_0 w_2=0.
\end{cases}
\end{equation}
For this codimension-three enhancement, we get a non-Kodaira fiber corresponding to a non-Kodaira fiber corresponding to a contracted fiber of type IV$^*$ as in Figure \ref{fig:IIIIVs.Res2.cd3}.
\begin{figure}[H]
\centering
\includegraphics{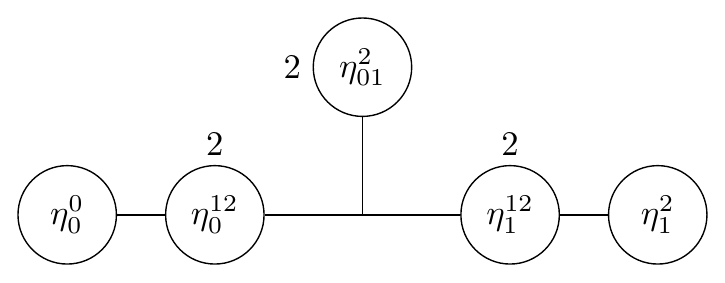}
\caption{Fiber over the generic point of the locus $S\cap T\cap V(\widetilde{a}_3)$ in Resolution II of  the III+IV$^{\text{s}}$-model. \label{fig:IIIIVs.Res2.cd3}} 
\end{figure}

\subsection{Resolution III}\label{Sec:IIIIVSResIII}
 The Resolution III is defined by the following sequence of  blowups:
\begin{equation} 
\begin{tikzcd}[column sep=huge] 
X_0 \arrow[leftarrow]{r} {\displaystyle (x,y, t|w_1)} &  X_1 \arrow[leftarrow]{r} {\displaystyle (x,y, s| e_1)} &  X_2 \arrow[leftarrow]{r} {\displaystyle (y, w_1| w_2)} &  X_3 
\end{tikzcd}
\end{equation}
The projective coordinates are then given by
\begin{equation}
[e_{1}w_{1}w_{2}x\,;\, e_{1}w_{1}w_{2}^{2}y\,;\, z=1][e_{1}x\,;\, e_{1}w_{2}y\,;\, t][x\,;\, w_{2}y\,;\, s][y\,;\, w_{1}].
\end{equation}
The proper transform is identical to equation \eqref{eq:PTofIIIIVs}.

On the intersection of $S$ and $T$, we see the following curves:
\begin{align}
\text{On} \ S\cap T: \ 
\begin{cases}
C_{0}^{\text{s}}\cap C_{0}^{\text{t}}& \longrightarrow  \eta_0^0: s=t=w_{2}y^{2}-w_{1}e_{1}x^{3}=0, \\
C_{0}^{\text{s}}\cap C_{1}^{\text{t}}& \longrightarrow  \eta_0^{12}: s=w_{1}=w_{2}=0 , \\
C_{0}^{\text{s}}\cap C_{2}^{\text{t}}& \longrightarrow  \eta_{01}^2: s=w_{2}=e_{1}=0,\ \eta_0^{12}: s=w_{2}=w_{1}=0, \\
C_{1}^{\text{s}}\cap C_{1}^{\text{t}}& \longrightarrow  \eta_1^1:  e_{1}=w_{1}=w_{2}y+\widetilde{a}_3st=0 , \\
C_{1}^{\text{s}}\cap C_{2}^{\text{t}}& \longrightarrow  \eta_{01}^2: e_{1}=w_{2}=s=0, \eta_1^2: e_{1}=w_{2}=\widetilde{a}_3y-tw_{1}(\widetilde{a}_4x+\widetilde{a}_6st)=0. \\
\end{cases}
\end{align}
Hence, we can deduce that the five fibral divisors split in the following way to produce the fiber in codimension-two, which is presented in Figure \ref{fig:IIIIVs.Res3.cd2}. 
\begin{equation}
\text{On} \ S\cap T: \ 
\begin{cases}
& \begin{tikzcd}  C_0^{\text{s}} \arrow[rightarrow]{r}  & \eta_0^0+2\eta_0^{12}+\eta_{01}^2 , \end{tikzcd} \\
& \begin{tikzcd}  C_1^{\text{s}} \arrow[rightarrow]{r}  & \eta_{01}^2+\eta_1^1+\eta_1^2 , \end{tikzcd}\\
& \begin{tikzcd}  C_0^t \arrow[rightarrow]{r}  & \eta_0^0,  \end{tikzcd}\\
& \begin{tikzcd}  C_1^t \arrow[rightarrow]{r}  & \eta_0^{12}+\eta_1^1,  \end{tikzcd}\\
& \begin{tikzcd}  C_2^t \arrow[rightarrow]{r}  & \eta_0^{12}+2\eta_{01}^2+\eta_1^2 . \end{tikzcd}\\
\end{cases}
\end{equation}

We observe that we had the same fiber earlier in the resolution III of the $\mathrm{I_{2}^{\text{s}}+I_{3}^{\text{s}}}$-model in codimension-three with a condition $a_1=0$.

In order to get the weights of the curves, the intersection numbers are computed between the codimension-two curves and the fibral divisors.
\begin{table}[H]
\begin{center}
\begin{tabular}{|c|c|c|c|c|c|c|c|}
\hline 
& $D^{\text{s}}_{0}$ & $D^{\text{s}}_{1}$ & $D^{\text{t}}_{0}$ & $D^{\text{t}}_{1}$ & $D^{\text{t}}_{2}$& Weight& Representation\\
\hline 
\hline 
$\eta_0^0$ & 0 & 0 & -2 & 1 & 1 & [0;-1,-1] & $\bf{(1,8)}$\\
\hline 
$\eta_0^{12}$ & -1 & 1 & 1 & -1 & 0 & [-1;1,0] & $\bf{(2,3)}$\\
\hline 
$\eta_1^1$ & 1 & -1 & 0 & -1 & 1 & [1;1,-1] & $\bf{(2,\bar{3})}$\\
\hline 
$\eta_{01}^2$ & 0 & 0 & 0 & 1 & -1 & [0;-1,1] & $\bf{(1,3)}$\\
\hline 
$\eta_1^2$ & 1 & -1 & 0 & 0 & 0 & [1;0,0] & $\bf{(2,1)}$\\
\hline 
\end{tabular}
\end{center}
\caption{Weights of vertical curves and representations  in the resolution 
 III of the III$+$IV$^{\text{s}}$-model at the III$+$IV$^{\text{s}}$ collision.}
\end{table}

\begin{figure}[H]
\begin{center}
\includegraphics{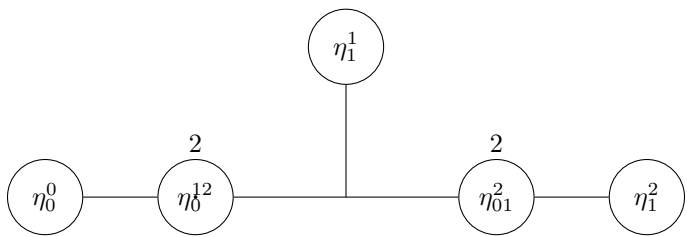}
\caption{Fiber over the generic point of the locus $S\cap T$ in Resolution III of  the III+IV$^{\text{s}}$-model.   
 \label{fig:IIIIVs.Res3.cd2}}
\end{center}
\end{figure}

The sum of the two curves $\eta_0^{12}+\eta_{01}^{2}$ produce the weight $[-1;0,1]$ of the representation $(\mathbf{2,\bar{3}})$. Moreover, the sum of $\eta_{01}^{2}+\eta_1^2$ produce the weight $[1;-1,1]$, which corresponds to a representation $(\mathbf{2,3})$. In the case of the resolution III of the $\mathrm{I_{2}^{\text{s}}+I_{3}^{\text{s}}}$-model, the former sum corresponds to $\eta_0^2$ and the latter sum corresponds to $\eta_1^2$ in codimension-two.

Over $S\cap T\cap V(\widetilde{a}_3)$ the curve $\eta_1^2$ splits as:
\begin{equation}
\text{on} \ S\cap T\cap V(\widetilde{a}_{3}): \ 
\begin{cases}
\begin{aligned}
& \begin{tikzcd}  \eta_1^1 \arrow[rightarrow]{r}  & \eta_1^{12}, \end{tikzcd} \\
& \begin{tikzcd}  \eta_1^2 \arrow[rightarrow]{r}  & \eta_1^{12}+\eta_1^{2}, \end{tikzcd}\\
\end{aligned}
\end{cases}
\end{equation}
where the new curves are given by
\begin{align}
\begin{cases}
\eta_1^{12}: & e_{1}=w_{1}=w_{2}=0, \\
\eta_1^{2}: & e_{1}=w_{2}=\widetilde{a}_4x+\widetilde{a}_6st=0.
\end{cases}
\end{align}
This corresponds to the codimension-three enhancement in Figure \ref{fig:IIIIVs.Res3.cd3}.
\begin{figure}[H]
\begin{center}
\includegraphics{IIIIVsRes2cd3}
\caption{Fiber over the generic point of the locus $S\cap T\cap V(\widetilde{a}_3)$ in Resolution III of  the III+IV$^{\text{s}}$-model.   
}
\label{fig:IIIIVs.Res3.cd3}
\end{center}
\end{figure}

\subsection{Resolution IV}\label{Sec:IIIIVSResIV}
The Resolution IV is defined by the following sequence of blowups 
\begin{equation} 
\begin{tikzcd}[column sep=huge] 
X_0 \arrow[leftarrow]{r} {\displaystyle (x,y, t|w_1)} &  X_1 \arrow[leftarrow]{r} {\displaystyle (y, w_1| w_2)} &  X_2 \arrow[leftarrow]{r} {\displaystyle (x,y, s| e_1)} &  X_3  .
\end{tikzcd}  
\end{equation}
Its projective coordinates are then given by
\begin{equation}
[e_{1}w_{1}w_{2}x\,;\, e_{1}w_{1}w_{2}^{2}y\,;\, z=1][e_{1}x\,;\, e_{1}w_{2}y\,;\, t][e_{1}y\,;\, w_{1}][x\,;\, y\,;\, s].
\end{equation}
The proper transform is identical to equation \eqref{eq:PTofIIIIVs}.

Over the generic point of  the intersection of $S$ and $T$, we see the following irreducible vertical curves.
\begin{align}
\text{On}\ S\cap T: \ 
\begin{cases}
C_0^{\text{s}}\cap C_0^t \quad\longrightarrow & \eta_0^0:  s=t=w_{2}y^{2}-w_{1}e_{1}x^{3}=0, \\
C_0^{\text{s}}\cap C_1^t \quad\longrightarrow & \eta_0^{12}:  s=w_{1}=w_{2}=0, \\
C_0^{\text{s}}\cap C_2^t \quad\longrightarrow & \eta_{0}^{2A}:  s=w_{2}=x=0, \ \eta_{01}^{2}: s=w_{2}=e_{1}=0, \ \eta_0^{12}: s=w_{2}=w_{1}=0 \\
C_1^{\text{s}}\cap C_2^t \quad\longrightarrow & \eta_{01}^{2}: e_{1}=w_{2}=s=0, \eta_1^2:  e_{1}=w_{2}=\widetilde{a}_{3}y-tw_{1}(\widetilde{a}_{4}x+\widetilde{a}_{6}st)=0.
\end{cases}
\end{align} 
The collision gives the following splitting of curves from the fiber III and IV$^{\text{s}}$ resulting in the fiber illustrated in Figure \ref{fig:IIIIVs.Res4.cd2}:
\begin{equation}
\text{On}\ S\cap T: \ 
\begin{cases}
& \begin{tikzcd}  C_0^{\text{s}} \arrow[rightarrow]{r}  & \eta_0^0+2\eta_0^{12}+3\eta_0^{2A}+\eta_{01}^2,  \end{tikzcd} \\
& \begin{tikzcd}  C_1^{\text{s}} \arrow[rightarrow]{r}  & \eta_{01}^2+\eta_1^2,  \end{tikzcd}\\
& \begin{tikzcd}  C_0^t \arrow[rightarrow]{r}  & \eta_0^0,  \end{tikzcd}\\
& \begin{tikzcd}  C_1^t \arrow[rightarrow]{r}  & \eta_0^{12},  \end{tikzcd}\\
& \begin{tikzcd}  C_2^t \arrow[rightarrow]{r}  & \eta_0^{12}+3\eta_0^{2A}+2\eta_{01}^2+\eta_1^2.  \end{tikzcd}\\
\end{cases}
\end{equation}
\begin{figure}[H]
\begin{center}
\includegraphics{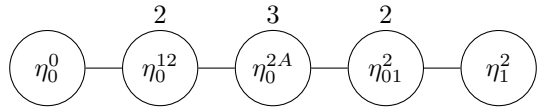}
\caption{Fiber over the generic point of the locus $S\cap T$ in Resolution IV of  the III+IV$^{\text{s}}$-model.     
 \label{fig:IIIIVs.Res4.cd2}}
\end{center}
\end{figure}
Note that we had the same fiber earlier in the resolution III of the $\mathrm{I_{2}^{\text{s}}+I_{3}^{\text{s}}}$-model in codimension-three with a condition $a_1=0$.

The weights of these vertical curves and the corresponding representations are collected in Table \ref{Table:Weights.Res.IV.III+IV}.
In order to get the weights of the curves, the intersection numbers are computed between the codimension-two curves and the fibral divisors.
\begin{table}[htb]
\begin{center}
\begin{tabular}{|c|c|c|c|c|c|c|c|}
\hline 
  & $D^{\text{s}}_{0}$ & $D^{\text{s}}_{1}$ & $D^{\text{t}}_{0}$ & $D^{\text{t}}_{1}$ & $D^{\text{t}}_{2}$& Weight& Representation\\
\hline 
\hline 
$\eta_0^0$ & 0 & 0 & -2 & 1 & 1 & [0;-1,-1] & $\bf{(1,8)}$\\
\hline 
$\eta_0^{12}$ & 0 & 0 & 1 & -2 & 1 & [0;2,-1] & $\bf{(1,8)}$\\
\hline 
$\eta_{0}^{2A}$ & -1 & 1 & 0 & 1 & -1 & [-1;-1,1] & $\bf{(2,3)}$\\
\hline 
$\eta_{01}^{2}$ & 1 & -1 & 0 & 0 & 0 & [1;0,0] & $\bf{(2,1)}$\\
\hline 
$\eta_1^2$ & 1 & -1 & 0 & 0 & 0  & [1;0,0] & $\bf{(2,1)}$\\
\hline 
\end{tabular}
\end{center}
\caption{Weights of vertical curves and representations  in the resolution 
 IV of the III+IV$^{\text{s}}$.\label{Table:Weights.Res.IV.III+IV}}
\end{table}

The sum of the three curves $\eta_1^{02}+\eta_{01}^{2}+\eta_1^2$ produce the weight $[1;-1,1]$, which yields a representation $(\mathbf{2,3})$. In the case of the resolution IV of the $\mathrm{I_{2}^{\text{s}}+I_{3}^{\text{s}}}$-model, the sum of the curves corresponds to $\eta_1^2$ in codimension-two, which splits into the three curves in codimension-three.

The fiber specialize further over $S\cap T\cap V(\widetilde{a}_3)$ as the curve  $\eta_1^2$ becomes
\begin{equation}
\text{on}\ S\cap T\cap V(\widetilde{a}_{3}): \ 
\eta_1^2 \rightarrow \eta_1^{2}: \, e_{1}=w_{2}=\widetilde{a}_{4}x+\widetilde{a}_{6}st=0.
\end{equation}
What is now different from other points of $S\cap T$ is that the curve $\eta_1^2$ now intersect both $\eta_{01}^{2}$ and $\eta_{0}^{2A}$ at the same point  resulting in a different fiber structure  represented in Figure \ref{fig:IIIIVs.Res4.cd3}.

\begin{figure}[H]
\begin{center}
\includegraphics{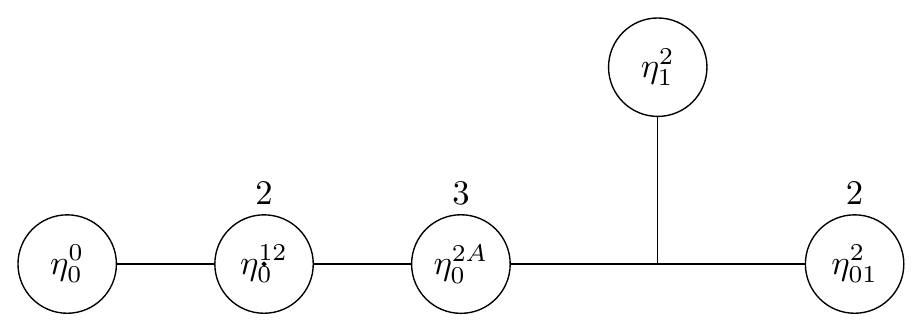}
\caption{Fiber over the generic point of the locus $S\cap T\cap V(\widetilde{a}_3)$ in Resolution IV of  the III+IV$^{\text{s}}$-model.   
 \label{fig:IIIIVs.Res4.cd3}}
\end{center}
\end{figure}

\section{$5d$ and $6d$ supergravity theories with eight supercharges}\label{sec:Physics}
In this section, we discuss the  five and six-dimensional theories with eight supercharges via Calabi--Yau compactification of M-theory and F-theory on an \susu-model. We first compute the one-loop prepotentials of the \susu-model in all eight chambers and match them with the triple intersection numbers of the corresponding crepant resolutions to get the number of charged hypermultiplets in each irreducible representations (fundamentals, adjoints, and bifundamental) in Section \ref{sec:5d}. We then describe in Section \ref{sec:6dAnomaly} how the anomaly cancellation conditions in six-dimensional theory with a gauge group can be derived with geometric data with number of charged hypermultiplets as the unknowns. We return to our case for the gauge group \susu\, in Section \ref{sec:anomaly} and show that we get a unique solution of number of hypers for the six-dimensional theory and find it to match with the result we got for the five-dimensional theory in Section \ref{sec:5d}.

\subsection{$5d$ $\mathcal{N}=1$ supergravity theory with a gauge group \susu }
\label{sec:5d}

M-theory (an eleven-dimensional supergravity theory ) compactified on a Calabi--Yau threefold yields a low energy physics that describes five-dimensional ${\mathcal N}=1$ supergravity. 
The field content and interactions of such a 5d theory are determined by the topology and intersection ring of the Calabi--Yau threefold \cite{Cadavid:1995bk,Ferrara:1996hh}. 
When the Calabi--Yau threefold is a $G$-model, the resulting five-dimensional supergravity has a gauge group $G$. 

In the Coulomb phase of an $\mathcal{N}=1$ supergravity theory in five-dimensional spacetime, 
 the scalar fields of the vector multiplets are restricted to the Cartan sub-algebra of the Lie group as the Lie group is broken to $U(1)^r$ where $r$ is the rank of the group. It follows that the charge of a  hypermultiplet is simply given by a  
 weight of the representation under which it transforms.   When the representation has zero weights, only the hypermultiplets charged by non-zero weights are considered charged hypermultiplets. Let $\phi$ be in the Cartan subalgebra of a Lie algebra $\mathfrak{g}$. The one-loop prepotential obtained by integrating out the charged hypermultiplets is \cite{IMS}
 \begin{align}
6\mathscr{F}_{\text{IMS}} =&\frac{1}{2} \left(
\sum_{\alpha} |\langle \alpha, \phi \rangle|^3-\sum_{i} \sum_{\varpi\in \mathbf{R}_i} n_{\mathbf{R}_i} |\langle \varpi, \phi\rangle|^3 
\right),
\end{align}
where $\alpha$ are the simple roots, $\mathbf{R}_i$ are the irreducible representations of the gauge group, $\varpi$ are the weights of $\mathbf{R}_i$ \label{Eq:IMS}. This one-loop prepotential is the quantum contribution to the prepotential of a five-dimensional gauge theory with the matter fields in the representations  of the gauge group. The prepotential depends on the choice of a Coulomb chamber to get rid of the absolute values. 
We compute it for each of the eight chambers of an \susu-model. The chambers are defined by Table \ref{Table:ChambersIneq}. 
\begin{thm}\label{Thm:Prepotentials}
The prepotential of an \susu-model in the eight phases defined by the 
chambers of Table \ref{Table:ChambersIneq} are 
\quad

\begin{itemize}
\item Chamber 1
\begin{align}\nonumber
\begin{aligned}
6\mathscr{F}^{(1)}_{\text{IMS}}=& -8 (n_{\mathbf{1,8}}-1) \phi _1^3 -\frac{3}{2} (n_{\mathbf{1,3}}+n_{\mathbf{1,\bar{3}}}-2 n_{\mathbf{1,8}}+2)\phi _1^2 \phi _2 \\
&+\frac{3}{2} (n_{\mathbf{1,3}}+n_{\mathbf{1,\bar{3}}}+2 n_{\mathbf{1,8}}-2) \phi _1 \phi _2^2 +(-n_{\mathbf{1,3}}-n_{\mathbf{1,\bar{3}}}-8 n_{\mathbf{1,8}}+8)\phi _2^3 \\
& -(n_{\mathbf{2,1}}+3 n_{\mathbf{2,3}}+3 n_{\mathbf{2,\bar{3}}}+8 n_{\mathbf{3,1}}-8)\psi _1^3 -6(n_{\mathbf{2,3}}+n_{\mathbf{2,\bar{3}}}) \psi _1 \left(\phi _1^2 -\phi _1 \phi _2 +\phi _2^2\right)
\end{aligned}
\end{align}
\item Chamber 2
\begin{align}\nonumber
\begin{aligned}
6\mathscr{F}^{(2)}_{\text{IMS}}=& -8 (n_{\mathbf{1,8}}-1) \phi_1^3 -\frac{3}{2} (n_{\mathbf{1,3}}+n_{\mathbf{1,\bar{3}}}-2 n_{\mathbf{1,8}}+2) \phi_1^2\phi_2 +\frac{3}{2} (n_{\mathbf{1,3}}+n_{\mathbf{1,\bar{3}}}+2 n_{\mathbf{1,8}}-2)\phi_1 \phi_2^2 \\
&-(n_{\mathbf{1,3}}+n_{\mathbf{1,\bar{3}}}+8 n_{\mathbf{1,8}}+n_{\mathbf{2,3}}+n_{\mathbf{2,\bar{3}}}-8)\phi_2^3 -(n_{\bf{2,1}}+2 (n_{\mathbf{2,3}}+n_{\mathbf{2,\bar{3}}}+4 \text{n31}-4))\psi_1^3 \\
&-3(n_{\mathbf{2,3}}+n_{\mathbf{2,\bar{3}}}) \psi_1^2 \phi_2 
-3(n_{\mathbf{2,3}}+n_{\mathbf{2,\bar{3}}}) \psi_1 \left(2\phi_1^2 -2\phi_1 \phi_2+\phi_2^2\right)
\end{aligned}
\end{align}
\item Chamber 3
\begin{align}\nonumber
\begin{aligned}
6\mathscr{F}^{(3)}_{\text{IMS}}=& -(8 n_{\mathbf{1,8}}+n_{\mathbf{2,3}}+n_{\mathbf{2,\bar{3}}}-8) \phi _1^3 -\frac{3}{2} (n_{\mathbf{1,3}}+n_{\mathbf{1,\bar{3}}}-2 n_{\mathbf{1,8}}+2) \phi _1^2 \phi _2 \\
& +\frac{3}{2} (n_{\mathbf{1,3}}+n_{\mathbf{1,\bar{3}}}+2 n_{\mathbf{1,8}}-2) \phi _1 \phi _2^2 + (-n_{\mathbf{1,3}}-n_{\mathbf{1,\bar{3}}}-8 n_{\mathbf{1,8}}-n_{\mathbf{2,3}}-n_{\mathbf{2,\bar{3}}}+8)\phi _2^3 \\
& +(-n_{\mathbf{2,1}}-n_{\mathbf{2,3}}-n_{\mathbf{2,\bar{3}}}-8 n_{\mathbf{3,1}}+8)\psi _1^3 -3 (n_{\mathbf{2,3}}+n_{\mathbf{2,\bar{3}}}) \psi _1^2 \left(\phi _1+ \phi _2\right)\\
& -3 (n_{\mathbf{2,3}}+n_{\mathbf{2,\bar{3}}}) \psi _1 \left(\phi _1^2 -2\phi _1\phi _2 +\phi _2^2\right)
\end{aligned}
\end{align}
\item Chamber 4
\begin{align}\nonumber
\begin{aligned}
6\mathscr{F}^{(4)}_{\text{IMS}}=& -8 (n_{\mathbf{1,8}}-1) \phi _1^3 -\frac{3}{2} (n_{\mathbf{1,3}}+n_{\mathbf{1,\bar{3}}}-2n_{\mathbf{1,8}}+2n_{\mathbf{2,3}}+2n_{\mathbf{2,\bar{3}}}+2) \phi _1^2\phi _2 \\
& +\frac{3}{2} (n_{\mathbf{1,3}}+n_{\mathbf{1,\bar{3}}}+2 n_{\mathbf{1,8}}+2n_{\mathbf{2,3}}+2n_{\mathbf{2,\bar{3}}}-2) \phi _1 \phi _2^2 \\
& -(n_{\mathbf{1,3}}+n_{\mathbf{1,\bar{3}}}+8n_{\mathbf{1,8}}+2n_{\mathbf{2,3}}+2n_{\mathbf{2,\bar{3}}}-8))\phi _2^3 \\
& -(n_{\mathbf{2,1}}+8 n_{\mathbf{3,1}}-8)\psi _1^3 -6 (n_{\mathbf{2,3}}+n_{\mathbf{2,\bar{3}}})\psi _1^2 \phi _2 
\end{aligned}
\end{align}
\end{itemize}
The prepotential $\mathscr{F}^{(i')}_{\text{IMS}}$ (for $i=1,2,3,4$) is obtained from  $\mathscr{F}^{(i)}_{\text{IMS}}$ by  the involution $\phi_1\leftrightarrow\phi_2$. 
\end{thm}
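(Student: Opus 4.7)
The plan is a direct application of the Intriligator--Morrison--Seiberg formula, using the chamber inequalities of Table \ref{Table:ChambersIneq} to resolve every absolute value into a signed cubic. I first list the simple roots of $\mathfrak{g}=\text{A}_1\oplus\text{A}_2$ in the dual basis to $\phi=(\psi_1;\phi_1,\phi_2)$: the simple roots pair with $\phi$ as $\psi_1$, $2\phi_1-\phi_2$, $-\phi_1+2\phi_2$, which are strictly positive on the open dual fundamental Weyl chamber \eqref{eq:DFW}; positive roots pair as $\psi_1$, $\phi_1$, $\phi_2$, $\phi_2-\phi_1$. Since these linear forms are positive on the entire dual Weyl chamber, the root contribution $\tfrac{1}{2}\sum_{\alpha>0}2|\langle\alpha,\phi\rangle|^3$ is the same in all eight chambers and only the matter contribution varies.

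Next, I enumerate weight pairings from Table \ref{Table:Weight}. For each irreducible summand $\mathbf{R}_i$, the weight set is invariant under $\varpi\mapsto -\varpi$ (and under exchange $\mathbf{r}\leftrightarrow\bar{\mathbf{r}}$ for the pair of conjugate reps), so only the totals $n_{\mathbf{1,3}}+n_{\mathbf{1,\bar{3}}}$, $n_{\mathbf{2,3}}+n_{\mathbf{2,\bar{3}}}$, $n_{\mathbf{2,1}}$, $n_{\mathbf{3,1}}$, $n_{\mathbf{1,8}}$ can enter. The chamber description of Table \ref{Table:ChambersIneq} gives the sign of each of the five interior walls $\varpi^{(\mathbf{1,3})}_2$, $\varpi^{(\mathbf{2,3})}_2$, $\varpi^{(\mathbf{2,3})}_3$, $\varpi^{(\mathbf{2,3})}_4$, $\varpi^{(\mathbf{2,3})}_5$; the remaining linear forms $\psi_1$, $\phi_1$, $\phi_2$, $\phi_2-\phi_1$ are determined by the dual Weyl chamber conditions. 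With all signs fixed, each $|\langle\varpi,\phi\rangle|^3$ becomes an explicit signed monomial in the linear forms; summing over all weights of $\mathbf{R}_i$ with the multiplicity $n_{\mathbf{R}_i}$ and collecting by monomials in $\psi_1,\phi_1,\phi_2$ yields the stated cubic polynomial in each chamber.

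The last step is the claim about the primed chambers. The involution $\phi_1\leftrightarrow\phi_2$ permutes the weights of $\mathbf{R}$ according to the outer automorphism of $\text{A}_2$, which swaps $\mathbf{3}$ and $\bar{\mathbf{3}}$; since only the sums $n_{\mathbf{R}}+n_{\bar{\mathbf{R}}}$ appear in the computed prepotentials, the IMS functional is invariant under this substitution. Table \ref{Table:ChambersIneq} shows that $\phi_1\leftrightarrow\phi_2$ exchanges chambers $i$ and $i'$, so $\mathscr{F}^{(i')}_{\text{IMS}}(\psi_1,\phi_1,\phi_2)=\mathscr{F}^{(i)}_{\text{IMS}}(\psi_1,\phi_2,\phi_1)$, as stated.

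The main obstacle is pure bookkeeping: the representation $\mathbf{R}=(\mathbf{2,1})\oplus(\mathbf{1,3})\oplus(\mathbf{1,\bar{3}})\oplus(\mathbf{2,3})\oplus(\mathbf{2,\bar{3}})\oplus(\mathbf{3,1})\oplus(\mathbf{1,8})$ contains roughly two dozen nonzero weights, and in chambers where several linear forms change sign the cancellations between $|\ell|^3$ and $|\ell'|^3$ must be handled carefully. To control errors I would exploit the $\varpi\leftrightarrow -\varpi$ symmetry to pair weights, treat the two adjoint contributions separately (so that they contribute only to the pure $\phi$-cubic and pure $\psi_1^3$ terms), and handle the bifundamental last by grouping its six weights into three pairs of the form $(\pm\psi_1;\mu)$ so that $|\psi_1+\mu|^3+|{-\psi_1}+\mu|^3$ can be expanded as a single signed expression per chamber.
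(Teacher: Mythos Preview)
Your approach is exactly the paper's: its proof reads in full ``Direct computation starting with equation \eqref{Eq:IMS} and using Table \ref{Table:ChambersIneq} to remove the absolute values.'' Your outline is a faithful (and more detailed) unpacking of that sentence, including the observation that the primed chambers are obtained by $\phi_1\leftrightarrow\phi_2$.

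One correction to your bookkeeping before you carry it out: the positive roots of $\text{A}_1\oplus\text{A}_2$ in the fundamental-weight basis are $(2;0,0)$, $(0;2,-1)$, $(0;-1,2)$, $(0;1,1)$, so they pair with $\phi$ as $2\psi_1$, $2\phi_1-\phi_2$, $-\phi_1+2\phi_2$, $\phi_1+\phi_2$, not $\psi_1,\phi_1,\phi_2,\phi_2-\phi_1$. With the correct list all four linear forms are indeed strictly positive on the open dual Weyl chamber \eqref{eq:DFW}, so your claim that the root contribution is chamber-independent stands; your version would have made $\phi_2-\phi_1$ a root pairing, which changes sign across the wall $\varpi^{(\mathbf{1,3})}_2$ and would have spoiled that argument.
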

\begin{proof}
Direct computation starting with equation \eqref{Eq:IMS} and using Table \ref{Table:ChambersIneq} to remove the absolute values.
\end{proof}

Following \cite{EY}, the number of hypermultiplets are computed by comparing the prepotential and the triple intersection polynomial given  in  Theorem \ref{Thm:TripleInt}. Comparing the triple intersection numbers obtained in the resolutions I, II, III, IV with the prepotentials computed  respectively in chambers 1, 2, 3, 4, we get
\begin{align}\label{eq:numbers}
\begin{aligned}
&  n_{\mathbf{2,1}}+ 8 n_{\mathbf{3,1}} = 4 L S + 2 S^2 - 3 S T + 8, \quad &&n_{\mathbf{1,8}} = \frac{1}{2} (-L T + T^2 + 2), \\
& n_{\mathbf{2,3}}+ n_{\mathbf{2,\overline{3}}} = S T,  \quad  &&n_{\mathbf{1,3}}+n_{\mathbf{1,\overline{3}}} =  T (9 L - 2 S - 3 T). 
\end{aligned}
\end{align}
We see in particular that the numbers $ n_{\mathbf{2,1}}$ and $n_{\mathbf{3,1}}$ are restricted by a linear relation but are not fixed by this method. 
In the case of Calabi--Yau threefolds, the vanishing of the first Chern class yields  $K=-L$ where $K$ is the canonical class of the base $B$. 
Using Witten's genus formula \cite{Witten},  we get
\begin{equation}
n_{\mathbf{2,1}} =-S (8K+2 S+3 T),\quad n_{\mathbf{3,1}}= \frac{1}{2} (KS+S^2+2) .
\end{equation}
We notice that the number of (bi)fundamental matter is  compatible with what is expected from  the technique of intersecting branes, and when the threefold is Calabi--Yau, $K=-L$ and $n_{\mathbf{1,8}}$ becomes the arithmetic genus of the curve $T$ as expected from Witten's genus formula.
 

\subsection{Anomaly cancellations in general $6d$ $\mathcal{N}=(1,0)$ supergravity theory}
\label{sec:6dAnomaly}
F-theory compactified  on a Calabi--Yau threefold gives a six-dimensional supergravity theory with eight supercharges coupled to $n_V$ vector, $n_T$ tensor, and $n_H^0$ neutral hypermultiplets \cite{Morrison:1996pp}. 
When the Calabi--Yau variety is elliptically-fibered with a gauge group $G$ and a representation $\mathbf{R}$ \cite{Sadov:1996zm,GM1,Park}, 
\begin{itemize}
\item the number of vectors: $n_V=\dim G$, 
\item the number of tensors: $n_T=9 -K^2$,
\item the number of hypers: $n_H=n_H^0+n_H^{ch}$ with neutral hypers $n_H^0=h^{2,1}+1$,
\end{itemize}
where $K$ is the canonical class of the base $Bst g$ of the elliptic fibration, and we have charged hypermultiplets transforming in the representation $\mathbf{R}$ of $G$. We consider the semisimple gauge group with simple components $G_a$ such that $G=\sum_a G_a$.

Six-dimensional ${\mathcal N}=(1,0)$  gauge theories can suffer from anomalies. 
We only care about local anomalies: pure gravitational anomalies, pure gauge anomalies, and mixed gravitational and gauge anomalies \cite{Sadov:1996zm, GM1,Park}. An effective way to address these anomalies are via using Green-Schwarz mechanism in six-dimensions \cite{Green:1984bx,Sagnotti:1992qw,Schwarz:1995zw}. The anomaly polynomial I$_8$ has a pure gravitational contribution from the term proportional to the $\tr R^4$, which is given by $\propto (n_H-n_V^{(6)}+29n_T-273)\tr R^4$, where $R$ is the Riemann tensor thought of as a $6\times 6$ real matrix of two-form values. 
In order to have vanishing gravitational anomalies, the coefficient of $\tr R^4$ has to vanish \cite{Salam}:
\begin{equation}
n_H-n_V^{(6)}+29n_T-273=0.
\end{equation}
The remainder terms of the anomaly polynomial I$_8$ is given by \cite{Sadov:1996zm, GM1,Park}
\begin{equation}
I_8=\frac{K^2}{8} (\tr R^2)^2 +\frac{1}{6}\sum_{a} X^{(2)}_{a} \tr R^2-\frac{2}{3}\sum_{a} X^{(4)}_{a}+4\sum_{a<b}Y_{ab},
\end{equation}
where contributions from each simple gauge component $X^{(n)}_{a}$ for $n=2,4$ and the mixed contribution $Y_{ab}$ are given by  \cite{Sadov:1996zm, GM1,Park}
\begin{equation}
X^{(n)}_{a}=\tr_{\bf{adj}}F^n_a -\sum_{i}n_{\bf{R}_{i,a}}\tr_{\bf{R}_{i,a}}F^n_a, \quad 
Y_{ab}=\sum_{i,j} n_{\bf{R}_{i,a}, \bf{R}_{j,b}} \tr_{\bf{R}_{i,a}}F^2_a \tr_{\bf{R}_{j,b}}F^2_b,
\end{equation}
where $n_{\bf{R}_{i,a}, \bf{R}_{j,b}}$ is the number of hypermultiplets transforming in the representation $(\mathbf{R}_{i,a},\mathbf{R}_{j,b})$ of the gauge group $G_a\times G_b$. Note that the mixed term by computing all possible pairs of the simple components of the gauge groups which are denoted by two indices.

It is important to note that when a representation is charged on both simple components of the group, it affects not only  $Y_{ab}$ but also $X^{(2)}_a$ and $X^{(4)}_a$. Consider a representation $(\bf{R_1},\bf{R_2})$ for of a semisimple group with two simple components $G=G_1\times G_2$, where $\bf{R_a}$ is a representation of $G_a$. Then this representation contributes $\dim{\bf{R_2}}$ times   to $n_{\bf{R_1}}$, and contributes  $\dim{\bf{R_1}}$ times to $n_{\bf{R_2}}$:
\begin{align}
n_{\bf{R_1}}=\cdots+\dim{\bf{R_2}} \ n_{\bf{R_1},\bf{R_2}}, \quad n_{\bf{R_2}}=\cdots +\dim{\bf{R_1}} \ n_{\bf{R_1},\bf{R_2}}.
\label{eq:coupledtermnR}
\end{align}

Siince the hypermultiplets of zero weights are neutral, we have to remove the neutral hypermultiplet contributions to get the charged dimension of the hypermultiplets in each irreducible representation. By denoting the zero weights of a representation $\bf{R}_i$ as $\bf{R}^{(0)}_i$, the charged dimension of the hypermultiplets in representation $\bf{R}_i$ is given by \cite{GM1}
$$\dim{\bf{R}_i}-\dim{\bf{R}_{i}^{(0)}}.$$
For a representation $\bf{R}_i$, $n_{\bf{R}_i}$ denotes the multiplicity of the representation $\bf{R}_i$. Then the number of charged hypermultiplets is given by \cite{GM1}
\begin{equation}
n_H^{ch}=\sum_{i} n_{\bf{R}_i} \left( \dim{\bf{R}_i}-\dim{\bf{R_{i}^{(0)}}} \right).
\end{equation}

The trace identities for a representation $\mathbf{R}_{i,a}$ of a simple group $G_a$ are
\begin{equation}
\tr_{\bf{R}_{i,a}} F^2_a=A_{\bf{R}_{i,a}} \tr_{\bf{F}_a} F^2_a , \quad \tr_{\bf{R}_{i,a}} F^4_a=B_{\bf{R}_{i,a}} \tr_{\bf{F}_a} F^4_a+C_{\bf{R}_{i,a}} (\tr_{\bf{F}_a} F^2_a )^2
\end{equation}
with respect to a reference representation $\bf{F}_a$ for each simple component $G_a$ of the gauge group.\footnote{We denoted this representation as $\bf{F}_a$ as we picked the fundamental representations for convenience. However, any representation can be used as a reference representation.} The coefficients $A_{\bf{R}_{i,a}}$, $B_{\bf{R}_{i,a}}$, and $C_{\bf{R}_{i,a}}$ depends on the gauge groups and are listed in \cite{Erler,Avramis:2005hc,vanRitbergen:1998pn}.

Then we can write the gauge contribution terms with respect to the coefficients from the trace identities:
\begin{align}
X^{(2)}_a&=\left(A_{a,\bf{adj}}-\sum_{i}n_{\bf{R}_{i,a}}A_{\bf{R}_{i,a}}\right)\tr_{\bf{F}_a}F^2_a, \\
X^{(4)}_a&=\left(B_{a,\bf{adj}}-\sum_{i}n_{\bf{R}_{i,a}}B_{\bf{R}_{i,a}}\right)\tr_{\bf{F}_a}F^4_a
+\left(C_{a,\bf{adj}}-\sum_{i}n_{\bf{R}_{i,a}}C_{\bf{R}_{i,a}}\right)(\tr_{\bf{F}_a}F^2_a)^2 , \label{eq:X4}\\
Y_{ab}&=\sum_{i,j} n_{\bf{R}_{i,a},\bf{R}_{j,b}} A_{\bf{R}_{i,a}} A_{\bf{R}_{j,b}} \tr_{\bf{F}_a}F^2_a \tr_{\bf{F}_b}F^2_b, \quad (a\neq b).
\end{align}
For each simple component $G_a$, the anomaly polynomial  I$_8$ has a pure gauge contribution proportional to the quartic term $\tr F^4_a$, which is contained in equation \eqref{eq:X4}. In order to have a vanishing pure gauge anomalies, the coefficients of these terms have to vanish:
$$
B_{a,\bf{adj}}-\sum_{i}n_{\bf{R}_{i,a}}B_{\bf{R}_{i,a}}=0.
$$
When the coefficients of all quartic terms  ($\tr R^4$ and $\tr F^4_a$) vanish,  the remaining part of the anomaly polynomial I$_8$ is
\begin{align}
\begin{cases}
&\mathrm{I}_8=\frac{K^2}{8} (\tr R^2)^2 +\frac{1}{6}\sum_{a} X^{(2)}_{a} \tr R^2-\frac{2}{3}\sum_{a} X^{(4)}_{a}+4\sum_{a<b}Y_{ab}, \\[4pt]
&X^{(2)}_a=\left(A_{a,\bf{adj}}-\sum_{i}n_{\bf{R}_{i,a}}A_{\bf{R}_{i,a}}\right)\tr_{\bf{F}_a}F^2_a, \quad
X^{(4)}_a=\left(C_{a,\bf{adj}}-\sum_{i}n_{\bf{R}_{i,a}}C_{\bf{R}_{i,a}}\right)(\tr_{\bf{F}_a}F^2_a)^2, \\[4pt]
&Y_{ab}=\sum_{i,j} n_{\bf{R}_{i,a},\bf{R}_{j,b}} A_{\bf{R}_{i,a}} A_{\bf{R}_{j,b}} \tr_{\bf{F}_a}F^2_a \tr_{\bf{F}_b}F^2_b, \quad (a\neq b).
\end{cases}
\end{align}
The anomalies are canceled by the Green-Schwarz mechanism when I$_8$ factorizes \cite{Green:1984bx,Sagnotti:1992qw,Schwarz:1995zw}. 
The modification of the field strength $H$ of  the antisymmetric tensor $B$ is 
\begin{equation}
H= dB + \frac{1}{2} K \omega_{3L}+ 2\sum_a\frac{S_a}{\lambda_a}\omega_{a,3Y}, 
\end{equation}
where  $\omega_{3L}$ and $\omega_{a,3Y}$ are respectively the gravitational Yang--Mills and Chern--Simons  terms. 
 If I$_8$ factors as 
 \begin{equation}
 \text{I}_8= X\cdot  X, \quad X= \frac{1}{2} K \tr R^2 +\sum_a \frac{2}{\lambda_a} S_a\tr F^2_a,
 \end{equation}
where the  $\lambda_a$ are normalization factors  chosen such that the  smallest topological charge of an embedded SU($2$) instanton in G$_a$ is one \cite{Kumar:2010ru, Park, Bernard}. This forces $\lambda_a$ to be the Dynkin index of the fundamental representation of  $G_a$ as summarized in Table \ref{tb:normalization} \cite{Park}. \\

\begin{table}[h!]
\begin{center}
\begin{tabular}{|c|c|c|c|c|c|c|c|c|c|}
\hline
 $\mathfrak{g}$ & A$_n$ & B$_n$ & C$_n$ & D$_n$ & E$_8$ & E$_7$ & E$_6$&  F$_4$ & G$_2$ \\
 \hline
 $\lambda$ & $1$ & $2$  & $1$ & $2$ & $60$ & $12$ & $6$ & $6$ & $2$ \\
 \hline  
\end{tabular}
\caption{The normalization factors for each simple gauge algebra. See \cite{Kumar:2010ru}.}
\label{tb:normalization}
\end{center}
\end{table}
When I$_8$ factors, the anomaly is canceled by adding the following Green-Schwarz counter-term 
\begin{equation}
\Delta L_{GS}\propto \frac{1}{2} B \wedge X,
\end{equation}
which implies that $X$ carries string charges.

With all the local anomaly cancellation conditions in six-dimensions investigated, we can coalesce as the set of equations with number of charged hypermultiplets in each irreducible representations as unknwons. If the simple group $G_a$ is supported on a divisor $S_a$, the local anomaly cancellation conditions are the following equations   \cite{Sadov:1996zm, GM1,Park}:
\begin{subequations}\label{eq:AnomalyEqn}
\begin{align}
n_T&=9-K^2 , \\
n_H-n_V^{(6)}+29n_T-273 &=0,\\
\left(B_{a,\bf{adj}}-\sum_{i}n_{\bf{R}_{i,a}}B_{\bf{R}_{i,a}}\right)& = 0, \\
\lambda_a  \left(A_{a,\bf{adj}}-\sum_{i}n_{\bf{R}_{i,a}}A_{\bf{R}_{i,a}}\right) & =6  K\cdot S_a, \\
\lambda_a^2 \left(C_{a,\bf{adj}}-\sum_{i}n_{\bf{R}_{i,a}}C_{\bf{R}_{i,a}}\right) & =-3 S_a ^2, \\
\lambda_a \lambda_b \sum_{i,j} n_{\bf{R}_{i,a},\bf{R}_{j,b}} A_{\bf{R}_{i,a}} A_{\bf{R}_{j,b}}  & =S_a\cdot S_b, \quad (a\neq b).
\end{align}
\end{subequations}

Assuming the first three equations hold, 
cancelling the anomalies is equivalent to factoring the  anomaly polynomial  \cite{Sadov:1996zm, GM1,Park}
\begin{equation}
I_8 =\frac{K^2}{8} (\tr R^2)^2 +\frac{1}{6} (X^{(2)}_{1} +X^{(2)}_{2}) \tr R^2-\frac{2}{3} (X^{(4)}_{1}+X^{(4)}_{2})+4Y_{12} .
\end{equation}

To summarize, for a compactification on an elliptically-fibered Calabi--Yau threefold $Y$, the number of multiplets is \cite{GM1}
\begin{subequations}
\begin{align}
&n_V^{(6)}=\dim{G}, \quad n_T=h^{1,1}(B)-1=9-K^2 , \\
&n_H=n_H^0+n_H^{ch}=h^{2,1}(Y)+1+\sum_{i} n_{\bf{R}_i} \left( \dim{\bf{R}_i}-\dim{\bf{R_{i}^{(0)}}} \right),
\end{align}
\end{subequations}
where $n_{\bf{R}_i}$ is a number of hypermultiplets charged under each irreducible representation $\bf{R}_i$ that satisfies equations \eqref{eq:AnomalyEqn}.

\subsection{Anomaly cancellations in $6d$ $\mathcal{N}=(1,0)$ supergravity theory}
\label{sec:anomaly}
In this section, we check that the gravitational, gauge, and mixed gravitational-gauge anomalies of the six-dimensional supergravity are all canceled when the Lie algebra and the representation are
$$\mathfrak{g}= \text{A}_1\oplus \text{A}_2, \quad \bold{R}=(\bold{2},\bold{1})\oplus(\bold{1},\bold{3})\oplus(\bold{1},\bold{\bar{3}})\oplus(\bold{2},\bold{3})\oplus(\bold{2},\bold{\bar{3}})\oplus (\bold{3},\bold{1})\oplus (\bold{1},\bold{8}).$$
 We follow the approach of Sadov \cite{Sadov:1996zm} (see also \cite{GM1} and \cite{Monnier:2017oqd}) and use the notation of  \cite{  EKY2}. The six-dimensional anomaly cancellation conditions put linear constraints on the number of charged hypermultiplets transforming in each irreducible representations. 

First, we recall that the Euler characteristic of a Calabi--Yau threefold defined as a crepant resolution of the Weierstrass model of an \susu-model is 
\begin{equation}
\chi(Y)=-6 (10 K^2+5 K S+8 K T+S^2+2 S T+2 T^2),
\end{equation}
where $S$ supports  $\text{A}_1$ and $T$ supports  $\text{A}_2$.  We also assume that $S$ and $T$ are smooth divisors intersecting transversally. 

We will use the anomaly cancellation conditions to explicitly compute the number of hypermultiplets transforming in each representation by requiring all anomalies to cancel.
We will see that they are the same as those found in five-dimensional supergravity  by comparing the triple intersection numbers of the fibral divisors and the cubic prepotentials in the Coulomb phase. 

The Lie algebra of type A$_1$ (resp. A$_2$) only has a unique  quartic Casimir invariant so that we don't have to impose the vanishing condition for the coefficients of $ \mathrm{tr}\  F^4_1$  (resp. $ \mathrm{tr}\  F^4_2$). 
We have the following  trace identities
\begin{align}
& \mathrm{tr}_{\bf{3}}F^2_1=4 \mathrm{tr}_{\bf{2}}F^2_2, \quad  \mathrm{tr}_{\bf{3}} F^4_1=8 ( \mathrm{tr}_{\bf{3}} F^2_1)^2, \quad  \mathrm{tr}_{\bf{2}}F^4_1=\frac{1}{2} ( \mathrm{tr}_{\bf{2}}F^2_1)^2,\\
& \mathrm{tr}_{\bf{8}}F^2_2=6 \mathrm{tr}_{\bf{3}}F^2_2 , \quad  \mathrm{tr}_{\bf{8}} F^4_2=9 ( \mathrm{tr}_{\bf{3}} F^2_2)^2 , \quad   \mathrm{tr}_{\bf{3}} F^4_2=\frac{1}{2} ( \mathrm{tr}_{\bf{3}} F^2_2)^2,
\end{align}
which give 
\begin{align}
\begin{split}
X^{(2)}_1 &= (4 -4 n_{\bf{3,1}} - n_{\bf{2,1}} - 3 n_{\bf{2,3}} -3 n_{\bf{2,\bar{3}}}) \ { \mathrm{tr}}_{\bf{2}} F^2_1,\\
X^{(2)}_2 &=( 6 - 6n_{\bf{1,8}} - n_{\bf{1,3}} - n_{\bf{1,\bar{3}}}- 2 n_{\bf{2,3}} -2 n_{\bf{2,\bar{3}}})\ { \mathrm{tr}}_{\bf{3}} F^2_2,\\
X^{(4)}_1 &= (8  -8 n_{\bf{3,1}} - \frac{1}{2} n_{\bf{2,1}} - \frac{3}{2}n_{\bf{2,3}} -\frac{3}{2} n_{\bf{2,\bar{3}}} ) ( \mathrm{tr}_{\bf{2}} F^2_1)^2,\\
X^{(4)}_2 &=   (9 -9 n_{\bf{1,8}}-\frac{1}{2} n_{\bf{1,3}}-\frac{1}{2} n_{\bf{1,\bar{3}}}-n_{\bf{2,3}}-n_{\bf{2,\bar{3}}})(\mathrm{tr}_{\bf{3}} F^2_2)^2,\\
Y_{\bf{23}} \ &= (n_{\bf{2,3}} + n_{\bf{2,\bar{3}}}) \ { \mathrm{tr}}_{\bf{3}} F^2_2 \ { \mathrm{tr}}_{\bf{2}} F^2_1.
\end{split}
\end{align}
Following Sadov, the anomaly cancellation conditions are \cite{Sadov:1996zm}:
\begin{align}
\begin{split}
X^{(2)}_1 &=6 K S \ { \mathrm{tr}}_{\bf{2}}  F^2_1,\quad \quad
X^{(2)}_2 =6 K T\ { \mathrm{tr}}_{\bf{3}} F^2_2, \\
X^{(4)}_1 &= -3  S^2 ( \mathrm{tr}_{\bf{2}} F^2_1)^2,\quad
X^{(4)}_2 = -3 T^2(\mathrm{tr}_{\bf{3}} F^2_2)^2,\quad
Y_{\bf{23}} = S T \ { \mathrm{tr}}_{\bf{3}} F^2_2 \ { \mathrm{tr}}_{\bf{2}} F^2_1.
\end{split}
\end{align}
Comparing the coefficients, we get the following linear equations
\begin{align}
\begin{aligned}
6 (1 - n_{\bf{1,8}}) - (n_{\bf{1,3}} + n_{\bf{1,\bar{3}}}) - 2 (n_{\bf{2,3}} + n_{\bf{2,\bar{3}}}) &=  6 K T,\quad 
 4 (1- n_{\bf{3,1}}) - n_{\bf{2,1}} -3 (n_{\bf{2,3}} + n_{\bf{2,\bar{3}}}) = 6 K S, \\
9(1 - n_{\bf{1,8}})-\frac{1}{2} (n_{\bf{1,3}}+n_{\bf{1,\bar{3}}})-(n_{\bf{2,3}}+n_{\bf{2,\bar{3}}}) &=-3 T^2,\quad
 8 (1 - n_{\bf{3,1}}) - \frac{1}{2} n_{\bf{2,1}} - \frac{3}{2}(n_{\bf{2,3}} + n_{\bf{2,\bar{3}}})  =-3 S^2,\\
 n_{\bf{2,3}} + n_{\bf{2,\bar{3}}} &= ST.
 \end{aligned}
\end{align}
These linear equations  have the following unique solution\footnote{ We recall that Witten's genus formula asserts  that the number of hypermultiplets charged under the adjoint representation is given by the genus of the curve supporting the corresponding gauge group.} 
\begin{align}\label{eq:numbersReal}
\begin{aligned}
 n_{\bf{1,8}} & = \frac{1}{2}(KT+T^2+2), \quad   && \quad \quad\  \    n_{\bf{3,1}} = \frac{1}{2}(KS+S^2+2),\\
  n_{\bf{2,1}} & =-S(8 K +2S+3T), \quad   &&n_{\bf{1,3}}+n_{\bf{1,\bar{3}}} = -T(9K+2S+3 T), \quad 
 && n_{\bf{2,3}} + n_{\bf{2,\bar{3}}}= ST.
 \end{aligned}
 \end{align}
 These numbers have simple geometric interpretations. The numbers 
$ n_{\bf{1,8}}$ and  $n_{\bf{3,1}}$ are respectively the genus of the curves $T$ and $S$. The number  $n_{\bf{2,3}}+n_{\bf{2,\bar{3}}}$ is the degree of $S\cdot T$ (intersection number of $S$ and $T$).  The number $n_{\bf{2,1}}$ is the intersection number of  $S$ and the discriminant of the generic fiber of $D_1^{\text{s}}$. The number $ n_{\bf{1,3}}+n_{\bf{1},\bar{3}}$ is  the intersection number of $T$ and the discriminant of $D_2^t$.

From here we can get the numbers of hypermultiplets $n_{\bf{2}}$ and $n_{\bf{3}}+n_{\bf{\bar{3}}}$ tracing back from equation \eqref{eq:coupledtermnR}:
\begin{align}
n_{\bf{2}}=-2(4K+S), \quad
n_{\bf{3}}+n_{\bf{\bar{3}}}=-3T(3K+T).
\end{align}

We recall that the Hodge numbers of a crepant resolution of an \susu-model are (see Theorem \ref{Thm:Hodge})
\begin{equation}
h^{1,1}(Y)=14-K^2, \quad h^{2,1}(Y)=29 K^2+15 K S+24 K T+3 S^2+6 S T+6 T^2+14.
\end{equation}
The total number of hypermultiplets is the sum of the number of neutral hypermultiplets $(n_H^0= h^{2,1}(Y)+1)$ and the number of charged hypermultiplets\footnote{ To count the charged hypermultiplets, each irreducible representation $\mathbf{R}$ 
contributes $\dim^{ch}{\bf{R}} \times n_{\bf{R}}$ where $\dim^{ch}\ \bf{R}$ is the number of non-zero weights of the representation $\bf{R}$ and $n_{\bf{R}}$ is the number of hypermultiplets transforming in the representation $\bf{R}$ \cite{GM1}.}. 
\begin{align}\label{eq:NH0}
 n_H^0&=h^{2,1}(Y)+1 =29 K^2+15 K S+24 K T+3 S^2+6 S T+6 T^2+15,\\
 n_H^{ch}&=2 n_{\bf{2,1}}+6 (n_{\bf{2,3}}+ n_{\bf{2,\bar{3}}})+3 (n_{\bf{1,3}}+ n_{\bf{1,\bar{3}}})+(8-2) n_{\bf{1,8}}+(3-1) n_{\bf{3,1}}.
\end{align}
Thus, the total number of hypermultiplets is  
\begin{align}
n_H &=n_H^0+n_H^{ch}=
29 K^2+23.
\end{align}
The numbers of vector multiplets and  tensor multiplets are
\begin{align}
& n_T=9-K^2, \quad n_V=\dim G=\dim\  \text{SU}(2)+\dim\ \text{SU}(3)=3+8=11.
\end{align}
We  can now check that the  coefficient of $\mathrm{tr}\ R^4$ vanishes \cite{Salam}:
\begin{equation}
n_H-n_V+29n_T-273=0.
\end{equation}
Finally, we show that the  anomaly polynomial I$_8$ indeed factors as a perfect square:
\begin{align}
\begin{split}
I_8 &= \frac{K^2}{8} ( \mathrm{tr} R^2)^2 + \frac{1}{6} (X_1^{(2)} + X_2^{(2)})  \mathrm{tr} R^2 - \frac{2}{3} (X_1^{(4)} + X_2^{(4)}) + 4 Y_{\bf{23}},\\
&=
\frac{1}{2} \left(\frac{1}{2} K\ {\mathrm{tr}} R^2 +2 S\ {\mathrm{tr}}_{\bf{2}} F^2_1 +2 T\ {\mathrm{tr}}_{\bf{3}} F^2_2 \right)^2.
\end{split}
\end{align}
Hence, we  conclude that all the local anomalies are canceled via Green-Schwarz mechanism.

The global anomalies for the Standard Model in 4d was discussed in Section \ref{sec:sm} by examining the fourth homotopy group of SU($2$) and SU($3$). Similarly, the global anomaly contributions from SU($2$) and SU($3$) in six-dimensions can be discussed by looking into the sixth homotopy group for SU($2$) and SU($3$), which are given by $\mathbb{Z}_{12}$ and $\mathbb{Z}_6$ respectively. Bershadsky and Vafa has shown that this yields the linear constraints on the number of hypermultiplets \cite{Bershadsky:1997sb}. In the case of SU($2$) and SU($3$),  we have:
\begin{align}
\begin{cases}
& \text{SU}(2):\quad 4-n_{\mathbf{2}} =0\quad \mod\ 6, \\
& \text{SU}(3):\quad n_{\mathbf{3}}=0\quad \mod\ 6,
\end{cases}
\end{align}
where $n_{\mathbf{2}}$ and $n_{\mathbf{3}}$ are the number of hypermultiplets transforming in the fundamental representation of SU($2$) and SU($3$) respectively. 
Using equation \eqref{eq:numbersReal}, we immediately compute geometrically the number of charged hypermultiplets as
\begin{align}
\begin{cases}
& n_{\mathbf{2}} =n_{\mathbf{2,1}}+3(n_{\mathbf{2,3}}+n_{\mathbf{2,\overline{3}}}) = -16(g(S)-1)+6S^2, \\
& n_{\mathbf{3}} =n_{\mathbf{1,3}}+n_{\mathbf{1,\overline{3}}}+2(n_{\mathbf{2,3}}+n_{\mathbf{2,\overline{3}}}) = -18(g(T)-1)+6T^2,
\end{cases}
\end{align}
where $g(S)$ and $g(T)$ are the genus of the curves supporting $S$ and $T$ respectively.
By using these conditions we find that the global anomalies of SU($3$) always vanishes whereas SU($2$) global anomaly canceling condition is found to be
\begin{align}
g(S)=0 \quad \mod\ 3.
\end{align}

\section*{Acknowledgements}
The authors are grateful to Prateek Agrawal, Paolo Aluffi, Patrick Jefferson, Kenji Matsuki, Washington Taylor and Shing-Tung Yau for helpful discussions. 
M.J.K. would like to thank Korean Institute for Advanced Studies for their hospitality during part of this work. 
M.E. is supported in part by the National Science Foundation (NSF) grant DMS-1701635 ``Elliptic Fibrations and String Theory''.
M.J.K. is supported by the National Science Foundation (NSF) grant PHY-1352084. 
R.J. is supported by a National Science Foundation (NSF) Graduate Research Fellowship. 

\appendix
\section{Fiber enhancement}
 \label{sec:FiberEnhancement} 
\begin{figure}[htb]
\begin{center}
\scalebox{1}{
\begin{tabular}{ccccc}
  & Type I$_2$ & Type III & Type IV  & Type I$_3$ \\
\raisebox{1cm}{Singular fiber}  & \includegraphics[scale=.7,angle=90]{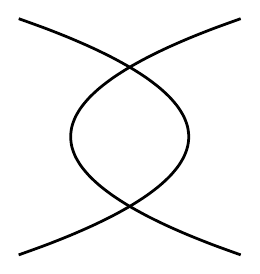}
 &
  \includegraphics[scale=1.2, angle=90]{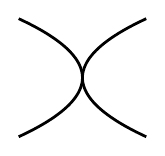}
  &
\raisebox{-.3cm}{  \includegraphics[scale=1,angle=30]{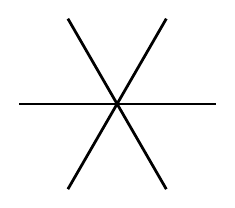}} & \includegraphics[scale=.4]{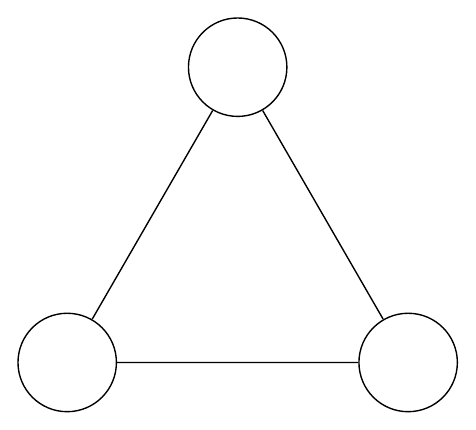}    
  \\
\raisebox{1cm}{Our notation}  &
 \includegraphics[scale=.6]{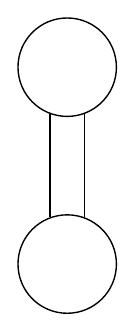}
 &
  \includegraphics[scale=.6]{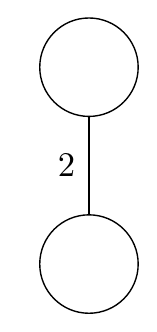}
  &
  \includegraphics[scale=.5]{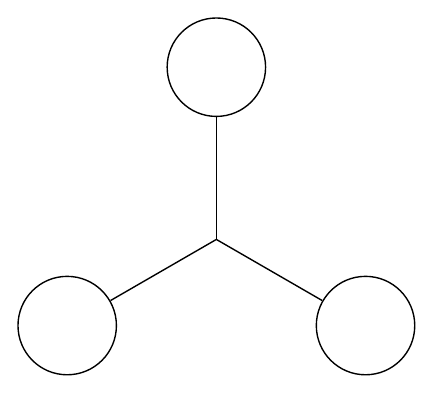} &  \includegraphics[scale=.4]{I3}    \\
\raisebox{1cm}{ Dual graph }& \includegraphics[scale=.6]{I2GraphV} &  \includegraphics[scale=.6]{I2GraphV}&
  \includegraphics[scale=.4]{I3}  & \includegraphics[scale=.4]{I3}   
   \end{tabular}}
  \end{center}
\caption{Convention for Kodaira fibers of type I$_2$, III, IV, and I$_3$.\label{fig:Convention}}
\end{figure}

\clearpage

\begin{table}
\begin{tikzcd}[column sep=normal]
& \begin{tabular}{c} \includegraphics[scale=.6]{I3} \end{tabular} \arrow[rightarrow]{r}{\displaystyle{a_1=0}}   \arrow[rightarrow]{rd}{\displaystyle{T}}& \begin{tabular}{c} \includegraphics[scale=1.2]{IV} \end{tabular} &  \\
\begin{tabular}{c} \includegraphics[trim=0cm  1cm 0 0cm,scale=1]{I2} \end{tabular}\arrow[rightarrow]{ru}{\displaystyle{P_1=0}}  \arrow[rightarrow]{r} {\displaystyle{a_1=0}}  \arrow[rightarrow]{rd}{\displaystyle{T}} &  \raisebox{-.5cm}{ \scalebox{1.4}{ \includegraphics{III}} } \arrow[rightarrow]{ru}{\displaystyle{P_1=0}}  \arrow[rightarrow]{rd} {\displaystyle{T}}     &   \raisebox{-.5cm}{ \scalebox{.6}{ \includegraphics{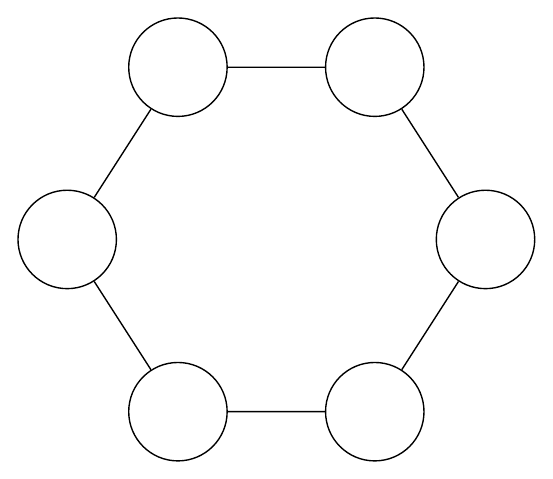}} }  &  \\
 & \raisebox{-.5cm}{ \scalebox{.6}{ \includegraphics{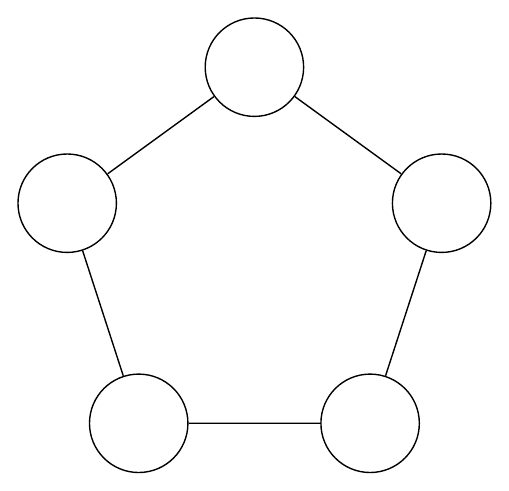}} } \arrow[rightarrow]{r}{\displaystyle{a_1=0}}  \arrow[rightarrow,near end]{ru}{\displaystyle{a_1 \widetilde{a}_6-\widetilde{a}_3 \widetilde{a}_4=0}} 
 & 
  \begin{tabular}{c} \includegraphics[trim=-1.2cm  0cm 0 0cm,scale=.6]{Tstar12-12-1} 
  \end{tabular} &   \\
\begin{tabular}{c} \includegraphics[scale=.6]{I3}\end{tabular} \arrow[rightarrow]{ru}{\displaystyle{S}}  \arrow[rightarrow]{r}{\displaystyle{a_1=0}}  \arrow[rightarrow]{rd}{\displaystyle{P_2=0}} &  \begin{tabular}{c}\includegraphics[scale=1.2]{IV} \end{tabular}\arrow[rightarrow]{rd}{\displaystyle{P_2=0}}  \arrow[rightarrow]{ru} {\displaystyle{S}}  & &  \\
& \raisebox{-.5cm}{ \scalebox{.6}{ \includegraphics{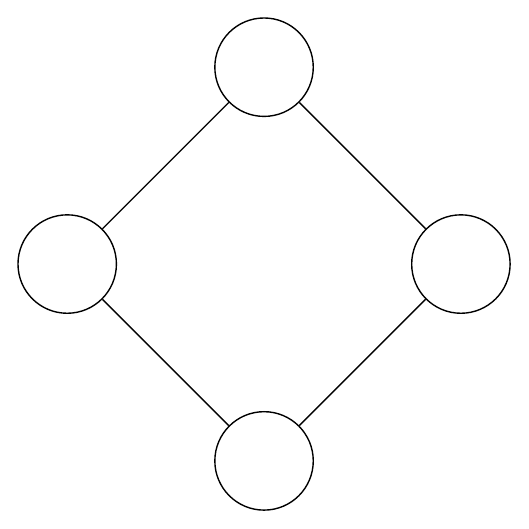}} }\arrow[rightarrow]{r} {\displaystyle{a_1=0}}  \arrow[rightarrow]{ruuu} {\displaystyle{S}} &     \begin{tabular}{c}\includegraphics[scale=.7]{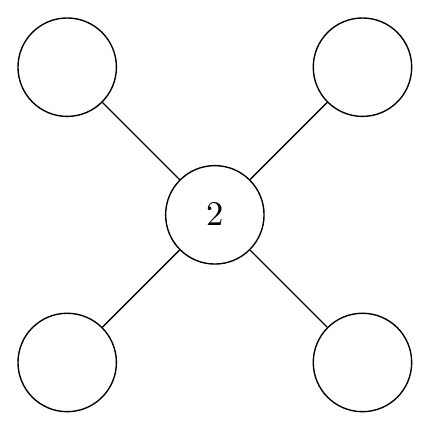}\end{tabular}  & 
\end{tikzcd}
\caption{I$_2^{\text{s}}$ + I$_3^{\text{s}}$, Resolution I, II, and III.   { $P_1=\widetilde{a}_4^2 t -a_1^2 \widetilde{a}_6$ and $P_2=\widetilde{a}_3^3 s-a_1\widetilde{a}_2\widetilde{a}_3^2s+a_1^2\widetilde{a}_3\widetilde{a}_4-a_1^3\widetilde{a}_6$. 
The non-Kodaira fiber in codimension-three is a contraction of a IV$^*$. 
}}
\end{table}
\clearpage 

\begin{table}
\begin{tikzcd}[column sep=normal]
& \begin{tabular}{c} \includegraphics[scale=.6]{I3} \end{tabular} \arrow[rightarrow]{r}{\displaystyle{a_1=0}}   \arrow[rightarrow]{rd}{\displaystyle{T}}& \begin{tabular}{c} \includegraphics[scale=1.2]{IV} \end{tabular} &  \\
\begin{tabular}{c} \includegraphics[trim=0cm  1cm 0 0cm,scale=1]{I2} \end{tabular}\arrow[rightarrow]{ru}{\displaystyle{P_1=0}}  \arrow[rightarrow]{r} {\displaystyle{a_1=0}}  \arrow[rightarrow]{rd}{\displaystyle{T}} &  \raisebox{-.5cm}{ \scalebox{1.4}{ \includegraphics{III}} } \arrow[rightarrow]{ru}{\displaystyle{P_1=0}}  \arrow[rightarrow]{rd} {\displaystyle{T}}     &   \raisebox{-.5cm}{ \scalebox{.6}{ \includegraphics{I6}} }  &  \\
 & \raisebox{-.5cm}{ \scalebox{.6}{ \includegraphics{N_I5}} } \arrow[rightarrow]{r}{\displaystyle{a_1=0}}  \arrow[rightarrow,near end]{ru}{\displaystyle{a_1 \widetilde{a}_6-\widetilde{a}_3 \widetilde{a}_4=0}} 
 & 
  \begin{tabular}{c} \includegraphics[trim=-2.5cm  0cm 0 0cm,scale=.6]{T12321} 
  \end{tabular} &   \\
\begin{tabular}{c} \includegraphics[scale=.6]{I3}\end{tabular} \arrow[rightarrow]{ru}{S}  \arrow[rightarrow]{r}{\displaystyle{a_1=0}}  \arrow[rightarrow]{rd}{\displaystyle{P_2=0}} &  \begin{tabular}{c}\includegraphics[scale=1.2]{IV} \end{tabular}\arrow[rightarrow]{rd}{\displaystyle{P_2=0}}  \arrow[rightarrow]{ru} {\displaystyle{S}}  & &  \\
& \raisebox{-.5cm}{ \scalebox{.6}{ \includegraphics{I4}} }\arrow[rightarrow]{r} {\displaystyle{a_1=0}}  \arrow[rightarrow]{ruuu} {\displaystyle{S}} &     \begin{tabular}{c}\includegraphics[scale=.6]{Istar0}\end{tabular}  & 
\end{tikzcd}
\caption{I$_2^{\text{s}}$ + I$_3^{\text{s}}$, Resolution IV.    {$P_1=\widetilde{a}_4^2 t -a_1^2 \widetilde{a}_6$ and $P_2=\widetilde{a}_3^3 s-a_1\widetilde{a}_2\widetilde{a}_3^2s+a_1^2\widetilde{a}_3\widetilde{a}_4-a_1^3\widetilde{a}_6$. The non-Kodaira fiber is a contracted fiber of type IV$^*$.}}
\end{table}

\clearpage 
\newpage

\newpage 
\begin{table}
\begin{tikzcd}[column sep=normal]
& \begin{tabular}{c} \includegraphics[scale=.6]{I3} \end{tabular} \arrow[rightarrow]{r}{\displaystyle{a_1^2+4\widetilde{a}_2 t=0}}    \arrow[rightarrow, near end]{rd}{\displaystyle{T}}  & \begin{tabular}{c} \includegraphics[scale=1.2]{IV} \end{tabular} &  \\
\begin{tabular}{c} \includegraphics[trim=0cm  1cm 0 0cm,scale=1]{I2} \end{tabular}\arrow[rightarrow]{ru}{\displaystyle{P_1=0}}  \arrow[rightarrow]{r} {\displaystyle{a_1^2+4\widetilde{a}_2 t=0}}  \arrow[rightarrow]{rd} {\displaystyle{T}} &  \raisebox{-.5cm}{ \scalebox{1.4}{ \includegraphics{III}} } \arrow[rightarrow]{ru} {\displaystyle{P_1=0}}  \arrow[rightarrow]{rd}   {\displaystyle{T}}   &   \raisebox{-.5cm}{ \scalebox{.6}{ \includegraphics{I6}} }  &  \\
 & \raisebox{-.5cm}{ \scalebox{.6}{ \includegraphics{N_I5}} } \arrow[rightarrow]{r} {\displaystyle{a_1=0}} \arrow[rightarrow, near end]{ru} {\displaystyle{P_3=0}}
 & 
  \begin{tabular}{c} \includegraphics[trim=-2.5cm  1cm 0 0cm,scale=.6]{T122-1-1} 
  \end{tabular} &   \\
\begin{tabular}{c} \includegraphics[scale=.6]{I3}\end{tabular} \arrow[rightarrow]{ru}  {\displaystyle{S}} \arrow[rightarrow]{r} {\displaystyle{a_1=0}} \arrow[rightarrow]{rd} {\displaystyle{P_2=0}} &  \begin{tabular}{c}\includegraphics[scale=1.2]{IV} \end{tabular}\arrow[rightarrow]{rd} {\displaystyle{P_2=0}}  \arrow[rightarrow]{ru} {\displaystyle{S}}  & &  \\
& \raisebox{-.5cm}{ \scalebox{.6}{ \includegraphics{I4}} }\arrow[rightarrow]{r} {\displaystyle{a_1=0}}   \arrow[rightarrow, near end]{ruuu} {\displaystyle{S}}  &     \begin{tabular}{c}\includegraphics[scale=.6]{Istar0}\end{tabular}  & 
\end{tikzcd} 
\caption{I$_2^{\text{ns}}$ + I$_3^{\text{s}}$ Resolution I and IV.  $P_1=-2 \widetilde{a}_4^2 t+4 \widetilde{a}_2 \widetilde{a}_6 t+a_1^2 \widetilde{a}_6 -2 \widetilde{a}_2 \widetilde{a}_3^2$, $P_2=\widetilde{a}_3^3 s-a_1\widetilde{a}_2\widetilde{a}_3^2+a_1^2\widetilde{a}_3\widetilde{a}_4-a_1^3\widetilde{a}_6$, and $P_3=\widetilde{a}_2\widetilde{a}_3^2 s-a_1(a_1\widetilde{a}_6-\widetilde{a}_3\widetilde{a}_4)$. The non-Kodaira fiber is a contraction of an I$_1^*$.}
\end{table}
\clearpage 

\begin{table}
\begin{tikzcd}[column sep=normal]
& \begin{tabular}{c} \includegraphics[scale=.6]{I3} \end{tabular} \arrow[rightarrow]{r}{\displaystyle{a_1^2+4\widetilde{a}_2 t=0}}    \arrow[rightarrow, near end]{rd}{\displaystyle{T}}  & \begin{tabular}{c} \includegraphics[scale=1.2]{IV} \end{tabular} &  \\
\begin{tabular}{c} \includegraphics[trim=0cm  1cm 0 0cm,scale=1]{I2} \end{tabular}\arrow[rightarrow]{ru}{\displaystyle{P_1=0}}  \arrow[rightarrow]{r} {\displaystyle{a_1^2+4\widetilde{a}_2 t=0}}  \arrow[rightarrow]{rd} {\displaystyle{T}} &  \raisebox{-.5cm}{ \scalebox{1.4}{ \includegraphics{III}} } \arrow[rightarrow]{ru} {\displaystyle{P_1=0}}  \arrow[rightarrow]{rd}   {\displaystyle{T}}   &\includegraphics[trim=-0.5cm  -0.5cm 0 0cm,scale=0.6]{I6}  &  \\
 & \raisebox{-.5cm}{ \scalebox{.6}{ \includegraphics{N_I5}} } \arrow[rightarrow]{r} {\displaystyle{a_1=0}} \arrow[rightarrow, near end]{ru} {\displaystyle{P_3=0}}
 & 
  \begin{tabular}{c} \includegraphics[trim=-2.5cm  0.5cm 0 0cm,scale=.6]{Istar2b} 
  \end{tabular} &   \\
\begin{tabular}{c} \includegraphics[scale=.6]{I3}\end{tabular} \arrow[rightarrow]{ru}  {\displaystyle{S}} \arrow[rightarrow]{r} {\displaystyle{a_1=0}} \arrow[rightarrow]{rd} {\displaystyle{P_2=0}} &  \begin{tabular}{c}\includegraphics[scale=1.2]{IV} \end{tabular}\arrow[rightarrow]{rd} {\displaystyle{P_2=0}}  \arrow[rightarrow]{ru} {\displaystyle{S}}  & &  \\
& \raisebox{-.5cm}{ \scalebox{.6}{ \includegraphics{I4}} }\arrow[rightarrow]{r} {\displaystyle{a_1=0}}   \arrow[rightarrow, near end]{ruuu} {\displaystyle{S}}  &     \begin{tabular}{c}\includegraphics[scale=.6]{Istar0}\end{tabular}  & 
\end{tikzcd} 
\caption{I$_2^{\text{ns}}$ + I$_3^{\text{s}}$ Resolution II.  $P_1=-2 \widetilde{a}_4^2 t+4 \widetilde{a}_2 \widetilde{a}_6 t+a_1^2 \widetilde{a}_6 -2 \widetilde{a}_2 \widetilde{a}_3^2$, $P_2=\widetilde{a}_3^3 s-a_1\widetilde{a}_2\widetilde{a}_3^2+a_1^2\widetilde{a}_3\widetilde{a}_4-a_1^3\widetilde{a}_6$, and $P_3=\widetilde{a}_2\widetilde{a}_3^2 s -a_1(a_1\widetilde{a}_6-\widetilde{a}_3\widetilde{a}_4)$.  The non-Kodaira fiber is a contraction of an I$_1^*$.}
\end{table}
\clearpage 

\begin{table}
\begin{tikzcd}[column sep=normal]
& \begin{tabular}{c} \includegraphics[scale=.6]{I3} \end{tabular} \arrow[rightarrow]{r}{\displaystyle{a_1^2+4\widetilde{a}_2 t=0}}    \arrow[rightarrow, near end]{rd}{\displaystyle{T}}  & \begin{tabular}{c} \includegraphics[scale=1.2]{IV} \end{tabular} &  \\
\begin{tabular}{c} \includegraphics[trim=0cm  1cm 0 0cm,scale=1]{I2} \end{tabular}\arrow[rightarrow]{ru}{\displaystyle{P_1=0}}  \arrow[rightarrow]{r} {\displaystyle{a_1^2+4\widetilde{a}_2 t=0}}  \arrow[rightarrow]{rd} {\displaystyle{T}} &  \raisebox{-.5cm}{ \scalebox{1.4}{ \includegraphics{III}} } \arrow[rightarrow]{ru} {\displaystyle{P_1=0}}  \arrow[rightarrow]{rd}   {\displaystyle{T}}   &\includegraphics[trim=-0.5cm  -0.5cm 0 0cm,scale=0.6]{I6}  &  \\
 & \raisebox{-.5cm}{ \scalebox{.6}{ \includegraphics{N_I5}} } \arrow[rightarrow]{r} {\displaystyle{a_1=0}} \arrow[rightarrow, near end]{ru} {\displaystyle{P_3=0}}
 & 
  \begin{tabular}{c} \includegraphics[trim=-2.5cm  0.5cm 0 0cm,scale=.6]{Istar2bflipped} 
  \end{tabular} &   \\
\begin{tabular}{c} \includegraphics[scale=.6]{I3}\end{tabular} \arrow[rightarrow]{ru}  {\displaystyle{S}} \arrow[rightarrow]{r} {\displaystyle{a_1=0}} \arrow[rightarrow]{rd} {\displaystyle{P_2=0}} &  \begin{tabular}{c}\includegraphics[scale=1.2]{IV} \end{tabular}\arrow[rightarrow]{rd} {\displaystyle{P_2=0}}  \arrow[rightarrow]{ru} {\displaystyle{S}}  & &  \\
& \raisebox{-.5cm}{ \scalebox{.6}{ \includegraphics{I4}} }\arrow[rightarrow]{r} {\displaystyle{a_1=0}}   \arrow[rightarrow, near end]{ruuu} {\displaystyle{S}}  &     \begin{tabular}{c}\includegraphics[scale=.6]{Istar0}\end{tabular}  & 
\end{tikzcd} 
\caption{I$_2^{\text{ns}}$ + I$_3^{\text{s}}$ Resolution III.  $P_1=-2 \widetilde{a}_4^2 t+4 \widetilde{a}_2 \widetilde{a}_6 t+a_1^2 \widetilde{a}_6 -2 \widetilde{a}_2 \widetilde{a}_3^2$, $P_2=\widetilde{a}_3^3 s-a_1\widetilde{a}_2\widetilde{a}_3^2+a_1^2\widetilde{a}_3\widetilde{a}_4-a_1^3\widetilde{a}_6$, and $P_3=\widetilde{a}_2\widetilde{a}_3^2 s -a_1(a_1\widetilde{a}_6-\widetilde{a}_3\widetilde{a}_4)$.  The non-Kodaira fiber is a contraction of an I$_1^*$.}
\end{table}
\clearpage

\begin{table}
\begin{tikzcd}[column sep=normal]
& \raisebox{-1cm}{\includegraphics[scale=1.4]{IV}} \arrow[rightarrow]{r}{\displaystyle{P_1=0}} \arrow[rightarrow]{dd} {\displaystyle{T}} & \begin{tabular}{c} \includegraphics[scale=1.0]{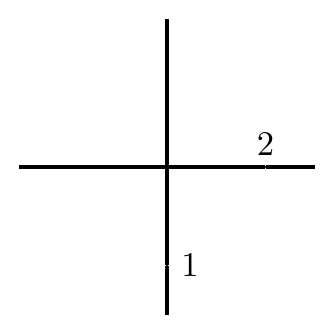} \end{tabular}  &  \\
\begin{tabular}{c} \includegraphics[trim=0cm  1cm 0 0cm,scale=1.4]{III} \end{tabular}  \arrow[rightarrow]{ru}{\displaystyle{\widetilde{a}_4=0}} \arrow[rightarrow]{rd} {\displaystyle{T}}&   &    &  \\
 & \raisebox{-.5cm}{ \scalebox{.6}{ \includegraphics{Tstar12-12-1}} } \arrow[rightarrow]{r} {\displaystyle{\widetilde{a}_3=0}}
 & 
  \begin{tabular}{c} \includegraphics[trim=0cm  0cm 0 0cm,scale=.6]{T12321} 
  \end{tabular} &   \\
\begin{tabular}{c} \includegraphics[scale=.6]{I3}\end{tabular} \arrow[rightarrow]{ru} {\displaystyle{S}} \arrow[rightarrow]{r} {\displaystyle{a_1=0}} \arrow[rightarrow]{rd}  {\displaystyle{P_2=0}} &  \begin{tabular}{c}\includegraphics[scale=1.2]{IV}  \end{tabular}
 \arrow[rightarrow]{u}{\displaystyle{S}}  
\arrow[rightarrow]{rd} {\displaystyle{P_2=0}} 
   & &  \\
& \raisebox{-.5cm}{ \scalebox{.6}{ \includegraphics{I4}} }\arrow[rightarrow]{r} {\displaystyle{a_1=0}} &     \begin{tabular}{c}\includegraphics[scale=.6]{Istar0}\end{tabular} %
& 
\end{tikzcd} 
\caption{III + I$_3^{\text{s}}$ Resolution I. $P_1=\widetilde{a}_3^2+4\widetilde{a}_6 t$ and $P_2=\widetilde{a}_3^3-\widetilde{a}_1\widetilde{a}_2\widetilde{a}_3^2 s+\widetilde{a}_1^2\widetilde{a}_3\widetilde{a}_4 s-\widetilde{a}_1^3\widetilde{a}_6 s^2$}
\end{table}
\clearpage 

\begin{table}
\begin{tikzcd}[column sep=normal]
& \raisebox{-1cm}{\includegraphics[scale=1.4]{IV}} \arrow[rightarrow]{r}{\displaystyle{P_1=0}} \arrow[rightarrow]{dd} {\displaystyle{T}} & \begin{tabular}{c} \includegraphics[scale=1.0]{T12} \end{tabular}  &  \\
\begin{tabular}{c} \includegraphics[trim=0cm  1cm 0 0cm,scale=1.4]{III} \end{tabular}  \arrow[rightarrow]{ru}{\displaystyle{\widetilde{a}_4=0}} \arrow[rightarrow]{rd} {\displaystyle{T}}&   &    &  \\
 & \raisebox{-.5cm}{ \scalebox{.6}{ \includegraphics{Tstar12-12-1}} } \arrow[rightarrow]{r} {\displaystyle{\widetilde{a}_3=0}}
 & 
  \begin{tabular}{c} \includegraphics[trim=0cm  0cm 0 0cm,scale=.6]{Tstar123-2-1} 
  \end{tabular} &   \\
\begin{tabular}{c} \includegraphics[scale=.6]{I3}\end{tabular} \arrow[rightarrow]{ru} {\displaystyle{S}} \arrow[rightarrow]{r} {\displaystyle{a_1=0}} \arrow[rightarrow]{rd}  {\displaystyle{P_2=0}} &  \begin{tabular}{c}\includegraphics[scale=1.2]{IV}  \end{tabular}
 \arrow[rightarrow]{u}{\displaystyle{S}}  
\arrow[rightarrow]{rd} {\displaystyle{P_2=0}} 
   & &  \\
& \raisebox{-.5cm}{ \scalebox{.6}{ \includegraphics{I4}} }\arrow[rightarrow]{r} {\displaystyle{a_1=0}} &     \begin{tabular}{c}\includegraphics[scale=.6]{Istar0}\end{tabular} & 
\end{tikzcd} 
\caption{III + I$_3^{\text{s}}$ Resolution II. $P_1=\widetilde{a}_3^2+4\widetilde{a}_6 t$ and $P_2=\widetilde{a}_3^3-\widetilde{a}_1\widetilde{a}_2\widetilde{a}_3^2 s+\widetilde{a}_1^2\widetilde{a}_3\widetilde{a}_4 s-\widetilde{a}_1^3\widetilde{a}_6 s^2$}
\end{table}
\clearpage

\begin{table}
\begin{tikzcd}[column sep=normal]
& \raisebox{-1cm}{\includegraphics[scale=1.4]{IV}} \arrow[rightarrow]{r}{\displaystyle{P_1=0}} \arrow[rightarrow]{dd} {\displaystyle{T}} & \begin{tabular}{c} \includegraphics[scale=1.0]{T12} \end{tabular}  &  \\
\begin{tabular}{c} \includegraphics[trim=0cm  1cm 0 0cm,scale=1.4]{III} \end{tabular}  \arrow[rightarrow]{ru}{\displaystyle{\widetilde{a}_4=0}} \arrow[rightarrow]{rd} {\displaystyle{T}}&   &    &  \\
 & \raisebox{-.5cm}{ \scalebox{.6}{ \includegraphics{Tstar12-12-1}} } \arrow[rightarrow]{r} {\displaystyle{\widetilde{a}_3=0}}
 & 
  \begin{tabular}{c} \includegraphics[trim=0cm  0cm 0 0cm,scale=.6]{Tstar12-12-2} 
  \end{tabular} &   \\
\begin{tabular}{c} \includegraphics[scale=.6]{I3}\end{tabular} \arrow[rightarrow]{ru} {\displaystyle{S}} \arrow[rightarrow]{r} {\displaystyle{a_1=0}} \arrow[rightarrow]{rd}  {\displaystyle{P_2=0}} &  \begin{tabular}{c}\includegraphics[scale=1.2]{IV}  \end{tabular}
 \arrow[rightarrow]{u}{\displaystyle{S}}  
\arrow[rightarrow]{rd} {\displaystyle{P_2=0}}
   & &  \\
& \raisebox{-.5cm}{ \scalebox{.6}{ \includegraphics{I4}} }\arrow[rightarrow]{r} {\displaystyle{a_1=0}} &     \begin{tabular}{c}\includegraphics[scale=.6]{Istar0}\end{tabular}   & 
\end{tikzcd} 
\caption{III + I$_3^{\text{s}}$ Resolution III. $P_1=\widetilde{a}_3^2+4\widetilde{a}_6 t$ and $P_2=\widetilde{a}_3^3-\widetilde{a}_1\widetilde{a}_2\widetilde{a}_3^2 s+\widetilde{a}_1^2\widetilde{a}_3\widetilde{a}_4 s-\widetilde{a}_1^3\widetilde{a}_6 s^2$}
\end{table}
\clearpage

\begin{table}
\begin{tikzcd}[column sep=normal]
& \raisebox{-1cm}{\includegraphics[scale=1.4]{IV}} \arrow[rightarrow]{r}{\displaystyle{P_1=0}} \arrow[rightarrow]{dd} {\displaystyle{T}} & \begin{tabular}{c} \includegraphics[scale=1.0]{T12} \end{tabular}  &  \\
\begin{tabular}{c} \includegraphics[trim=0cm  1cm 0 0cm,scale=1.4]{III} \end{tabular}  \arrow[rightarrow]{ru}{\displaystyle{\widetilde{a}_4=0}} \arrow[rightarrow]{rd} {\displaystyle{T}}&   &    &  \\
 & \raisebox{-.5cm}{ \scalebox{.6}{ \includegraphics{T12321}} } \arrow[rightarrow]{r} {\displaystyle{\widetilde{a}_3=0}}
 & 
  \begin{tabular}{c} \includegraphics[trim=0cm  0cm 0 0cm,scale=.6]{Tstar123-2-1} 
  \end{tabular} &   \\
\begin{tabular}{c} \includegraphics[scale=.6]{I3}\end{tabular} \arrow[rightarrow]{ru} {\displaystyle{S}} \arrow[rightarrow]{r} {\displaystyle{a_1=0}} \arrow[rightarrow]{rd}  {\displaystyle{P_2=0}} &  \begin{tabular}{c}\includegraphics[scale=1.2]{IV}  \end{tabular}
 \arrow[rightarrow]{u}{\displaystyle{S}}  
\arrow[rightarrow]{rd} {\displaystyle{P_2=0}} 
   & &  \\
& \raisebox{-.5cm}{ \scalebox{.6}{ \includegraphics{I4}} }\arrow[rightarrow]{r} {\displaystyle{a_1=0}} &     \begin{tabular}{c}\includegraphics[scale=.6]{Istar0}\end{tabular} & 
\end{tikzcd} 
\caption{III + I$_3^{\text{s}}$ Resolution IV. $P_1=\widetilde{a}_3^2+4\widetilde{a}_6 t$ and $P_2=\widetilde{a}_3^3-\widetilde{a}_1\widetilde{a}_2\widetilde{a}_3^2 s+\widetilde{a}_1^2\widetilde{a}_3\widetilde{a}_4 s-\widetilde{a}_1^3\widetilde{a}_6 s^2$}
\end{table}
\clearpage

 \newpage 
\begin{table}
\begin{center}
\begin{tikzcd}[column sep=normal]
& \begin{tabular}{c} \includegraphics[trim=0cm  1cm 0 0cm,scale=.6]{I3} \end{tabular}   \arrow[rightarrow]{r}{\displaystyle{a_1=0}}& \begin{tabular}{c} \includegraphics[trim=0cm  1cm 0 0cm,scale=1.2]{IV} \end{tabular} \\
\begin{tabular}{c} \includegraphics[trim=0cm  1cm 0 0cm,scale=.8]{I2} \end{tabular}   \arrow[rightarrow]{ru}{\displaystyle{P_1=0}} \arrow[rightarrow]{rd}{\displaystyle{T}}   \arrow[rightarrow]{r}{\displaystyle{a_1=0}} & 
\begin{tabular}{c} \includegraphics[trim=0cm  1cm 0 0cm,scale=1.2]{III} \end{tabular} 
  \arrow[rightarrow]{ru}{\displaystyle{P_1=0}} & \begin{tabular}{c} \includegraphics[trim=0cm  1cm 0 0cm,scale=.6]{Tstar12-12-1} \end{tabular}\\
& 
\raisebox{-.5cm}{
\begin{tabular}{c} \includegraphics[trim=0cm  1cm 0 0cm,scale=.6]{T122-1-1} \end{tabular} }
  \arrow[rightarrow]{rd}  {\displaystyle{\widetilde{a}_3=0}}
\arrow[rightarrow]{ru} {\displaystyle{\widetilde{a}_2=0}} & \\
\begin{tabular}{c} \includegraphics[trim=0cm  1cm 0 0cm,scale=1.4]{IV} \end{tabular}  \arrow[rightarrow]{ru}  {\displaystyle{S}} \arrow[rightarrow]{rd} {\displaystyle{\widetilde{a}_3=0}} &  & 
\begin{tabular}{c} \includegraphics[trim=0cm  1cm 0 0cm,scale=.6]{N_NK7} \end{tabular}
 \\
& 
\raisebox{-.5cm}{ \scalebox{.6}{ \includegraphics{Istar0}} }
  \arrow[rightarrow]{r}{\displaystyle{P_2=0}} &   \begin{tabular}{c}\includegraphics[scale=.7]{T12-1-2}\end{tabular} 
\end{tikzcd} 
\end{center}
\caption{I$_2^{\text{ns}}+$IV$^{\text{s}}$ or I$_2^{\text{s}}+$IV$^{\text{s}}$, Resolution I. $P_1=\widetilde{a}_4^2-\widetilde{a}_1^2 \widetilde{a}_6 t$ and $P_2=\widetilde{a}_2^2 \widetilde{a}_4^2-4\widetilde{a}_4^3 s-4\widetilde{a}_2^3\widetilde{a}_6+18\widetilde{a}_2\widetilde{a}_4\widetilde{a}_6 s-27\widetilde{a}_6^2 s^2$ for I$_2^{\text{ns}}+$IV$^{\text{s}}$, $P_2=\widetilde{a}_2^2 \widetilde{a}_4^2 s -4\widetilde{a}_4^3-4\widetilde{a}_2^3\widetilde{a}_6 s^2 +18\widetilde{a}_2\widetilde{a}_4\widetilde{a}_6 s^2 -27\widetilde{a}_6^2 s $ for I$_2^{\text{s}}+$IV$^{\text{s}}$.}
\end{table}
\clearpage 

\newpage 
\begin{table}
\begin{center}
\begin{tikzcd}[column sep=normal]
& \begin{tabular}{c} \includegraphics[trim=0cm  1cm 0 0cm,scale=.6]{I3} \end{tabular}   \arrow[rightarrow]{r}{\displaystyle{a_1=0}} & \begin{tabular}{c} \includegraphics[trim=0cm  1cm 0 0cm,scale=1.2]{IV} \end{tabular} \\
\begin{tabular}{c} \includegraphics[trim=0cm  1cm 0 0cm,scale=.8]{I2} \end{tabular}   \arrow[rightarrow]{ru}{\displaystyle{P_1=0}} \arrow[rightarrow]{rd}{\displaystyle{T}}   \arrow[rightarrow]{r}{\displaystyle{a_1=0}} & 
\begin{tabular}{c} \includegraphics[trim=0cm  1cm 0 0cm,scale=1.2]{III} \end{tabular} 
  \arrow[rightarrow]{ru}{\displaystyle{P_1=0}} & \begin{tabular}{c} \includegraphics[trim=0cm  0cm 0 0cm,scale=.6]{Tstar12-12-1} \end{tabular}\\
& 
\begin{tabular}{c} \raisebox{.5cm}{\includegraphics[trim=0cm  1cm 0 0cm,scale=.6]{Istar2b}} \end{tabular} 
  \arrow[rightarrow]{rd}  {\displaystyle{\widetilde{a}_3=0}}
\arrow[rightarrow]{ru} {\displaystyle{\widetilde{a}_2=0}} & \\
\begin{tabular}{c} \includegraphics[trim=0cm  1cm 0 0cm,scale=1.4]{IV} \end{tabular}  \arrow[rightarrow]{ru}  {\displaystyle{S}} \arrow[rightarrow]{rd} {\displaystyle{\widetilde{a}_3=0}} &  & 
\begin{tabular}{c} \includegraphics[trim=0cm  1cm 0 0cm,scale=.6]{N_NK12} \end{tabular}
 \\
& 
\raisebox{-.5cm}{ \scalebox{.6}{ \includegraphics{Istar0}} }
  \arrow[rightarrow]{r}{\displaystyle{P_2=0}} &   \begin{tabular}{c}\includegraphics[scale=.7]{T12-1-2}\end{tabular} 
\end{tikzcd} 
\end{center}
\caption{I$_2^{\text{s}}+$IV$^{\text{s}}$ or I$_2^{\text{ns}}+$IV$^{\text{s}}$, Resolution II. {$P_1=\widetilde{a}_4^2 t -a_1^2 \widetilde{a}_6$ and $P_2=\widetilde{a}_2^2 \widetilde{a}_4^2-4\widetilde{a}_4^3 s-4\widetilde{a}_2^3\widetilde{a}_6+18\widetilde{a}_2\widetilde{a}_4\widetilde{a}_6 s-27\widetilde{a}_6^2 s^2$ for I$_2^{\text{ns}}+$IV$^{\text{s}}$, $P_2=\widetilde{a}_2^2 \widetilde{a}_4^2 s -4\widetilde{a}_4^3-4\widetilde{a}_2^3\widetilde{a}_6 s^2 +18\widetilde{a}_2\widetilde{a}_4\widetilde{a}_6 s^2 -27\widetilde{a}_6^2 s $ for I$_2^{\text{s}}+$IV$^{\text{s}}$.}}
\end{table}

\newpage 
\begin{table}
\begin{center}
\begin{tikzcd}[column sep=normal]
& \begin{tabular}{c} \includegraphics[trim=0cm  1cm 0 0cm,scale=.6]{I3} \end{tabular}   \arrow[rightarrow]{r}{\displaystyle{a_1=0}} & \begin{tabular}{c} \includegraphics[trim=0cm  1cm 0 0cm,scale=1.2]{IV} \end{tabular} \\
\begin{tabular}{c} \includegraphics[trim=0cm  1cm 0 0cm,scale=.8]{I2} \end{tabular}   \arrow[rightarrow]{ru}{\displaystyle{P_1=0}} \arrow[rightarrow]{rd}{\displaystyle{T}}   \arrow[rightarrow]{r}{\displaystyle{a_1=0}} & 
\begin{tabular}{c} \includegraphics[trim=0cm  1cm 0 0cm,scale=1.2]{III} \end{tabular} 
  \arrow[rightarrow]{ru}{\displaystyle{P_1=0}} & \begin{tabular}{c} \includegraphics[trim=-0.2cm  0.8cm 0 0cm,scale=.5]{Tstar12-12-1} \end{tabular}\\
& 
\raisebox{-.5cm}{
\begin{tabular}{c} \includegraphics[trim=0cm  1cm 0 0cm,scale=.6]{Istar2bflipped} \end{tabular} }
  \arrow[rightarrow]{rd}  {\displaystyle{\widetilde{a}_3=0}}
\arrow[rightarrow]{ru} {\displaystyle{\widetilde{a}_2=0}} & \\
\begin{tabular}{c} \includegraphics[trim=0cm  1cm 0 0cm,scale=1.4]{IV} \end{tabular}  \arrow[rightarrow]{ru}  {\displaystyle{S}} \arrow[rightarrow]{rd} {\displaystyle{\widetilde{a}_3=0}} &  & 
\begin{tabular}{c} \includegraphics[trim=0cm  1cm 0 0cm,scale=.6]{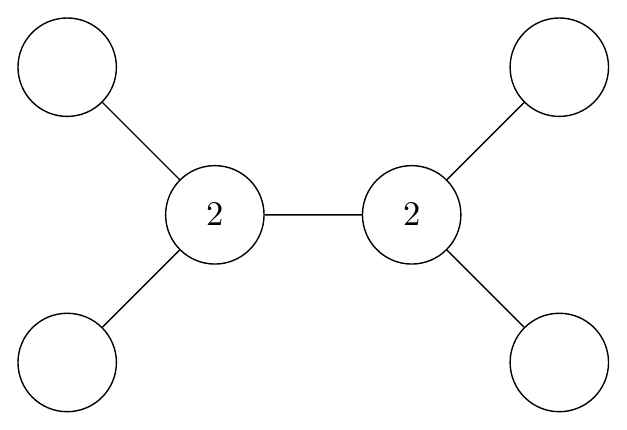} \end{tabular}
 \\
& 
\raisebox{-.5cm}{ \scalebox{.6}{ \includegraphics{Istar0}} }
  \arrow[rightarrow]{r}{\displaystyle{P_2=0}} &   \begin{tabular}{c}\includegraphics[scale=.7]{T12-1-2}\end{tabular} 
\end{tikzcd} 
\end{center}
\caption{I$_2^{\text{ns}}+$IV$^{\text{s}}$ or I$_2^{\text{s}}+$IV$^{\text{s}}$, Resolution III. {$P_1=\widetilde{a}_4^2 t -a_1^2 \widetilde{a}_6$ and $P_2=\widetilde{a}_2^2 \widetilde{a}_4^2-4\widetilde{a}_4^3 s-4\widetilde{a}_2^3\widetilde{a}_6+18\widetilde{a}_2\widetilde{a}_4\widetilde{a}_6 s-27\widetilde{a}_6^2 s^2$ for I$_2^{\text{ns}}+$IV$^{\text{s}}$, $P_2=\widetilde{a}_2^2 \widetilde{a}_4^2 s -4\widetilde{a}_4^3-4\widetilde{a}_2^3\widetilde{a}_6 s^2 +18\widetilde{a}_2\widetilde{a}_4\widetilde{a}_6 s^2 -27\widetilde{a}_6^2 s $ for I$_2^{\text{s}}+$IV$^{\text{s}}$.}}
\end{table}
\clearpage

\newpage 
\begin{table}
\begin{center}
\begin{tikzcd}[column sep=normal]
& \begin{tabular}{c} \includegraphics[trim=0cm  1cm 0 0cm,scale=.6]{I3} \end{tabular}   \arrow[rightarrow]{r}{\displaystyle{a_1=0}} & \begin{tabular}{c} \includegraphics[trim=0cm  1cm 0 0cm,scale=1.2]{IV} \end{tabular} \\
\begin{tabular}{c} \includegraphics[trim=0cm  1cm 0 0cm,scale=.8]{I2} \end{tabular}   \arrow[rightarrow]{ru}{\displaystyle{P_1=0}} \arrow[rightarrow]{rd}{\displaystyle{T}}   \arrow[rightarrow]{r}{\displaystyle{a_1=0}} & 
\begin{tabular}{c} \includegraphics[trim=0cm  1cm 0 0cm,scale=1.2]{III} \end{tabular} 
  \arrow[rightarrow]{ru}{\displaystyle{P_1=0}} & \begin{tabular}{c} \includegraphics[trim=0cm  0cm 0 0cm,scale=.6]{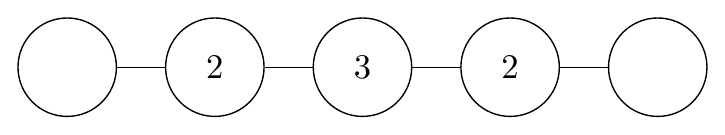} \end{tabular}\\
& 
\begin{tabular}{c} \raisebox{.5cm}{\includegraphics[trim=0cm  1cm 0 0cm,scale=.6]{N_NK10}} \end{tabular} 
  \arrow[rightarrow]{rd}  {\displaystyle{\widetilde{a}_3=0}}
\arrow[rightarrow]{ru} {\displaystyle{\widetilde{a}_2=0}} & \\
\begin{tabular}{c} \includegraphics[trim=0cm  1cm 0 0cm,scale=1.4]{IV} \end{tabular}  \arrow[rightarrow]{ru}  {\displaystyle{S}} \arrow[rightarrow]{rd} {\displaystyle{\widetilde{a}_3=0}} &  & 
\begin{tabular}{c} \includegraphics[trim=0cm  1cm 0 0cm,scale=.6]{N_NK8} \end{tabular}
 \\
& 
\raisebox{-.5cm}{ \scalebox{.6}{ \includegraphics{Istar0}} }
  \arrow[rightarrow]{r}{\displaystyle{P_2=0}} &   \begin{tabular}{c}\includegraphics[scale=.7]{T12-1-2}\end{tabular} 
\end{tikzcd} 
\end{center}
\caption{I$_2^{\text{s}}+$IV$^{\text{s}}$ or I$_2^{\text{ns}}+$IV$^{\text{s}}$, Resolution IV.{ $P_1=\widetilde{a}_4^2 t -a_1^2 \widetilde{a}_6$ and $P_2=\widetilde{a}_2^2 \widetilde{a}_4^2-4\widetilde{a}_4^3 s-4\widetilde{a}_2^3\widetilde{a}_6+18\widetilde{a}_2\widetilde{a}_4\widetilde{a}_6 s-27\widetilde{a}_6^2 s^2$ for I$_2^{\text{ns}}+$IV$^{\text{s}}$, $P_2=\widetilde{a}_2^2 \widetilde{a}_4^2 s -4\widetilde{a}_4^3-4\widetilde{a}_2^3\widetilde{a}_6 s^2 +18\widetilde{a}_2\widetilde{a}_4\widetilde{a}_6 s^2 -27\widetilde{a}_6^2 s $ for I$_2^{\text{s}}+$IV$^{\text{s}}$.}}
\end{table}
\clearpage 

 \newpage 
\begin{table}
\begin{tikzcd}[column sep=normal]
& \raisebox{-1cm}{\includegraphics[scale=1.4]{IV}} \arrow[rightarrow]{r}{\displaystyle{P_1=0}} \arrow[rightarrow]{dd} {\displaystyle{T}} & \begin{tabular}{c} \includegraphics[scale=1.0]{T12} \end{tabular}  &  \\
\begin{tabular}{c} \includegraphics[trim=0cm  1cm 0 0cm,scale=1.4]{III} \end{tabular}  \arrow[rightarrow]{ru}{\displaystyle{\widetilde{a}_4=0}} \arrow[rightarrow]{rd} {\displaystyle{T}}&   &    &  \\
 &\begin{tabular}{c} \includegraphics[scale=.5]{Tstar12-12-1} \end{tabular} \arrow[rightarrow]{r} {\displaystyle{\widetilde{a}_3=0}}
 & 
  \begin{tabular}{c} \includegraphics[trim=0cm  0cm 0 0cm,scale=.6]{T12321} \end{tabular} &   \\
\begin{tabular}{c} \includegraphics[scale=1.4]{IV}\end{tabular} \arrow[rightarrow]{ru}  {\displaystyle{S}}   \arrow[rightarrow]{rd} {\displaystyle{\widetilde{a}_3=0}} &     & &  \\
& \raisebox{-.5cm}{ \scalebox{.6}{ \includegraphics{Istar0}} }\arrow[rightarrow]{r} {\displaystyle{P_2=0}} \arrow[rightarrow]{ruu}  {\displaystyle{S}} &     \begin{tabular}{c}\includegraphics[scale=.6]{T12-1-2}\end{tabular}  & 
\end{tikzcd} 
\caption{ III + IV$^{\text{s}}$,  Resolution I.  $P_1=\widetilde{a}_3^2+4\widetilde{a}_6 t$ and $P_2=\widetilde{a}_2^2 \widetilde{a}_4^2 s -4\widetilde{a}_4^3-4\widetilde{a}_2^3\widetilde{a}_6 s^2 +18\widetilde{a}_2\widetilde{a}_4\widetilde{a}_6 s -27\widetilde{a}_6^2 s$.}
\end{table}

\clearpage 
\begin{table}
\begin{tikzcd}[column sep=normal]
& \raisebox{-1cm}{\includegraphics[scale=1.4]{IV}} \arrow[rightarrow]{r}{\displaystyle{P_1=0}} \arrow[rightarrow]{dd} {\displaystyle{T}} & \begin{tabular}{c} \includegraphics[scale=1.0]{T12} \end{tabular}  &  \\
\begin{tabular}{c} \includegraphics[trim=0cm  1cm 0 0cm,scale=1.4]{III} \end{tabular}  \arrow[rightarrow]{ru}{\displaystyle{\widetilde{a}_4=0}} \arrow[rightarrow]{rd} {\displaystyle{T}}&   &    &  \\
 & \raisebox{-.5cm}{ \scalebox{.6}{ \includegraphics{Tstar12-12-1}} } \arrow[rightarrow]{r} {\displaystyle{\widetilde{a}_3=0}}
 & 
  \begin{tabular}{c} \includegraphics[trim=0cm  0cm 0 0cm,scale=.6]{Tstar123-2-1} 
  \end{tabular} &   \\
\begin{tabular}{c} \includegraphics[scale=1.4]{IV}\end{tabular} \arrow[rightarrow]{ru}  {\displaystyle{S}}   \arrow[rightarrow]{rd} {\displaystyle{\widetilde{a}_3=0}} &     & &  \\
& \raisebox{-.5cm}{ \scalebox{.6}{ \includegraphics{Istar0}} }\arrow[rightarrow]{r} {\displaystyle{P_2=0}} \arrow[rightarrow]{ruu}  {\displaystyle{S}} &     \begin{tabular}{c}\includegraphics[scale=.6]{T12-1-2}\end{tabular}  & 
\end{tikzcd} 
\caption{ III + IV$^{\text{s}}$ Resolution II. $P_1=\widetilde{a}_3^2+4\widetilde{a}_6 t$ and $P_2=\widetilde{a}_2^2 \widetilde{a}_4^2 s -4\widetilde{a}_4^3-4\widetilde{a}_2^3\widetilde{a}_6 s^2 +18\widetilde{a}_2\widetilde{a}_4\widetilde{a}_6 s -27\widetilde{a}_6^2 s$.}
\end{table}

 \newpage 
\begin{table}
\begin{tikzcd}[column sep=normal]
& \raisebox{-1cm}{\includegraphics[scale=1.4]{IV}} \arrow[rightarrow]{r}{\displaystyle{P_1=0}} \arrow[rightarrow]{dd} {\displaystyle{T}} & \begin{tabular}{c} \includegraphics[scale=1.0]{T12} \end{tabular}  &  \\
\begin{tabular}{c} \includegraphics[trim=0cm  1cm 0 0cm,scale=1.4]{III} \end{tabular}  \arrow[rightarrow]{ru}{\displaystyle{\widetilde{a}_4=0}} \arrow[rightarrow]{rd} {\displaystyle{T}}&   &    &  \\
 &\begin{tabular}{c} \includegraphics[scale=.6]{Tstar12-12-1} \end{tabular} \arrow[rightarrow]{r} {\displaystyle{\widetilde{a}_3=0}}
 & 
  \begin{tabular}{c} \includegraphics[trim=0cm  0cm 0 0cm,scale=.6]{Tstar12-12-2}\end{tabular} &   \\
\begin{tabular}{c} \includegraphics[scale=1.4]{IV}\end{tabular} \arrow[rightarrow]{ru}  {\displaystyle{S}}   \arrow[rightarrow]{rd} {\displaystyle{\widetilde{a}_3=0}} &     & &  \\
& \raisebox{-.5cm}{ \scalebox{.6}{ \includegraphics{Istar0}} }\arrow[rightarrow]{r} {\displaystyle{P_2=0}} \arrow[rightarrow]{ruu}  {\displaystyle{S}} &     \begin{tabular}{c}\includegraphics[scale=.6]{T12-1-2}\end{tabular}  & 
\end{tikzcd} 
\caption{ III + IV$^{\text{s}}$,  Resolution III.  $P_1=\widetilde{a}_3^2+4\widetilde{a}_6 t$ and $P_2=\widetilde{a}_2^2 \widetilde{a}_4^2 s -4\widetilde{a}_4^3-4\widetilde{a}_2^3\widetilde{a}_6 s^2 +18\widetilde{a}_2\widetilde{a}_4\widetilde{a}_6 s -27\widetilde{a}_6^2 s$. 
The non-Kodaira fiber in codimension-two is a contraction of a IV$^*$ and its specialization in codimension-three is a contraction of a III$^*$.}
\end{table}
\clearpage

 \newpage 
\begin{table}
\begin{tikzcd}[column sep=normal]
& \raisebox{-1cm}{\includegraphics[scale=1.4]{IV}} \arrow[rightarrow]{r}{\displaystyle{P_1=0}} \arrow[rightarrow]{dd} {\displaystyle{T}} & \begin{tabular}{c} \includegraphics[scale=1.0]{T12} \end{tabular}  &  \\
\begin{tabular}{c} \includegraphics[trim=0cm  1cm 0 0cm,scale=1.4]{III} \end{tabular}  \arrow[rightarrow]{ru}{\displaystyle{\widetilde{a}_4=0}} \arrow[rightarrow]{rd} {\displaystyle{T}}&   &    &  \\
 &\begin{tabular}{c} \includegraphics[scale=.6]{Tstar12-12-1} \end{tabular} \arrow[rightarrow]{r} {\displaystyle{\widetilde{a}_3=0}}
 & 
  \begin{tabular}{c} \includegraphics[trim=0cm  0cm 0 0cm,scale=.6]{T12321}\end{tabular} &   \\
\begin{tabular}{c} \includegraphics[scale=1.4]{IV}\end{tabular} \arrow[rightarrow]{ru}  {\displaystyle{S}}   \arrow[rightarrow]{rd} {\displaystyle{\widetilde{a}_3=0}} &     & &  \\
& \raisebox{-.5cm}{ \scalebox{.6}{ \includegraphics{Istar0}} }\arrow[rightarrow]{r} {\displaystyle{P_2=0}} \arrow[rightarrow]{ruu}  {\displaystyle{S}} &     \begin{tabular}{c}\includegraphics[scale=.6]{T12-1-2}\end{tabular}  & 
\end{tikzcd} 
\caption{ III + IV$^{\text{s}}$,  Resolution IV.  $P_1=\widetilde{a}_3^2+4\widetilde{a}_6 t$ and $P_2=\widetilde{a}_2^2 \widetilde{a}_4^2 s -4\widetilde{a}_4^3-4\widetilde{a}_2^3\widetilde{a}_6 s^2 +18\widetilde{a}_2\widetilde{a}_4\widetilde{a}_6 s -27\widetilde{a}_6^2 s$.  
 The non-Kodaira fiber in codimension-two is a contraction of a IV$^*$ and its specialization in codimension-three is a contraction of a III$^*$.
}
\end{table}
\clearpage

\bibliography{mboyoBib}

\end{document}